\tikzstyle{nicebox}=[draw=black!100, fill=white!10, rectangle, inner sep=4pt, inner ysep=16pt]
\tikzstyle{niceboxtitle}=[draw=black!100, fill=white, text=black, rectangle]
\newcommand{\tb}[1]{\textcolor{black}{#1}}
\newcommand{\ac}[1]{\textcolor{red}{add citation}}
\journal{Journal of the Mechanics and Physics of Solids}
\pgfplotsset{compat=1.7}
\renewcommand{\cite}{\citet}
\renewcommand{\d}{\,\mathrm{d}}
\newcommand{\dd}{\overset{\mathrm{law}}{=}}
\newcommand{\p}{\mathbb{P}}
\newcommand{\E}{\mathbb{E}}    
\newcommand{\R}{\mathbb{R}}    
\newcommand{\N}{\mathbb{N}}    
\theoremstyle{plain}
\newtheorem{theorem}{Theorem}
\newtheorem{corollary}[theorem]{Corollary}
\newtheorem{lemma}[theorem]{Lemma}
\newtheorem{proposition}[theorem]{Proposition}
\theoremstyle{definition}
\newtheorem{conjecture}{Conjecture}
\theoremstyle{remark}
\newtheorem{remark}{Remark}
\newcommand{\ee}{\varepsilon}
\newcommand{\n}[1]{\left\lVert#1\right\rVert}
\newcommand{\bxi}{{\boldsymbol{\xi}}}
\newcommand{\bP}{\mathbb{P}}
\newcommand{\bS}{\mathbb{S}}
\newcommand{\bx}{\mathbf{x}}
\newcommand{\bl}{c_2}
\newcommand{\tl}{\widetilde{\kappa}}
\newcommand{\tnu}{\widetilde{\nu}}
\def\d{\mathrm{d}}
\newcommand{\bone}{ {\mathbbm{1}} }
\renewcommand{\S}{\mathbb{S}}
\newcommand{\z}{\mathbf{0}}
\renewcommand{\le}{\leqslant}
\renewcommand{\geq}{\geqslant}
\renewcommand{\leq}{\leqslant}
\renewcommand{\epsilon}{\varepsilon}
\newcommand{\revision}[1]{\tb{#1}}
\newcommand{\resubfigs}{resubfigs}
\begin{document}
\begin{frontmatter}
\title{Modeling Shortest Paths in Polymeric Networks Using Spatial Branching Processes}
\author[1]{Zhenyuan Zhang \fnref{label1}}
\author[2]{Shaswat Mohanty\fnref{label1}}

\author[3]{Jose Blanchet\corref{cor1}}
\author[2]{Wei Cai\corref{cor1}}
\cortext[cor1]{Corresponding author}
\address[1]{Department of Mathematics, Stanford University, CA 94305-4040, USA}
\address[2]{Department of Mechanical Engineering, Stanford University, CA 94305-4040, USA}
\address[3]{Department of Management Science and Engineering, Stanford University, CA 94305-4040, USA}
\fntext[label1]{Equal contribution}

\begin{abstract}
Recent studies have established a connection between the macroscopic mechanical response of polymeric materials and the statistics of the shortest path (SP) length between distant nodes in the polymer network. Since these statistics can be costly to compute and difficult to study theoretically, we introduce a branching random walk (BRW) model to describe the SP statistics from the coarse-grained molecular dynamics (CGMD) simulations of polymer networks. We postulate that the first passage time (FPT) of the BRW to a given termination site can be used to approximate the statistics of the SP between distant nodes in the polymer network. 
We develop a theoretical framework for studying the FPT of spatial branching processes and obtain an analytical expression for estimating the FPT distribution as a function of the cross-link density. 
We demonstrate by extensive numerical calculations that the distribution of the FPT of the BRW model agrees well with the SP distribution from the CGMD simulations. The theoretical estimate and the corresponding numerical implementations of BRW provide an efficient way of approximating the SP distribution in a polymer network.  
Our results have the physical meaning that by accounting for the realistic topology of polymer networks, extensive bond-breaking is expected to occur at a much smaller stretch than that expected from idealized models assuming periodic network structures.
Our work presents the first analysis of polymer networks as a BRW and sets the framework for developing a generalizable spatial branching model for studying the macroscopic evolution of polymeric systems. 
\end{abstract}

\begin{keyword}
Branching Brownian Motion \sep
Branching Random Walk \sep 
Coarse-Grained Molecular Dynamics \sep
First Passage Time \sep
Polymer Network \sep 
Shortest Path Statistics \sep 
8-chain model
\end{keyword}

\end{frontmatter}


\section{Introduction} 
\label{sec:Intro}

Polymers play a vital role in technology and products of daily use due to their wide-ranging utility. 
An important class of polymers is elastomers~\citep{carrillo2005nanoindentation,erman1989rubber,beatty1987topics}, which are rubber-like materials that exhibit an elastic response even under high strains. 
The elastic and hyperelastic behavior of these materials has been modeled across various length scales from the continuum~\citep{schiel2016finite,heydari20163d} down to the microstructural scale \revision{by using coarse-grained molecular dynamics (CGMD)}~\citep{arruda1993three,kremer1990dynamics,volgin2018coarse,shi2023coarse,shen2021molecular,joshi2021review,rottach2006permanent}. However, their inelastic response due to the bond-breaking events is not well understood.
Recently, we have shown that network analysis can be applied to \revision{CGMD} models to explain the experimentally observed stress-strain hysteresis~\citep{ducrot2014toughening} in elastomers by identifying the shortest path length between distant \revision{nodes} as the governing microstructural parameter~\citep{yin2020topological}. \revision{In particular, the experimentally observed stress-strain hysteresis of tough multi-network elastomers is well captured by CGMD simulations.  Furthermore, the evolution of the shortest path length between distant nodes with strain explains not only the stress-strain hysteresis but also the anisotropic nature of the damage caused by the strain-induced bond-breaking events~\citep{yin2020topological}.}

The CGMD model comprises a large number of beads (to represent the polymer chains and cross-linking ligands), and the preparation of these systems until they reach equilibrium is a time-consuming process. Given that the macroscopic response is shown to strongly depend on the network statistics of the shortest paths, there is an incentive to replace expensive CGMD simulations with a probabilistic model that represents the polymer network. Representing polymer networks as random walks and evaluating their response under load has a long tradition in polymer theory \revision{since}~\citep{flory1960elasticity}. \revision{More recent statistical analyses of the polymer network include~\citep{svaneborg2005disorder,lang2003length}, in which the authors analyzed the properties of subchains (defined as a section of the polymer chain between neighboring cross-links) in randomly cross-linked polymer networks, and showed that the lengths of subchains follow an exponential distribution. The work of \citep{wu2012langevin} discussed the effect of the choice of the interatomic potential on the network topology, quantified by the mean squared end-to-end distance of the subchains.}
However, the focus of these studies has been on the conformation of individual polymer \revision{subchains}, \revision{which is a very local feature of the network}, and not so much on the statistics involving the \revision{global} network topology. \revision{In our analysis, the distribution of shortest paths is computed between far away nodes to quantify the global network topology}. 

In this paper, we discuss the statistical properties of the shortest path length (SPL, abbreviated as SP) distribution between far-away nodes with polymer networks modeled as branching random walks (BRWs). The SP in the polymer network corresponds to the first passage time (FPT) of the BRW, i.e., the time taken for the BRW to reach a prescribed sink or halt criterion. Roughly speaking, in a BRW we start from one particle at the origin and each existing particle independently performs a random walk until it (randomly) branches into more particles,\footnote{The BRW describes the positions of  \emph{particles} evolving in time. We do not intend this evolution to represent the time evolution of the CGMD network. Following the tradition in polymer theory, we represent typical polymer chains as random walks, but we explicitly model the cross-link topology by the branching mechanism.} 
according to a certain branching rate. Here, the branching rate in the BRW corresponds to the cross-link density in the polymeric system, and the random walk paths correspond to the polymer chains. 
We also establish theoretical predictions of the FPT of the BRW. 
%
We show that the SP distributions predicted from the CGMD simulations are consistent with the numerical implementation of the BRW model, as well as with theoretical BRW estimates.
In particular, our theoretical results on the FPT of the BRW provide explicit formulas that predict the mean SP from the CGMD network given the cross-link density. 
In addition, we show that in the long-distance limit, the fluctuation of the FPT is asymptotically much lower than its expected value, a property validated by CGMD simulations.

Analyzing the extremal behavior of spatial branching processes is an active area of recent research in probability theory.
Such processes include BRW and its continuous-time sibling, the branching Brownian motion (BBM), as well as various extensions.
The literature has focused mostly on the asymptotics of the maximum and the limit behavior near the maximum; see the works \citep{addario2009minima,aidekon2013convergence,bramson2016convergence,madaule2017convergence} for the case of one-dimensional BRW and \citep{bramson1978maximal,lalley1987conditional,roberts2013simple,aidekon2013branching,arguin2013extremal} for one-dimensional BBM, among others. 
Here we are interested in branching structures in higher (e.g.~3) dimensions (that represent real physical systems), where we mention the recent studies of \citep{mallein2015maximal,berestycki2021extremal,kim2023maximum,bezborodov2023maximal} on the location of the maximum norm. Nevertheless, we are not aware of the results regarding the FPT of spatial branching structures in general dimensions.\footnote{By continuity of the trajectories, it is not hard to show that the FPT in dimension one is precisely the inversion of the maximum.} Our work provides a necessary first step towards understanding the limit behavior of the FPT beyond dimension one. 

Our theoretical results on the FPT of spatial branching processes provide explicit formulas that predict the mean SP from the CGMD network given the cross-link density. The mean SP of the polymeric system denotes how stretched the average load-bearing polymer chain in the system is, thereby serving as an important microstructural parameter that can describe the macroscopic response of the material. 
The mean SP determines the maximum stretch that can be applied to the polymer before significant bond-breaking events occur, thus providing a measure of the stretchability before appreciable strain-induced damage.
Our theoretical estimates from the spatial branching processes show that the stretchability of polymer before bond breaking is much smaller than estimates based on idealized, periodically repeating, network topologies, such as in the 8-chain model~\citep{arruda1993three}.
Our findings demonstrate that spatial branching processes are remarkably successful in capturing the SP statistics of polymer networks and are highly promising in revealing the microstructural origin of the mechanical properties of polymeric materials. \revision{The goal of this paper is to establish the SP statistics of randomly cross-linked polymer networks at equilibrium (i.e.~before strain-induced bond-breaking occurs) by considering spatial branching processes. The evolution of SP statistics due to strain-induced bond-breaking events will be the focus of a later study.}

The rest of the paper is structured as follows. In Section~\ref{sec:num_methods}, we discuss the CGMD model that is used to obtain the equilibrated network in which the reference SP calculations are carried out. In addition, we present a high-level description of our numerical BRW model and some of its extensions. In Section~\ref{sec:results}, we numerically analyze the FPT of our spatial branching models and present the consistency with the SP distribution of the CGMD network. In Section \ref{sec: analytic}, we present analytic expressions for the FPT of the BRW models.
Section \ref{sec:conclusion} concludes the article with discussions on several future research directions.

\section{Numerical methods} \label{sec:num_methods}
 
This section provides
a high-level overview of the connections between the CGMD model of polymers and BRW processes. In particular, we compare the basic features of the CGMD and their analogs in the BRW model, which motivate extensions of the classical BRW model. We also discuss an efficient numerical algorithm for computing the FPT of the BRW and its extensions. 

\subsection{Coarse-grained molecular dynamics}
We use the bead-spring (Kremer-Grest) model~\citep{kremer1990dynamics} for a CGMD representation of the polymeric system. The simulations were carried out using LAMMPS~\citep{LAMMPS} where the initial configuration of $N_c = 500$ chains comprising $l_c = 500$ beads each (250,000 beads in total) is generated as a self-avoiding random walk. The simulation cell is a cube of length $\approx 98.2$\,nm and is subjected to periodic boundary conditions in all directions. 
The non-bonded pair interactions between beads are modeled by a Lennard-Jones (LJ) potential, $U_{\rm LJ}$, with
\begin{equation*}
    U_{\rm LJ}(r) = 4\epsilon\left[ \left(\frac{\sigma}{r}\right)^{12}-\left(\frac{\sigma}{r}\right)^6\right],
\end{equation*}
where $\sigma=15$\,\AA ~is the size of the bead, $\epsilon=2.5$ kJ/mol is depth of the energy well and $r$ is the inter-bead distance. All distances in this paper will be discussed in terms of $\sigma$. As a result, the non-dimensional bead size will be $\sigma=1$.
The neighboring interactions between beads on the polymer backbone are modeled using the finite extensible (FENE) bond potential~\citep{kremer1990dynamics}, $U_{\rm FENE}$, allowing for the incorporation of the non-linear elastic response of the polymer backbone, with
\begin{equation*}
    U_{\rm FENE}(r) = -k\left(\frac{R_0^2}{2}\right)\ln\left[1-\left(\frac{r}{R_0}\right)^2\right],
\end{equation*}
where $k=30\,\epsilon/\sigma^2$ is the bond stiffness, and $R_0=1.5\,\sigma$ is the extensibility limit.
%
The initial configuration is equilibrated using the two-step procedure~\citep{sliozberg2012fast} to obtain the equilibrated \textit{baseline configuration} (i.e.~a polymer melt), which is then modified to incorporate cross-links. 

Once the {baseline configuration} is well equilibrated, the candidate bonding beads are identified by choosing distinct bead pairs that lie within a cutoff distance of $r_c<1.15\,\sigma$. We then randomly choose from the set of candidate bead pairs and assign an irreversible cross-link between the chosen pairs.
To model the irreversible cross-links, we use a quartic bond potential $U_{\rm Q}$~\citep{ge2013molecular}, defined by
\begin{equation*}
    U_{\rm Q}(r) = 
    \begin{cases}
    K(r-R_c)^2(r-R_c-B_1)(r-R_c-B_2)+U_0 &\text{  for }r\leq R_c,\\
    0 &\text{  for }r>R_c.
    \end{cases}
\end{equation*}
Here $K=1200\,\epsilon/\sigma^4$ is the bond stiffness, $B_1=-0.55\,\sigma$, $B_2=0.95\,\sigma$, $U_0=34.6878\,\epsilon$, and $R_c=1.3\,\sigma$~is the cutoff distance beyond which the quartic bond is considered broken (and cannot be formed again).  This procedure is identical to the preparation of a single-network (SN) elastomer model in~\citep{yin2020topological}.
\revision{Having $R_c$ greater than $\sigma$ allows for the enthalpic stretching of the cross-link. However, we found that in our CGMD simulations, the enthalpic stretching of shortest paths before breaking is not significant (less than 10\%).}
 
\subsection{Network analysis}
We describe the polymer network from the CGMD simulation cell by \revision{focusing on} the cross-linking beads in the system. Cross-linking beads that are connected to each other along the backbone of the polymer chain are denoted by a graph edge with a weight equal to the number of bonds between them. Furthermore, we also define an edge between the pair of cross-linking nodes that are involved in forming a cross-link, with an edge weight of 1. Further details regarding the network representation and SP calculation can be found in~\citep{yin2020topological}.
For each node $i$, we find a destination node $j$ that is closest to the point offset from node $i$ by $q_x$ in the $x$-direction and compute the shortest path length (SP) connecting nodes $i$ and $j$. \revision{To ensure that SP measures a non-local feature of the network, $q_x$ is chosen to be large enough such the SP is computed between distant cross-link beads.} The non-local microstructural measure of the SP has the advantage of being independent of chain conformational fluctuations and they only depend on the network connectivity, as a result evolving only as a function of the evolution of the polymer connectivity network. 
We compute the SPs starting from all nodes in the polymer network using  Dijkstra's algorithm~\citep{dijkstra2022note}, giving rise to a distribution. We are interested in how this distribution depends on the choice of distance $q_x$, where $0 < q_x \le L_x$ and $L_x$ is the periodic length of the simulation cell in the $x$-direction.
Note that when $q_x = L_x$, node $j$ is the periodic image of node $i$ and the simulation box is replicated in the $x$-direction for the SP calculation, as shown in Figure~\ref{subfig:qx_schematic}.
\begin{figure}[ht!]
    \centering  
    \subfigure[]{\includegraphics[trim={0cm 7.5cm 0cm 7.5cm},clip,width=0.56\textwidth]{\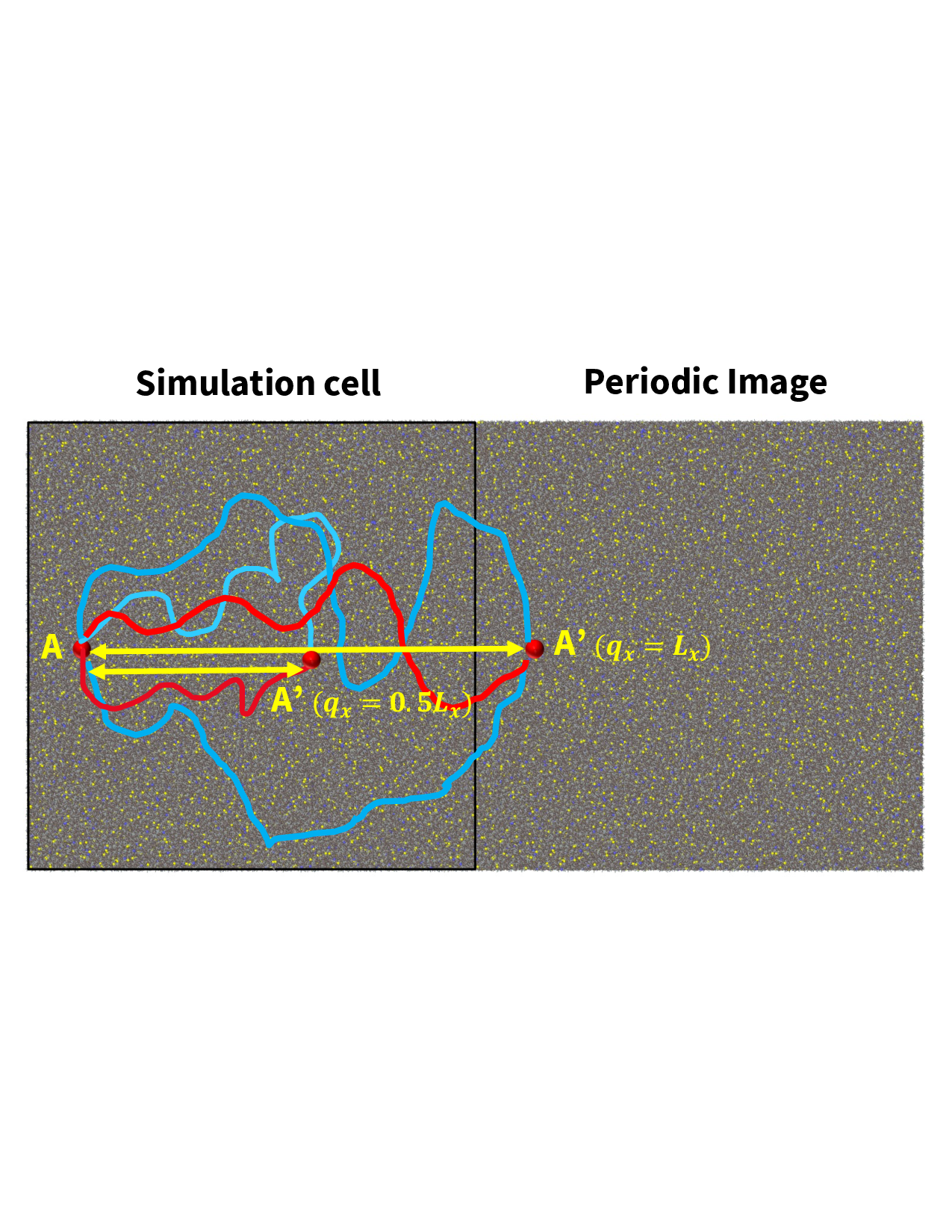}\label{subfig:qx_schematic}}
    \subfigure[]{\includegraphics[width=0.41\textwidth]{\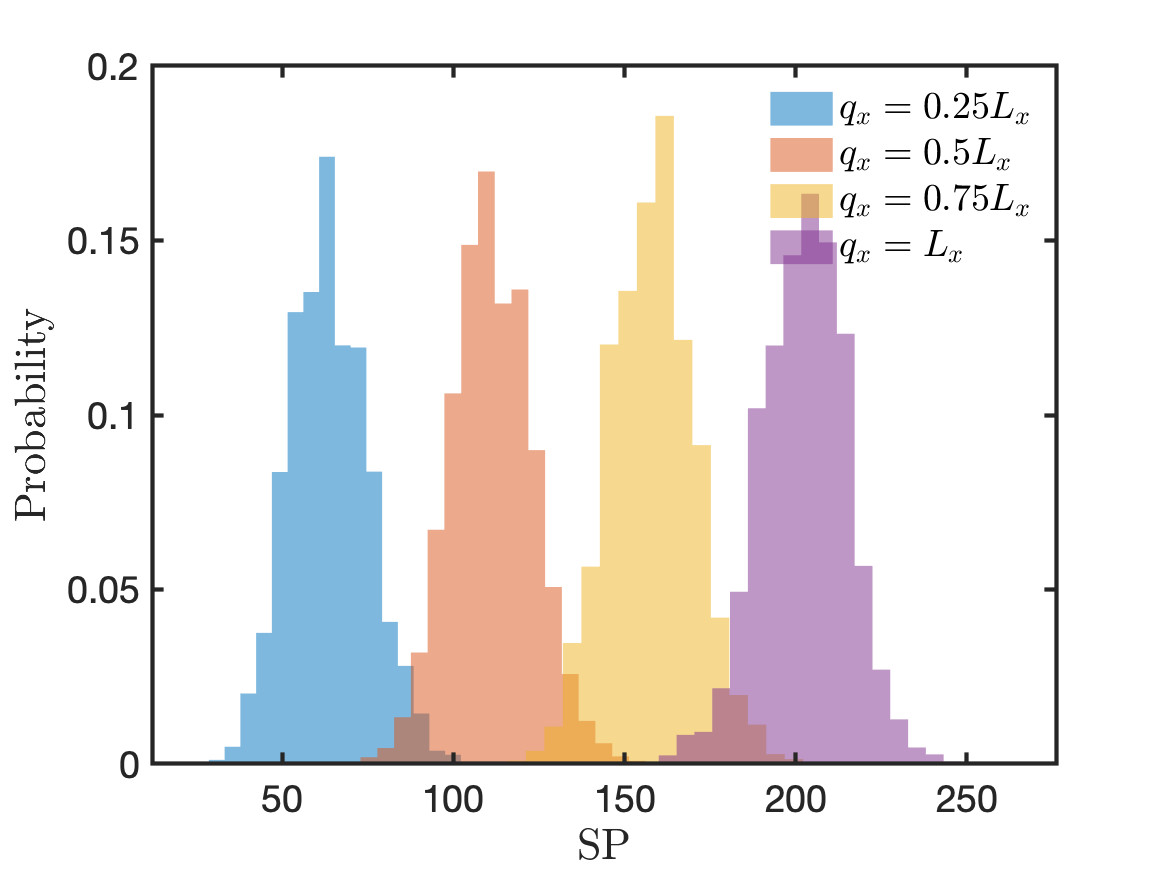}\label{subfig:cgmd_hist}}
    \caption{ (a) The SP between $A$ and the destination node $A'$ (shown in red) separated by different distances $q_x$ in the CGMD simulation cell (the longer paths are shown in blue).\protect\footnotemark ~(b) The SP distributions for all nodes at different values of $q_x$, where $L_x = 65.5\,\sigma$.}
    \label{fig:cgmd_ref}
\end{figure}
\footnotetext{Note that if the offset distance $q_x$ is equal to the simulation cell size $L_x$, we are guaranteed to find the destination node at the exact distance since it is the periodic image of the source node. However, for any other value of $q_x$, we search for the nearest node as the destination node.}

As an example, Figure~\ref{subfig:cgmd_hist} plots the SP histogram from the CGMD simulation cell at offset distances of $q_x = 0.25L_x,\,0.5L_x,\,0.75L_x,\,L_x$, respectively.
The SP distribution appears to have a mean that increases linearly with $q_x$, and a standard deviation that is insensitive to $q_x$.
There does not exist any theory or analysis in the literature that explains this behavior. 
Furthermore, brute-force computation of the SP distribution is time-consuming, requiring lengthy equilibration of the CGMD network followed by SP calculation for all network nodes.
These limitations provide the motivation to obtain theoretical estimates of the SP statistics based on analytically tractable models.

\subsection{Branching random walk}
\label{sec:brw}

In the classical setting, a (discrete-time) branching random walk (BRW) in $\R^d$ describes a stochastic process indexed by $n=0,1,2,\dots$, where starting from a single particle at $\z\in\R^d$ at time $n=0$, each particle performs a standard random walk (e.g., with independent increments that are uniformly distributed on the sphere $\S^{d-1}=\{\bx\in\R^d\mid\n{\bx}=1\}$),\footnote{Throughout this paper, we use the Euclidean ($\ell_2$) norm.} but randomly reproduces particles at each time step,  independently from each other and from their common ancestor(s). The trajectories of the particles form a (possibly infinite) tree in $\R^d$. We will be mostly interested in the physical space of dimension $d=3$.

At first sight, there are a few reasons that motivate why the BRW may serve as a good proxy for the SP statistics of polymer networks:  
\begin{itemize}
    \item It is well known that the equilibrium configuration of single chains in a polymer melt can be well described by a random walk~\citep{doi1988theory,flory1960elasticity,graessley2003polymeric}.  In our CGMD model, the cross-links are added into well-equilibrated polymer melt, and hence we expect the path obtained by traversing along the backbone of the polymer network to be also well described by a random walk.
    While a cross-link in the polymer network joins two polymer chains together, this can be modeled by extending the random walk to allow a branching process where one walker becomes three walkers.
    
    \item The BRW has a mathematically rich and relatively tractable structure. The extremal behavior of the BRW is a well-studied subject in probability theory. For example, in dimension $d=1$, a precise asymptotic of the largest displacement of the BRW particles is well known~\citep{addario2009minima}, and the limit behavior near the frontier has been fully characterized~\citep{aidekon2013branching,madaule2017convergence}. 
    \item The BRW allows enough freedom in the choice of parameters, such as the random walk step-size distribution and the branching rate. These parameters may be chosen appropriately so that the statistics of the FPT in the BRW model agree well with that of the SP in the CGMD model of the polymer network.

\end{itemize}

The BRW provides an alternative point of view to the polymer network. \revision{Although the entire collection of paths (forming a tree) generated by BRW does not represent the polymer network as a whole, we will show that the FPT of the BRW serves as an accurate representation of the SP between distant nodes in the polymer network.} As shown in Figure~\ref{fig:schematic2}(a),  the traversal along the backbone of the polymer (starting from node $A$) is the equivalent of a random walk path and the point of cross-linking can be thought of as a branching event, where the path splits into \emph{three} paths: (i) continuation of the polymer backbone ($A\rightarrow A''$), and (ii) two parts of another polymer chain ($B\rightarrow B''$). 
\begin{figure}[ht!]
    \centering
    \includegraphics[trim={0.7cm 10.8cm 0.7cm 10.8cm},clip,width=0.7\textwidth]{\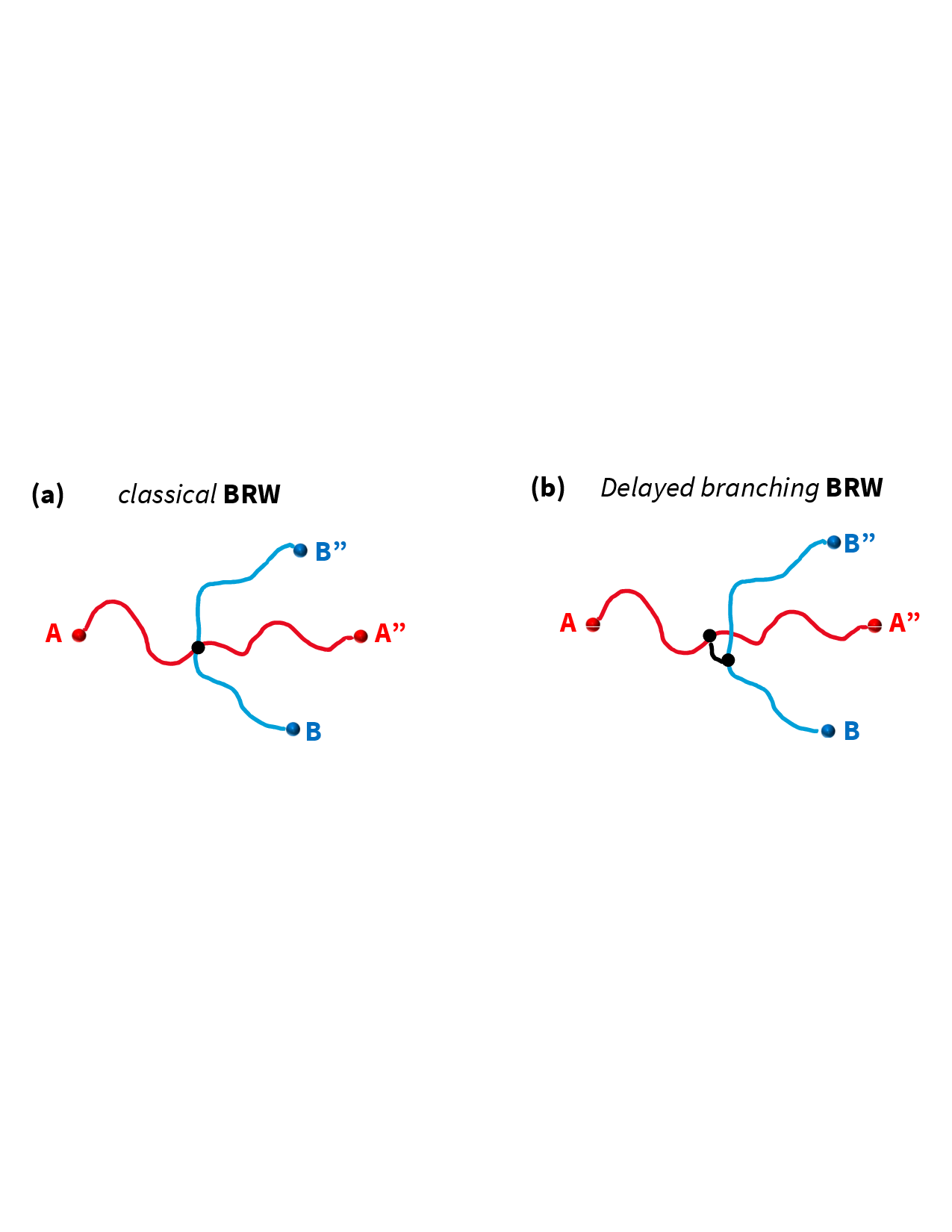}
    \caption{BRW tree representations in the (a) classical and the (b) delayed branching regimes.}
    \label{fig:schematic2}
\end{figure}
We would like to point out that what the branching random walk model aims to mimic is the part of the polymer network that emanates from a single node, rather than the entire network.
Cross-links in polymers typically bind monomers (on chains) that are physically close to each other, 
and the occurrence of the cross-linking events in space can often be described by a Poisson process.
In the BRW model, we incorporate the cross-linking effect by allowing an effective branching rate of $\tl$. 
However, in certain polymer systems, cross-links can be controlled to occur only at specific sites, e.g.~those evenly distributed along the polymer backbone~\citep{cooper2020multivalent}. This could be modeled using deterministic branching events, that occur once every $1/\tl$ steps. In this paper, we restrict our discussion to the polymer systems where the cross-linking follows a Poisson process, where $\tl$ is obtained from the CGMD simulation cell (see below). 

\paragraph{BRW algorithm}
The BRW is initialized at the origin as a single path with the branching probability $\tl$ at every time step into three paths.
Every subsequent branched child has the same branching likelihood at every time step.
The jumps are of a fixed constant length $\sigma$, corresponding to the bead size in the CGMD simulation cell.
Similar to the SP calculations in the CGMD model, we terminate the BRW (and record FPT) as soon as any of the random walkers hits within a sphere of radius $R_c = \sigma$ centered at $(q_x,0,0)$. 
Given that each step corresponds to a unit of time, we can calculate the FPT (equivalent to the SP) for a BRW. Here, the radius for the termination region specifies the maximum deviation from the destination at an offset distance of $q_x$ from the origin, similar to the SP analysis in the CGMD model.\footnote{The quality of our models are not sensitive to the choice of radius $R_c$; any value within the same order of magnitude would give qualitatively the same result.}

\paragraph{Branching rate estimation}
In the case of the polymer network modeled in the CGMD simulation, 
the cross-link formation between proximal inter-polymer sites mimics a Poisson point process. We can track all the inter-cross-link distances along the polymer backbone, denoted by the variable $x_b$. The survival function ($1-$CDF$(x_b)$) of the histogram of $x_b$ approximates the exponential decay $x\mapsto \exp(-\kappa_a x)$. Here, the exponent $\kappa_a$ is the arrival rate for a Poisson process. In our estimation of the branching rate $\tl = \kappa_a$, we will obtain the $\kappa_a$ from the cotangent of the logarithm of the survival function of the $x_b$ distribution, given by $1 - $CDF($x_b$). 

\paragraph{Numerical implementation} 
Tracking all the branched paths (children) of the BRW tree is memory intensive. To circumvent this difficulty, in the numerical representation of the BRW model, we purge the tracks that are farthest away from the termination point.  In our implementation, we fix a large number $p_c=9000$ and include a \textit{path purging} step, where once the number $n_p$ of tracked paths exceeds $p_c$ we trim down the number of paths to $p_c/3$. 
Our numerical results (in \ref{sec:validation}) support the expectation that path purging does not noticeably affect FPT calculations. \revision{Intuitively, this makes sense because a fast particle is likely to maintain its speed at all times}. 


\subsection{Extensions of the classical branching random walk model}\label{sec:newbrw}

A closer examination of the classical BRW model reveals the following features of the CGMD model of the polymer network that have not yet been taken into account.
\begin{itemize}
    \item The length of the cross-link is neglected. For the classical BRW model, the cross-links are assumed to have zero length and are not taken into account when computing the SP statistics. On the other hand, each cross-link in the CGMD model has a non-zero length and is counted towards the SP.
    \item The polymer chains (before cross-linking) have a finite length of $l_c = 500$ beads each in the CGMD model, and hence we would expect a termination rate for the corresponding BRW model for consistency.
    \item There is a nontrivial correlation between neighboring link vectors on a polymer chain in the CGMD network. The classical BRW model fails to incorporate such correlation effects since its increments are independent.
\end{itemize}

For these reasons, we introduce the following important extensions of the classical BRW model that more accurately represent the network in our CGMD configurations. 

\paragraph{Delayed branching regime} We introduce a two-step branching regime. 
Instead of each random walker immediately branching into three walkers, each branching event now consists of two sub-events, as shown in Figure \ref{fig:schematic2}(b).
First, the walker branches into two walkers; the trajectory of the first descendant corresponds to the original backbone towards $A''$ and the next step of the second descendant corresponds to the cross-link.
Then, after one step, the second descendant branches into two walkers again; their trajectories correspond to the backbone of the chain $B\rightarrow B''$.
The delayed branching regime is analytically tractable when we consider the FPT, as explained in Section \ref{sec:analytic:BRW}.

\paragraph{Termination regime} We introduce a termination rate that controls the length of the paths. Observe that in our CGMD model, when tracing through the cross-link towards a new polymer chain, the total lengths of the two directions along $B\rightarrow B''$ sum up to $l_c = 500$. 
Therefore, to be fully consistent with our CGMD model, one expects that the lengths of both branches are uniformly distributed among possible choices (i.e., pairs of non-negative integers that sum up to $l_c$), which is technically difficult for both theoretical analysis and numerical implementation of our BRW model. Instead, we propose a (random) termination criterion 
that leads to similar chain lengths on average and is both analytically tractable and memory-efficient. Since each branching event corresponds to cross-linking to a new polymer chain of length $l_c$ represented by two new paths, each path on average takes a length of $l_c/2$. Therefore, the (random) termination rate can be approximated as $\tnu=2/l_c$.
At each time step, each existent path terminates independently with probability $\widetilde{\nu}$, 
in the sense that it stops performing random walks and no new path will be produced from it.
Here, the termination event and branching event are exclusive, that is, we must have $\tl+\tnu\leq 1$.
Since the length of a polymer chain of $l_c = 500$ is much larger than the possible offset distance $q_x$ for the shortest path analysis considered here, this termination effect will be almost negligible for large branching rates.

We name the resulting integrated BRW model (with the delayed branching regime and termination) the $(\tl,\tnu)$-BRW, where $\tl$ refers to the branching rate and $\tnu$ is the termination rate. 
Our theoretical analysis of the $(\tl,\tnu)$-BRW model is applicable to any termination rate $\tnu$.
We remark that with a positive termination rate, there is a nontrivial extinction probability for the spatial branching process. For this reason, in our analysis, we will always condition on the \emph{survival event}, that there exist alive particles at all times.

\paragraph{Correlated jumps and mean-squared internal distance} 
So far, the BRW model we have discussed incorporates the jumps (i.e., link vectors between nodes) as independent and identically distributed (i.i.d.) random variables. In practice, one may expect correlations between the jumps due to the volume exclusion of the nodes in a CGMD network. To this end, we introduce the \emph{branching correlated random walk} (BCRW) as a generalization of the BRW, 
where the law of a jump vector depends on the jump vector at the previous step.
In the case of a branching event, for simplicity, we assume that the first step of the children depends on the jump vector of the parent walker at the previous step.\footnote{This is a simplification of the cross-linking event where the orientation of the cross-link and subsequent polymer chain (children) are not expected to be highly correlated with the original polymer chain (parent). Such an approximation affects very little the results.}

To define precisely the BCRW model we need to introduce the concept of \emph{mean-squared internal distance} (MSID). To this end, we summarize some of the key results that are well-known in polymer physics~\citep{rubinstein2003polymer}. Let us consider a polymer chain of $N$ links connecting a series of nodes whose positions are specified by the vectors $\mathbf{R}_i,~i=0,\dots,N$.\footnote{Typically, in this paper, a bold symbol refers to a vector.} An idealized model using classical BRW to represent the CGMD  indicates that the link vectors $\mathbf{r}_i=\mathbf{R}_i-\mathbf{R}_{i-1}$ are i.i.d., resulting in $\E[\mathbf{r}_i \cdot \mathbf{r}_{i+1}]=0$. Recalling that each jump length $\n{\mathbf{r}_i}=\sigma$, we have in this idealized case that $\E[\n{\mathbf{R}_N-\mathbf{R}_0}^2]=N\sigma^2$. However, we can introduce a measure of correlation, $\alpha$, between consecutive links such that the average angle $\theta$ between consecutive links is given by $\alpha = \cos \theta$. This results in $\E[\mathbf{r}_i \cdot \mathbf{r}_{i+1}]=\alpha\,\sigma^2$. In general, it can be shown that $\E[\mathbf{r}_i \cdot \mathbf{r}_{j}]=\alpha^{\vert i-j \vert}\,\sigma^2$. As a result, in the case of correlated jumps, we have $\E[\n{\mathbf{R}_N-\mathbf{R}_0}^2]=\sum_{j=-\revision{(N-1)}}^{N-1} \alpha^{|j|}\,\sigma^2 ,$ which in the limit $N\rightarrow\infty$ simplifies to 
$\E[\n{\mathbf{R}_N-\mathbf{R}_0}^2]
= C_{\infty}\,N,$
where $C_\infty = (1+\alpha)/(1-\alpha)$ is called the \emph{characteristic ratio}.
The mean-squared internal distance can now be defined as
\begin{equation}
{\rm MSID}(n) = \frac{\E[\n{\mathbf{R}_n-\mathbf{R}_0}^2]}{n}\,.
\end{equation}
We have $\lim_{n\to\infty} {\rm MSID}(n) = C_{\infty}$.
Figure~\ref{fig:msid} (red line) shows how the MSID of a correlated random walk approaches this limit with increasing $n$.

\begin{figure}
    \centering
    \includegraphics[width=0.41\textwidth]{\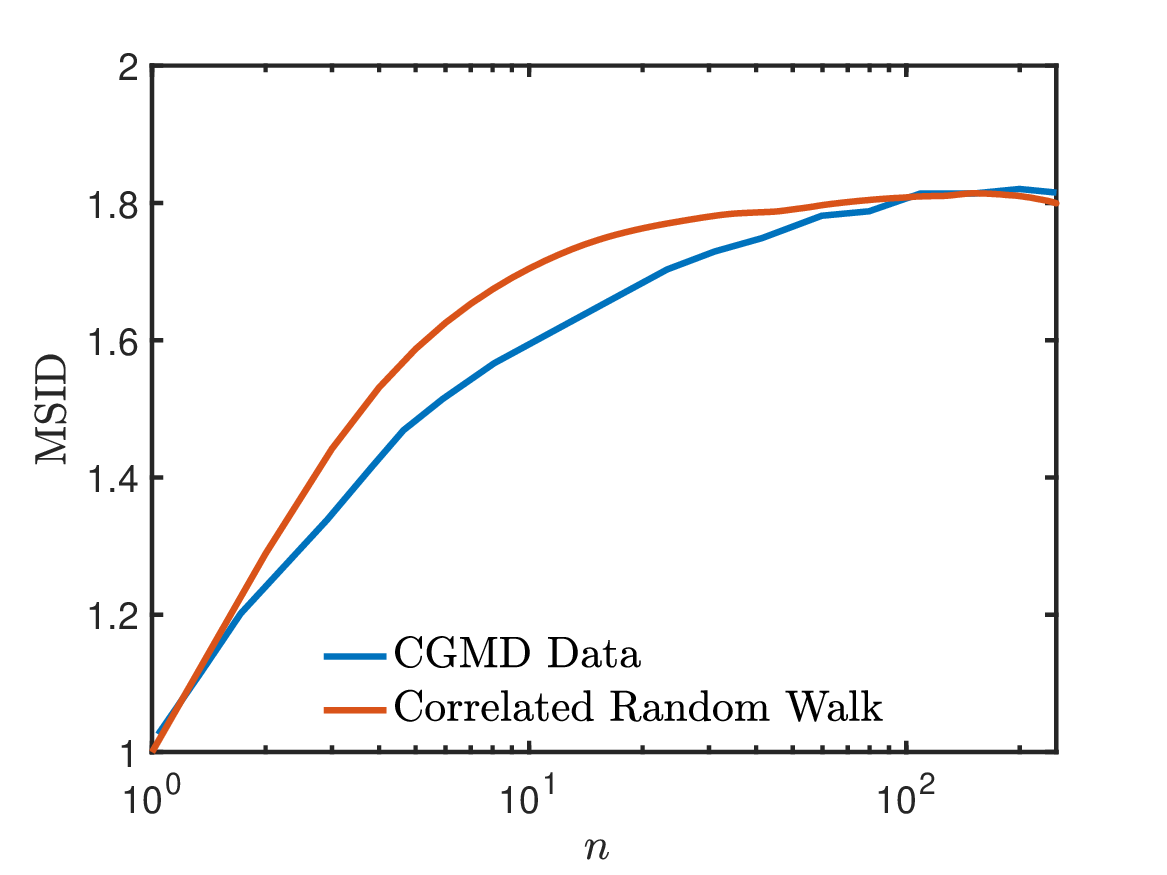}
    \caption{The MSID (in units of $\sigma^2$) computed from the CGMD model compared against the MSID of an ensemble of single chains generated using a 
    correlated random walk. }
    \label{fig:msid}
\end{figure}

We now specify the BCRW model adopted in this work. Given $\{\mathbf{r}_k\}_{1\leq k\leq i-1}$,  the link vector $\mathbf{r}_i$ is 
specified by the following expressions
\begin{align}
    \mathbf{j}_i&=\beta \, \mathbf{r}_{i-1} + \sqrt{1-\beta^2} \, \boldsymbol{\delta}_i, \ 
    \mathbf{r}_i= \frac{\mathbf{j}_i}{\n{ \mathbf{j}_i}},\label{eq:markov}
\end{align}
where $\beta\in(0,1)$ and  $\{\boldsymbol{\delta}_i\}_{1\leq i\leq N}$ are i.i.d.~samples from the unit sphere $\S^2$.
In particular, the sequence of jump vectors $\mathbf{r}_i$ 
is Markovian and
as a consequence, there is a computable constant $\alpha=\alpha(\beta)$ such that~\citep{auhl2003equilibration}
$$\lim_{n\to \infty} {\rm MSID}(n) = \frac{\sigma^2(1+\alpha)}{1-\alpha}.$$
The MSID statistics in the CGMD model (see Figure~\ref{fig:msid} blue line) can be used to determine the constant $\beta$ (by inverting $\alpha(\beta)$) in the correlated random walk model. 
The MSID obtained from the CGMD network asymptotes to $C_\infty\approx 1.83 \, \sigma^2$.
We pick $\beta\approx 0.4$ in \eqref{eq:markov} so that the MSID obtained from the numerical random walk approached $C_\infty$ in the large $n$ limit (see Figure \ref{fig:msid}). 

\revision{We note that even after accounting for the correlation between consecutive jumps, the description of a single polymer chain as a correlated random walk is still an idealization that ignores the volume-exclusion interactions between non-consecutive monomers on the same chain.  Such volume-exclusion effects can lead to swelling of the polymer chain in a dilute polymer solution (e.g.~a gel), resulting in a deviation of the chain conformation from (correlated) random walk statistics. However, in a polymer melt (which is in the concentrated solution regime), the self-repulsion effect of the polymer chains is canceled by the mutual competition of many polymer molecules for the same space.  As a result, the chain statistics in a polymer melt is well described by the (correlated) random walk model~\citep{flory1960elasticity}.}
\revision{In this work, the cross-links are randomly distributed in space, so that the cross-linking density has heterogeneity at the microscopic scale.  However, the effects considered in~\citep{bastide1990enhancement,bastide1990scattering} related to local swelling caused by the microscopic heterogeneity are unlikely to arise here because we have a polymer melt instead of a gel. We do not consider large-scale variations of cross-linking density that may result from the detailed dynamics of the cross-linking process~\citep{vilgis1992rubber}.}

Another approach besides introducing correlation between jumps is to modify the jump distance $\sigma$ in the BRW model so that its MSID (which is constant in $n$) matches that of the CGMD model at a certain value of $n$.  A reasonable choice is $n = 1 / \tl$, so that the average distance between consecutive branching points along a path in the BRW model matches the average distance between neighboring cross-linking nodes on the same chain in the CGMD model.
We use the \emph{scaled $(\tl,\tnu)$-BRW} to represent the BRW model whose jump distance is scaled to match the MSID at $n = 1/\tl$.
\revision{It has been well recognized that, with proper rescaling, the statistics of polymer chains at large length scales are rather insensitive to the details of the inter-molecular interactions~\citep{deGennes1979scaling}.  As a result, the random walk has been a very useful model for understanding the behavior of polymer networks.}

\paragraph{Gaussian jumps} 
The BCRW model with correlated jumps is not analytically convenient to deal with. However, when the effective branching rate is low, we expect from the central limit theorem (for sums of i.i.d.~increments and for sums of Markov increments) that our BRW and BCRW models both can be well approximated by a branching random walk model with independent Gaussian increments.\footnote{Meanwhile, it is well known that normal approximations sometimes behave unsatisfactorily in the large deviation regime, which applies to our case since we will be interested in the extremal behavior of the spatial branching models. It is therefore risky to apply the aforementioned heuristic of the central limit theorem. Fortunately, when the branching rate is sufficiently small (which is the case for the polymeric systems in which we are interested) and the offset distance $q_x$ is not too large, we are in a relatively moderate deviation regime, and the normal approximation shows quite good precision.} For this reason, we introduce the \emph{Gaussian branching random walk} (GBRW) model, where by definition the jumps are i.i.d.~and Gaussian distributed.\footnote{The increments of a GBRW model are independent, meaning that it fails to capture the correlation between the jumps on distinct cross-linked chains when compared to the BCRW model. Nevertheless, this does not hamper the FPT statistics significantly; see Figure \ref{fig:bbm_c1_rho_cgmd_b} below. } We incorporate the termination and delayed branching schemes, as well as proper scaling with MSID, into the GBRW model in order to stay close to the BCRW model defined above.

Our goal in introducing the GBRW model is 
to show that little precision is lost even if we use a simpler (and more universal) model to describe the SP statistics of a polymer network. 
Here, the simplicity refers to the fact that the increments are independent, and there exists a precise formula  (up to $O_{\bP}(\log\log x)$) for the FPT just as scaled BRW.
The universality comes from the central limit theorem.
We wish to convey the crucial idea that the correlated random walk defined in \eqref{eq:markov} should be considered as a prototype, and other similarly defined correlated random walks (e.g., using the angular fan method) that exhibit a central limit behavior (with first and second moments matching MSID calculations)  should provide equally good descriptions of SP.

As a further simplification to the GBRW, we will present in Section \ref{sec:analytic BBM}  the theoretical treatment of the branching Brownian motion (BBM) where we prove a precise formula (up to $O_{\bP}(1)$) for the FPT.

In the majority of our paper, we will focus on the numerical implementation of the scaled $(\tl,\tnu)$-BRW,  $(\tl,\tnu)$-BCRW, and scaled $(\tl,\tnu)$-GBRW 
 models. When there is no confusion, we simply write BRW, BCRW, and GBRW respectively. The numerical implementation of the BBM and the unscaled BRW and GBRW models are validated in \ref{sec:validation}, though we do not present their performance against the CGMD results owing to their relatively poorer performance in comparison to the BRW, BCRW, and GBRW models.
A summary of the four models is given in Table \ref{table}.

\begin{table}[ht]
\centering
\begin{tabular}{|c|c|c|c|c|}
\hline
\textbf{Model}                  & \textbf{BCRW}                     & \textbf{BRW}                         & \textbf{GBRW}        &     \textbf{BBM}    \\ \hline
Time  variable           & discrete                & discrete                 & discrete    & continuous    \\ \hline
Jumps             & dependent                & independent                 & independent    & independent    \\ \hline
Jump distribution & uniform on $\mathbb S^2$ & uniform on $\mathbb S^2$    & Gaussian     & Gaussian      \\ \hline
Jumps $\propto \sqrt{ \rm MSID}$                 &   N/A                       & $\checkmark$                & $\checkmark$   & $\checkmark$    \\ \hline
Termination            & $\checkmark$             & $\checkmark$                &  $\checkmark$   & N/A  \\ \hline
Delayed branching      & $\checkmark$             & $\checkmark$                &  $\checkmark$   & N/A  \\ \hline
FPT predictions        & not available            &  $\pm O(\log\log x)$ & $\pm O(\log\log x)$ &  $\pm O(1)$ \\ \hline
\end{tabular}
\caption{A summary of the characteristics of the four spatial branching models considered in this work}
\label{table}
\end{table}

\section{Results}
\label{sec:results}

The key results of the SP statistics obtained from various types of spatial branching models are discussed in this section. 
We start from the BCRW model which is specifically constructed to carry all three features of the CGMD network: termination, delayed branching, and correlation, following  Section \ref{sec:newbrw}.\footnote{We slightly abuse notation here by including the delayed branching property in BRW, where we also assume that jumps are uniform on $\S^2$; instead, the traditional BRW (i.i.d.~offsprings) will be referred to the \emph{classical} BRW.} We then demonstrate the scaled BRW and GBRW models,\footnote{By \emph{scaled} we mean that the jumps (positions of subsequent particles in the tree) are scaled by a common factor.} which are more tractable and have an equally good description of the SP statistics.
 In particular, for the BRW and GBRW models, we will be able to develop asymptotic formulas for the FPT, which leads to useful theoretical predictions in Section \ref{sec:analytic:BRW}.

\subsection{SP statistics predicted by BCRW}
\label{sec:cgmd as bmrw}

\paragraph{Numerical results}
Using the method described in Section~\ref{sec:brw}, the effective branching rate corresponding to the CGMD model containing 250,000 beads with 6,000 cross-links is $\tl\approx 0.0398$ and that with 13,600 cross-links is $\tl\approx 0.0856$.
Figure~\ref{fig:corr_cgmd_dist_lower} shows the SP distribution (for $q_x \approx 65.5 \, \sigma$) from the CGMD model and that predicted by the BCRW model at these two cross-linking densities.
Good agreement is observed between the CGMD results and BCRW predictions, both in terms of the mean and the standard deviation.
\begin{figure}[h!]
    \centering
        \subfigure[]{\includegraphics[width=0.38\textwidth]{\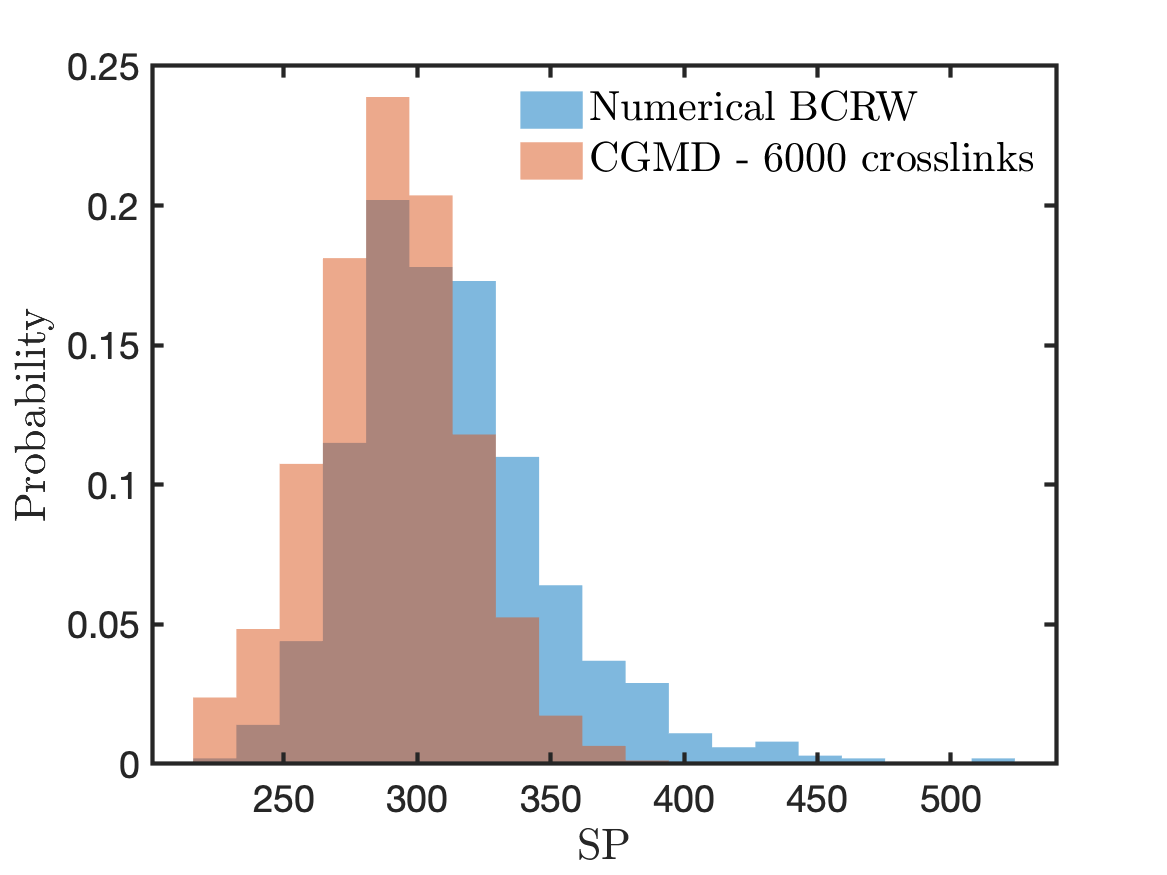}\label{fig:corr_cgmd_dist_lower_c}}
         \subfigure[]{\includegraphics[width=0.38\textwidth]{\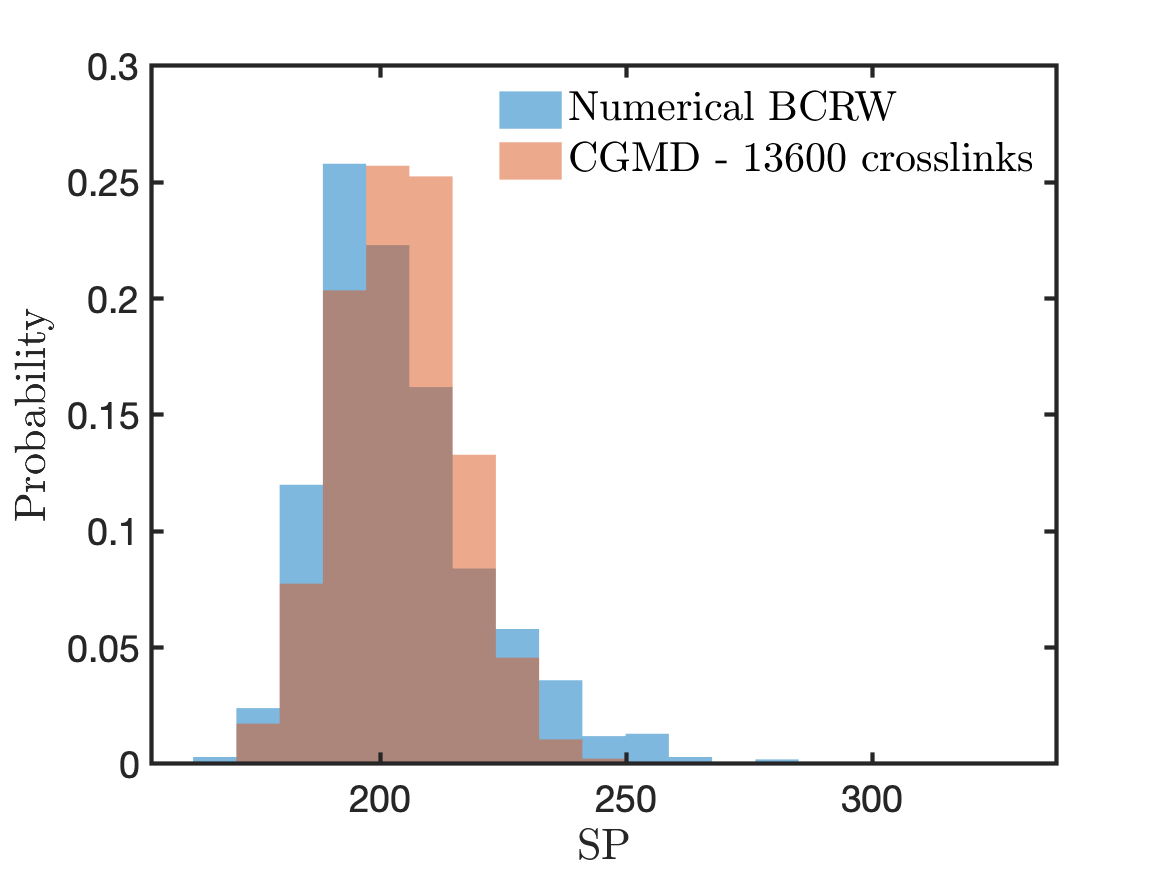}\label{fig:corr_cgmd_dist_ref_c}}
    \caption{The SP distributions from the BCRW compared against a single CGMD configuration at different cross-link densities: (a) 6000 cross-links ($\tl \approx 0.0398$), and (b) 13600 cross-links ($\tl \approx 0.0856$) for $q_x = 65.5\,\sigma$ ($\approx 98.2$ nm). 
    }
    \label{fig:corr_cgmd_dist_lower}
\end{figure}
Figure~\ref{fig:corr_sig_valid_a} shows how the SP distribution predicted by the BCRW model depends on the offset distance $q_x$.
The distribution shifts to the right with its shape nearly unchanged with increasing $q_x$, in good agreement with the CGMD result shown in Figure~\ref{subfig:cgmd_hist}.
Figure~\ref{fig:cgmd linear} shows that the mean SP grows linearly with $q_x$, with excellent agreement between BCRW and CGMD models.
Figure~\ref{fig:corr_sig_valid_b} shows that the standard deviation of SP remains almost unchanged in the range of $q_x$ considered here, with the BCRW prediction a little higher than the CGMD results.
In summary, the BCRW model successfully captures the linear dependence of the SP mean with $q_x$ and the CGMD observation that the SP standard deviation stays near a constant value much smaller than the mean.\footnote{That the FPT of BCRW is concentrated can be supported intuitively by the following famous quote from \citep{talagrand1996new}: \emph{A random variable that depends (in a ``smooth" way) on the influence of many independent variables (but not too much on any of them) is essentially constant.}}
We note that the correlation between successive jumps is important for reaching the level of agreement between BCRW and CGMD predictions shown here.
If the correlation were simply removed (not shown), then the SP mean would become larger (the FPT increases since uncorrelated jumps make the trajectory more tortuous), while the standard deviation would remain largely unchanged and independent of $q_x$.

\begin{figure}[ht!]
    \centering
    \hspace{-1cm}
        \subfigure[]{\includegraphics[width=0.35\textwidth]{\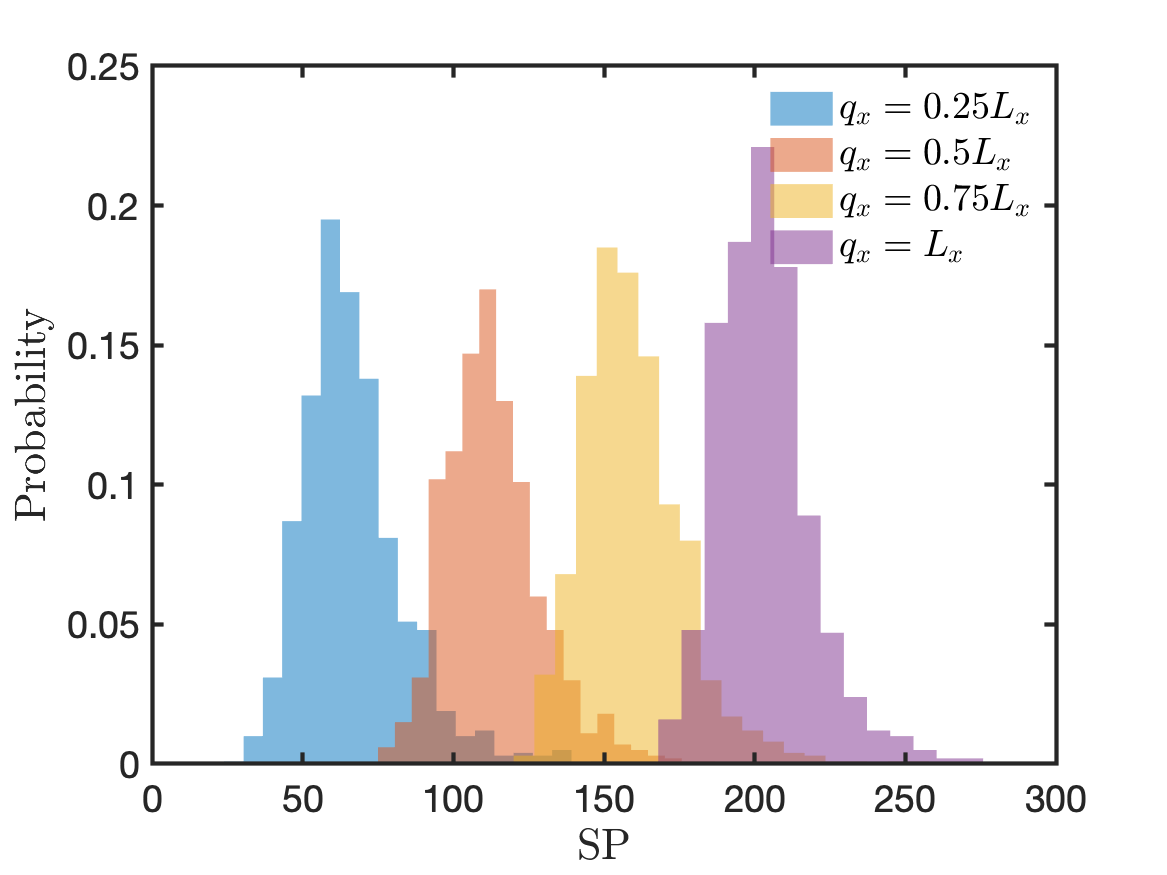}\label{fig:corr_sig_valid_a}}
        \hspace{-0.6cm}
        \subfigure[]{\includegraphics[width=0.35\textwidth]{\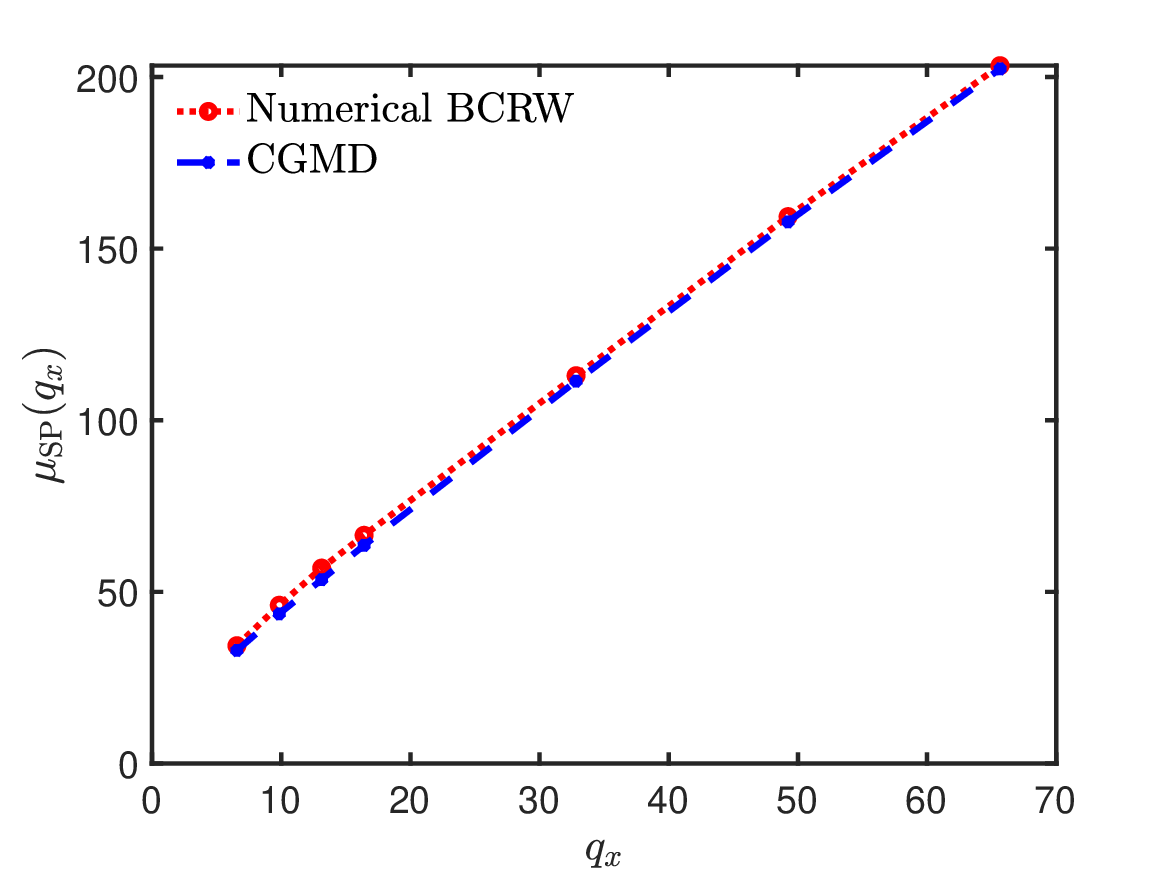}\label{fig:cgmd linear}}
         \hspace{-0.6cm}
        \subfigure[]{\includegraphics[width=0.35\textwidth]{\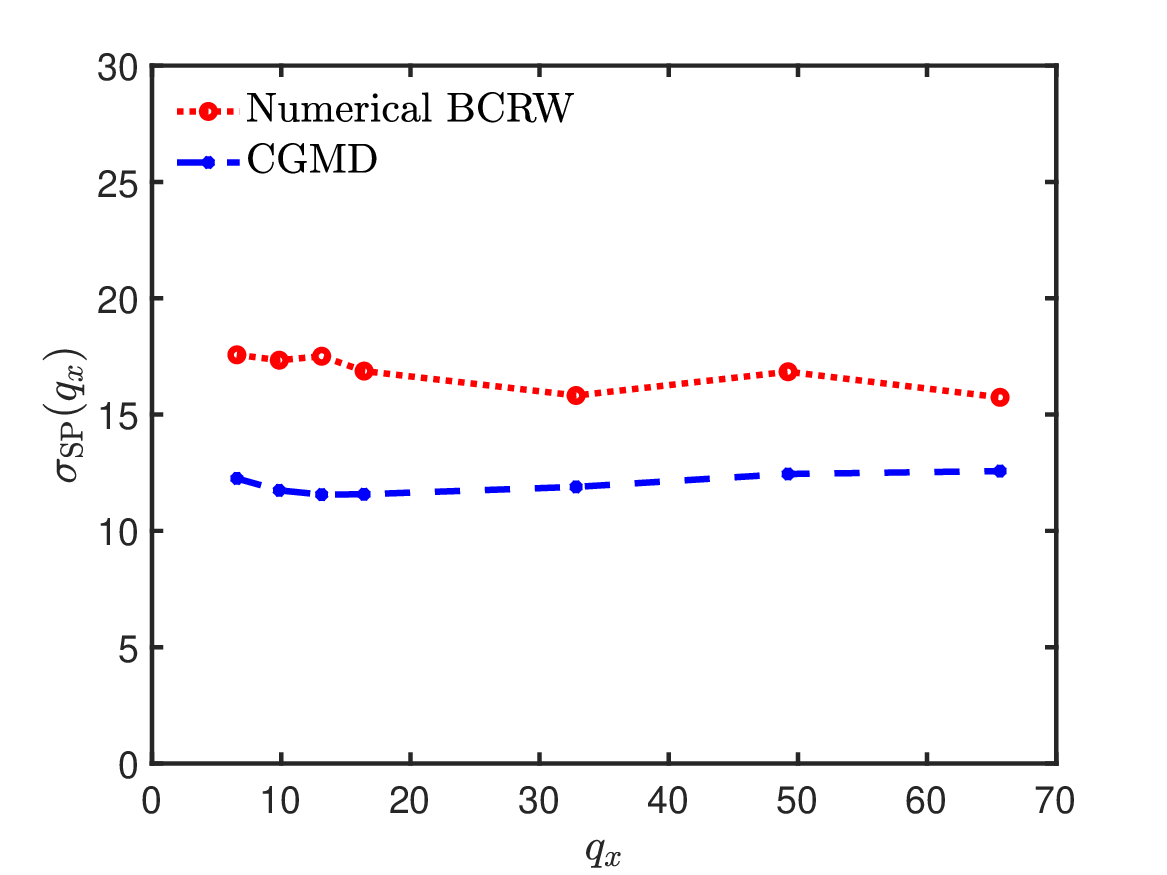}\label{fig:corr_sig_valid_b}}
       \hspace{-1cm}
    \caption{(a) The SP distributions at different values of $q_x$ from the  numerical BCRW. The (b) mean and (c) standard deviation of the SP as a function of $q_x$. The data points correspond to $q_x = 0.1L_x,\,0.15L_x,\,0.2L_x,\,0.25L_x,\,0.5L_x,\,0.75L_x,\,L_x$. All network analysis corresponds to a single CGMD configuration with a cross-link density that corresponds to $\tl=0.0856$ in the BCRW model.
    }
    \label{fig:corr_sig_valid}
\end{figure}

\paragraph{Metric for linear dependence of SP on $q_x$}
Figure~\ref{fig:cgmd linear} suggests that in the range of $q_x$ considered here, the SP mean can be well approximated by a linear relation,
\begin{equation}\label{eq:SP linearity}
  \mu_{\rm SP}(q_x) = \frac{q_x}{\overline{c}_1} + b
\end{equation}
where $\overline{c}_1$ is the inverse of the slope and $b$ is the intercept. 
We expect this relation to be well obeyed as long as $q_x$ is not too small. The $\overline{c}_1$ and $b$ parameters can be extracted from the $\mu_{\rm SP}(q_x)$ data, as shown in Figure~\ref{fig:cgmd linear} by linear regression.\footnote{For the linear regression fit, we compute the $\mu_{\rm SP}$ for $q_x = 0.25L_x,\,0.5L_x,\,0.75L_x,\,L_x$. We will later see in \eqref{eq:realtauxasymp} that an extra logarithm correction term is expected. However, since the logarithmic function grows very slowly, this extra term can be well described by the linear and constant terms in the linear regression estimate for the range of $q_x$ we use in our presented analysis.} 
We note that the intercept $b$ is relatively small (compared against $q_x/\overline{c}_1$) and is on the order of $\sigma_{\rm SP}$.
Hence the parameter $\overline{c}_1$ allows us to have a quick estimate of the SP mean, which is fairly accurate at large $q_x$.

The parameter $\overline{c}_1$ is a key property; larger $\overline{c}_1$ means straighter shortest path.
Within the BRW models, $\overline{c}_1$ can be interpreted as the `speed' of the shortest paths; a larger $\overline{c}_1$ corresponds to greater distances that the shortest paths can reach per unit time. 
If we ignore $b$, which is small, then $\overline{c}_1$ is a measure of the straightness of the shortest paths, and the inverse of $\overline{c}_1$ is the \textit{tortuosity}.
As we shall discuss in Section~\ref{sec:8chain}, the inverse of $\overline{c}_1$ corresponds to an important physical parameter that measures the maximum stretch that can be applied to the polymer network before significant bond-breaking events must occur.

\begin{figure}[ht!]
    \centering
        \subfigure[]
        {\includegraphics[width=0.41\textwidth]{\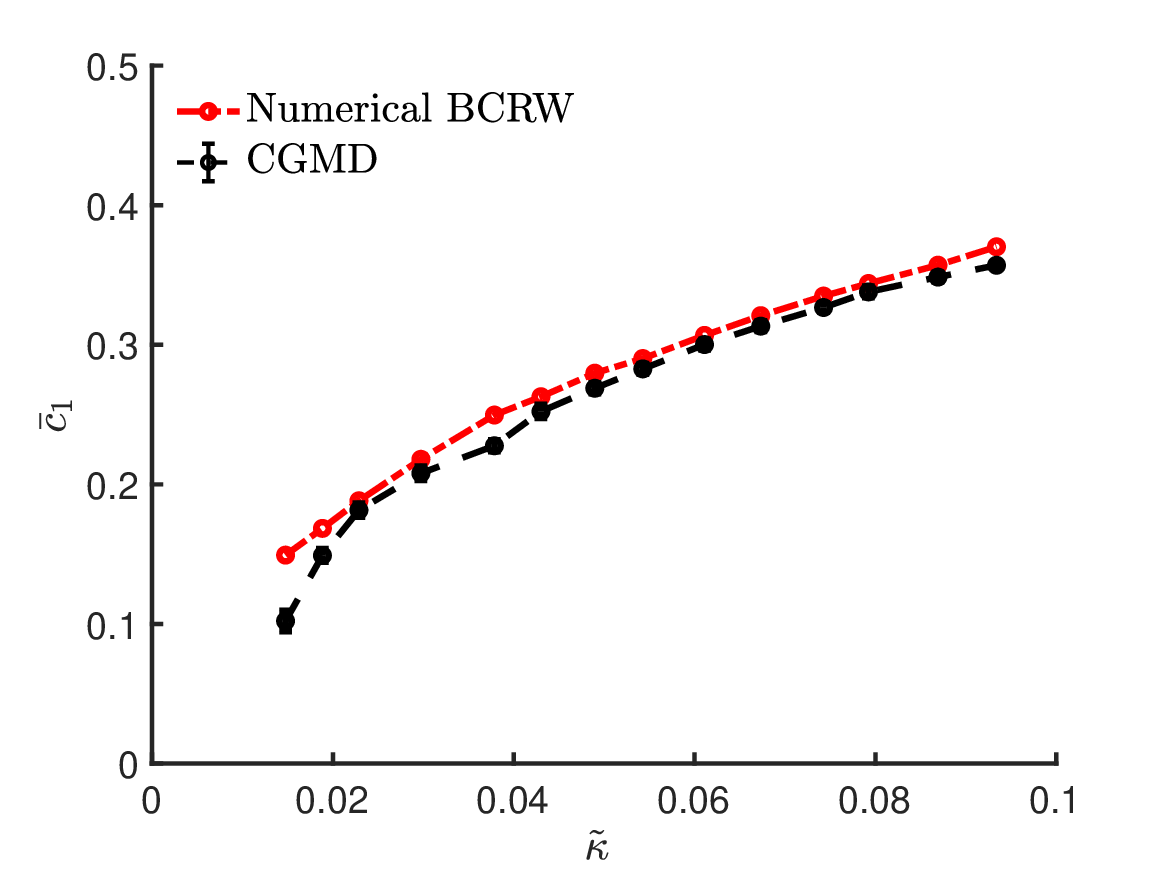}\label{fig:corr_c1_rho_cgmd_b}}
        \subfigure[]{\includegraphics[width=0.41\textwidth]{\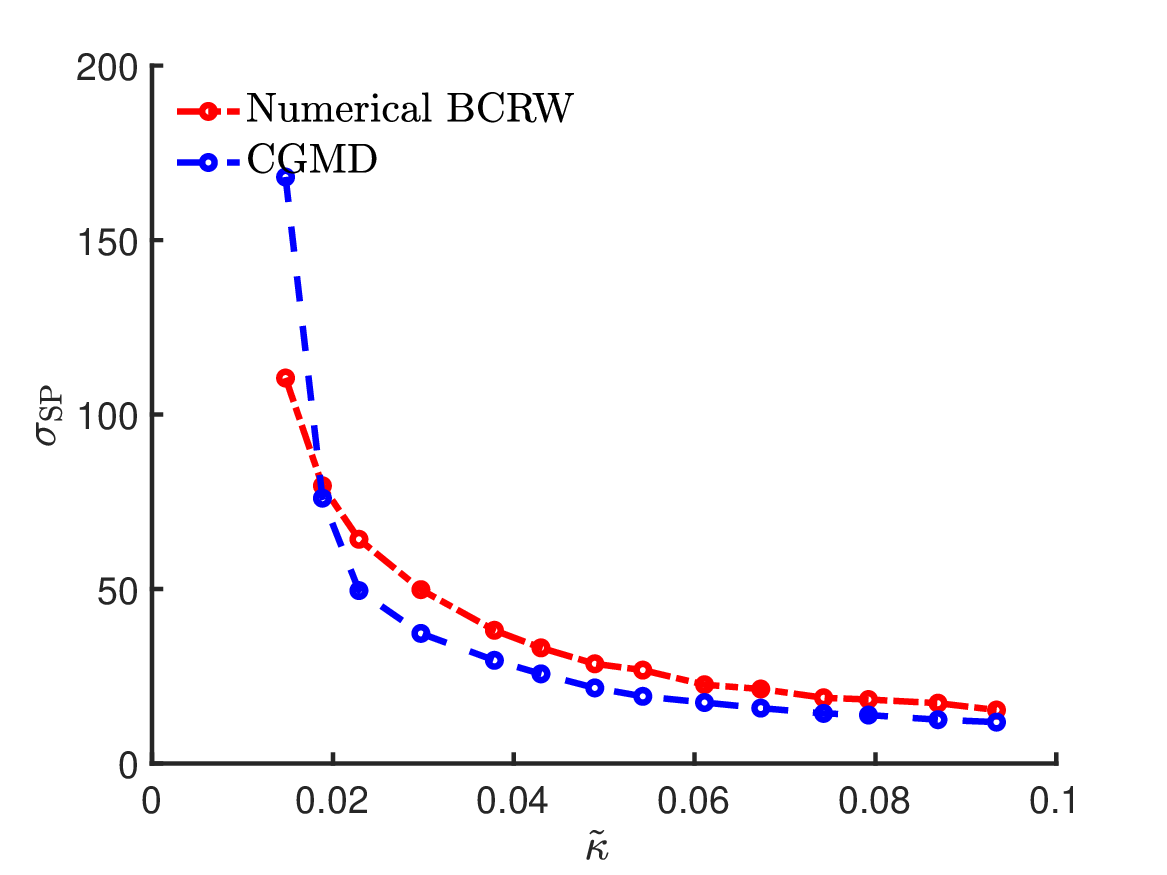}\label{fig:corr_c1_rho_cgmd_c}}
    \caption{
    (a) $\overline{c}_1$ averaged over 10 independent CGMD configurations compared against the BCRW at different cross-link densities (corresponding to branching rate $\tl \in [0.0148,0.0934]$). (b) Comparison of the standard deviations of the SP distribution from the CGMD simulation and of the FPT distribution from the numerical BCRW.
    }
    \label{fig:corr_c1_rho_cgmd}
\end{figure}

Figure~\ref{fig:corr_c1_rho_cgmd_b} plots $\overline{c}_1$ as a function of the branching rate $\tl$ from both the BCRW and the CGMD models. (The cross-link density of typical elastomers corresponds to $\tl$ values lower than the maximum value considered here.)  The CGMD results are obtained by averaging over 10 configurations at each cross-link density.
The $\overline{c}_1$ prediction of the BCRW model agrees well with the CGMD model, especially at a high branching rate $\tl$.
In general, $\overline{c}_1$ increases with $\tl$, indicating that the shortest paths become straighter at a higher cross-link density.
The overall shape of the $\overline{c}_1$-$\tl$ relation resembles that of a square root function.  This is not a coincidence; in Section~\ref{sec: analytic} we will show that if certain simplifications are introduced, the relationship becomes exactly a square root.
Figure~\ref{fig:corr_c1_rho_cgmd_c} plots the standard deviation $\sigma_{\rm SP}$ of the shortest paths as a function of $\tl$ from both the BCRW and the CGMD models for $q_x=L_x$.
Here, the prediction from the BCRW model also shows qualitative agreement with the CGMD model.
In general, $\sigma_{\rm SP}$ decreases with $\tl$, indicating that the shortest path distribution becomes more concentrated at a higher cross-link density.
The intercept $b$ also decreases with increasing $\tl$, as shown in \ref{sec:intercept}.

\subsection{SP statistics predicted by scaled BRW}
\label{sec:cgmd as brw}

While the BCRW model in Section \ref{sec:cgmd as bmrw} can successfully capture the SP statistics in the polymer network, the correlation between jumps is difficult to account for in the theoretical analysis.
To circumvent these difficulties, we introduced simplified models such as scaled BRW, where individual increments are independent but the jump distance is scaled to match the MSID at $n = 1/\tl$ (scaling the jump as $\sigma_s=\sqrt{\mathrm{MSID}(1/\tl)}\,\sigma$). 
This simplification makes theoretical predictions on SP possible (see Section \ref{sec:analytic:BRW}).

\paragraph{Numerical results}
\begin{figure}[ht!]
    \centering
        \subfigure[]{\includegraphics[width=0.41\textwidth]{\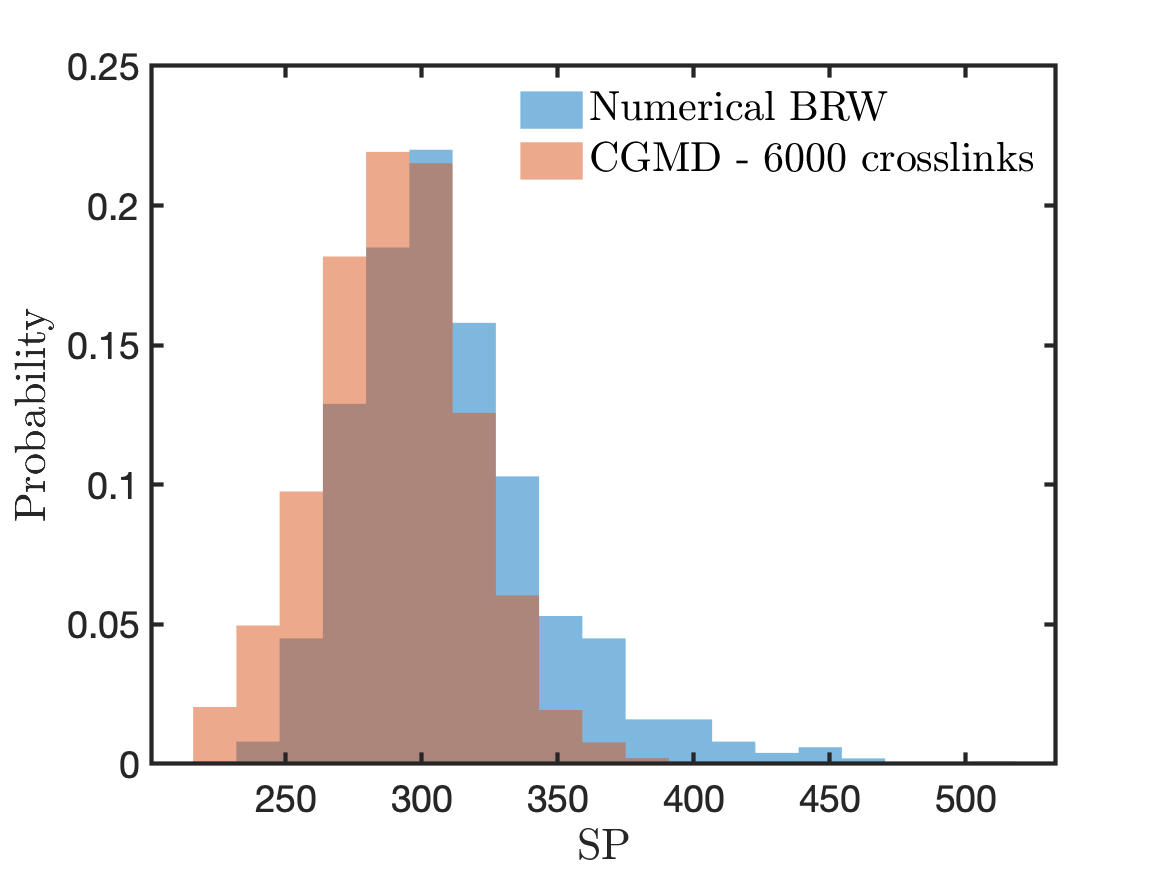}\label{fig:scaled_cgmd_dist_lower_c}}
        \subfigure[]{\includegraphics[width=0.41\textwidth]{\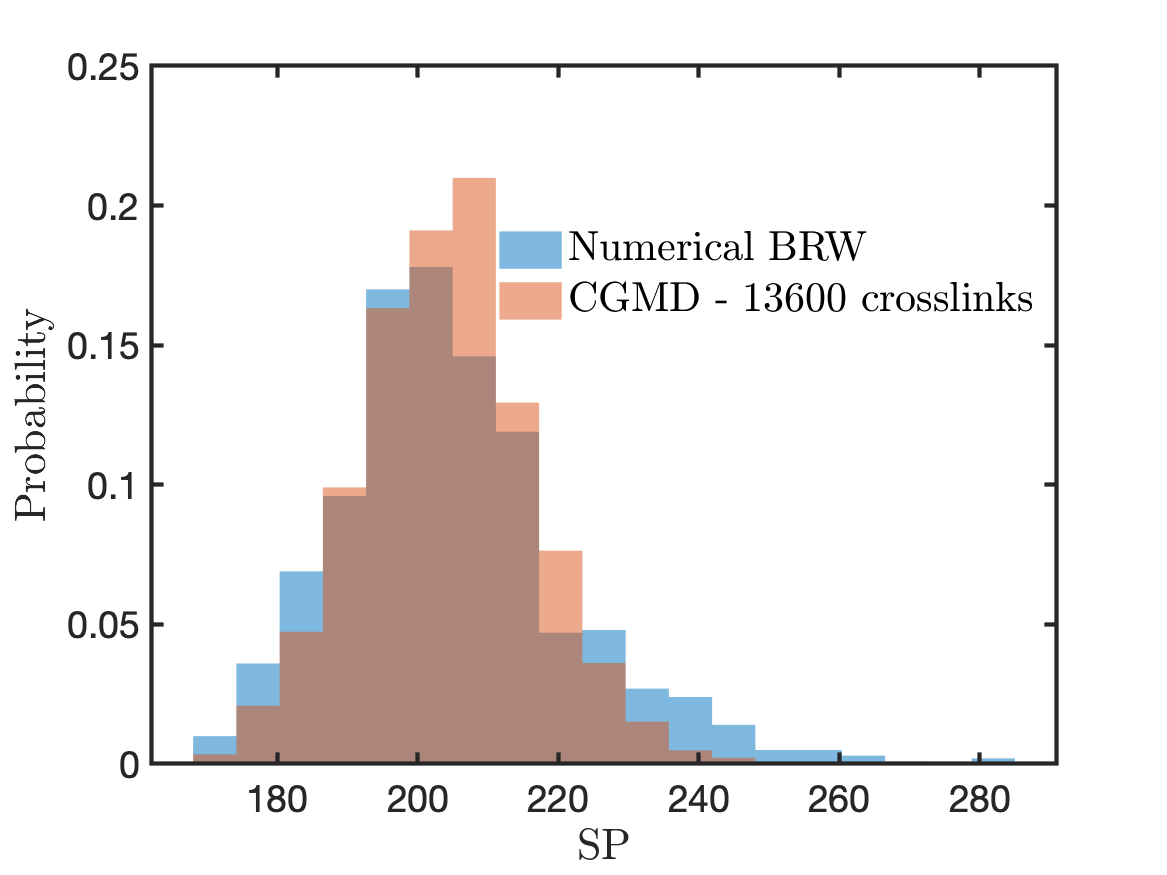}\label{fig:scaled_cgmd_dist_ref_c}}
    \caption{The SP distributions compared against a single CGMD configuration from the scaled BRW at different cross-link densities: (a) 6000 cross-links ($\tl \approx 0.0398$), and (b) 13600 cross-links  ($\tl \approx 0.0856$) for $q_x = 65.5\,\sigma$ ($\approx 98.2$ nm). 
    }
    \label{fig:scaled_cgmd_dist_lower}
\end{figure}
Figure~\ref{fig:scaled_cgmd_dist_lower} shows the
FPT distribution predicted by the scaled $(\tl,\tnu)$-BRW model, which is in good agreement with the SP distribution from the CGMD configuration with 6000 cross-links ($\tl=0.0398$) and 13600 cross-links ($\tl=0.0856$). 
The scaled BRW performs much better than the unscaled BRW (not shown in this paper) and does comparably as well as the BCRW from Section \ref{sec:cgmd as bmrw}. This is because by scaling jump steps to match the MSID at a $n = 1/\tl$, the expansion of the BRW tree in space is just as fast as the BCRW tree, thereby capturing a similar FPT or SP behavior. 

\begin{figure}[ht!]
    \centering
       \hspace{-1.0cm}
        \subfigure[]{\includegraphics[width=0.35\textwidth]{\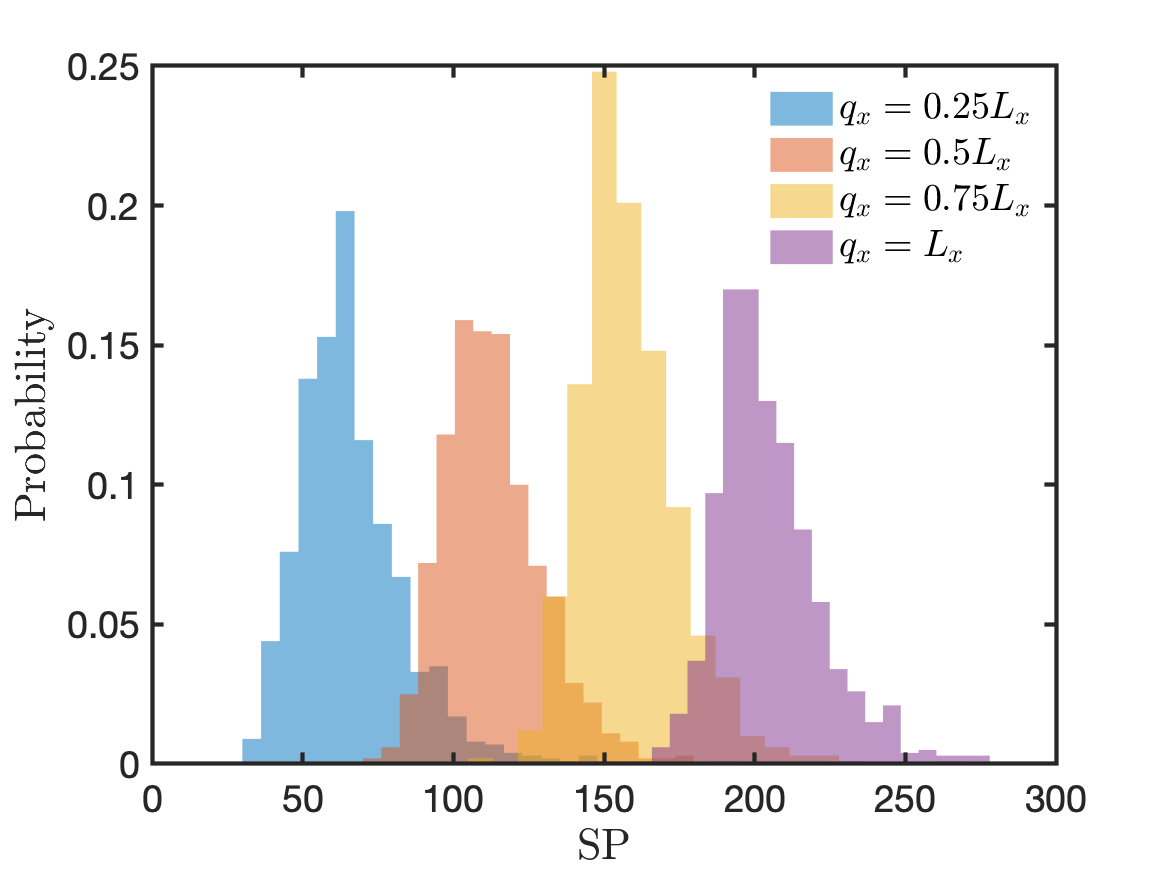}\label{fig:scaled_sig_valida}}
        \hspace{-0.6cm}
        \subfigure[]{\includegraphics[width=0.35\textwidth]{\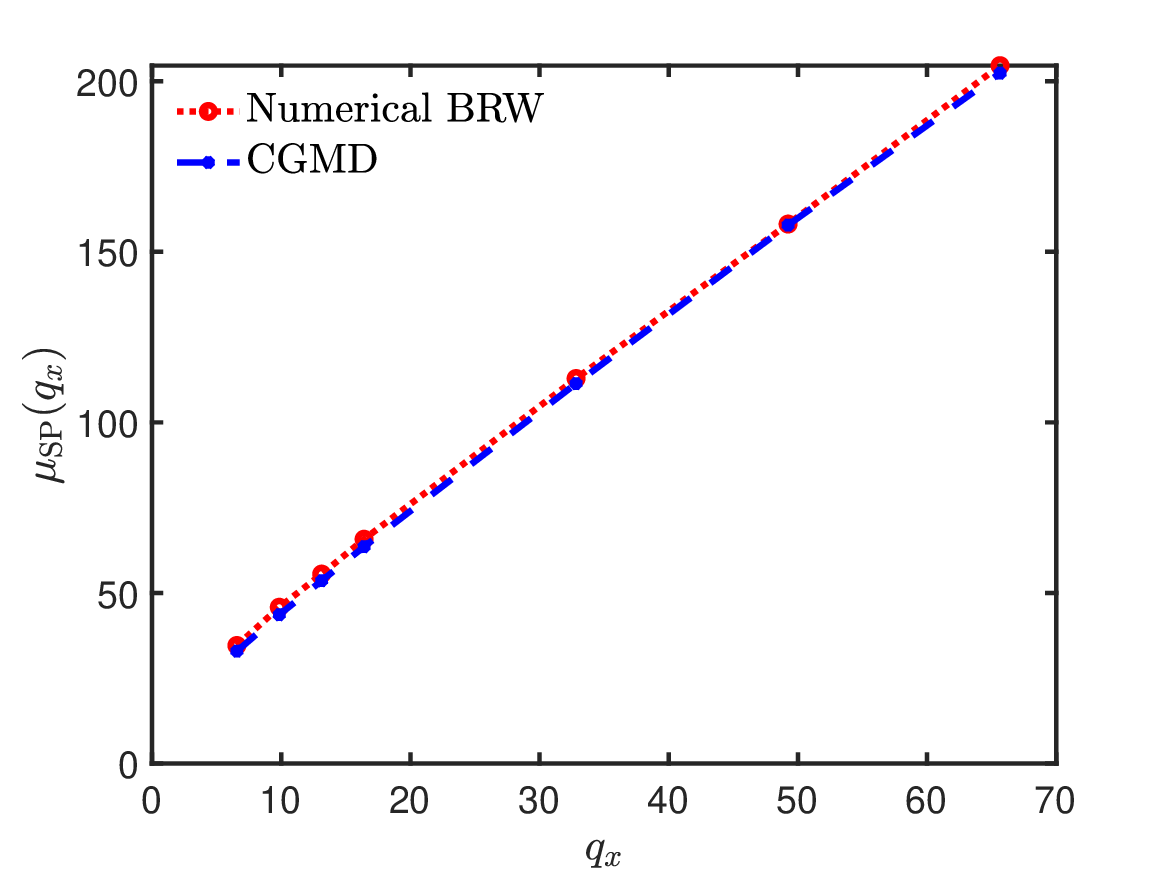}\label{fig:scaled_sig_validb}}
        \hspace{-0.6cm}
        \subfigure[]{\includegraphics[width=0.35\textwidth]{\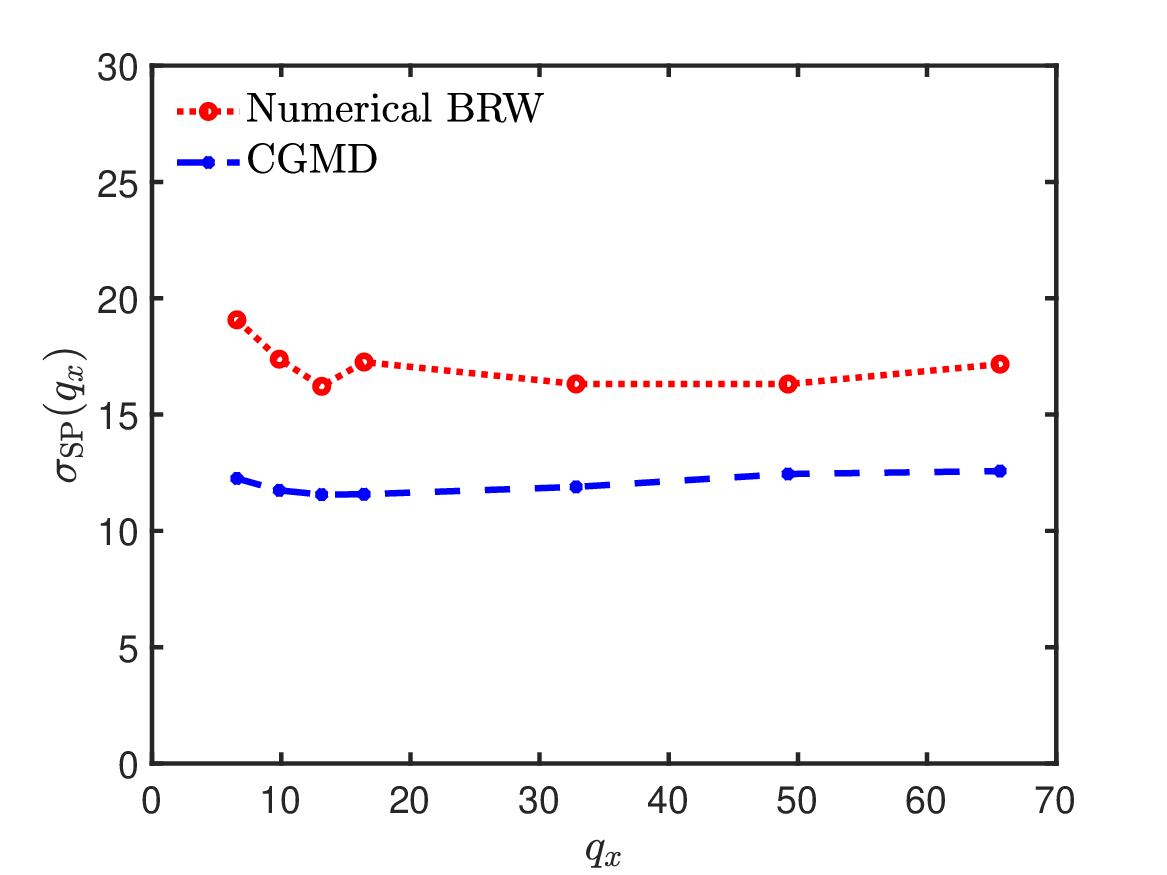}\label{fig:scaled_sig_validc}}
        \hspace{-1.0cm}
    \caption{ (a) The SP distributions at different values of $q_x$ from the numerical  BRW. The (b) mean and (c) standard deviation of the SP as a function of $q_x$. The data points correspond to $q_x = 0.1L_x,\,0.15L_x,\,0.2L_x,\,0.25L_x,\,0.5L_x,\,0.75L_x,\,L_x$. All network analysis corresponds to a single CGMD configuration with a cross-link density corresponding to $\tl=0.0856$ in the BRW model.
    }   \label{fig:scaled_sig_valid}
\end{figure}
Figure~\ref{fig:scaled_sig_valida} shows that the scaled BRW successfully captures the change in the SP distribution for different offset distances $q_x$ (at branching rate $\tl=0.0856$).
Figure~\ref{fig:scaled_sig_validb} shows that its prediction of mean SP as a function of $q_x$ agrees very well with the CGMD result.
Figure~\ref{fig:scaled_sig_validc} shows that the predicted width of the SP distribution stays nearly constant in agreement with CGMD.
\begin{figure}[ht!]
    \centering
        \subfigure[]{\includegraphics[width=0.41\textwidth]{\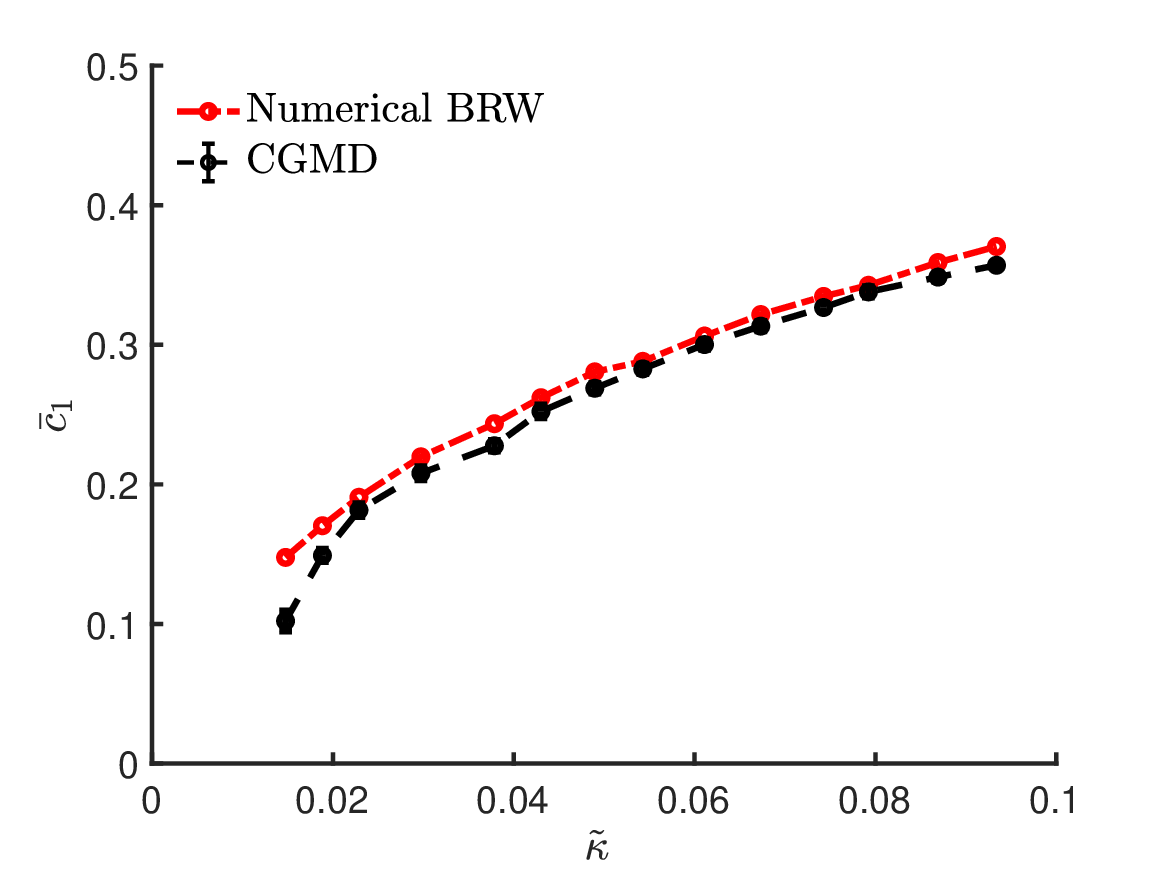}\label{fig:scaled_c1_rho_cgmd_b}}
        \subfigure[]{\includegraphics[width=0.41\textwidth]{\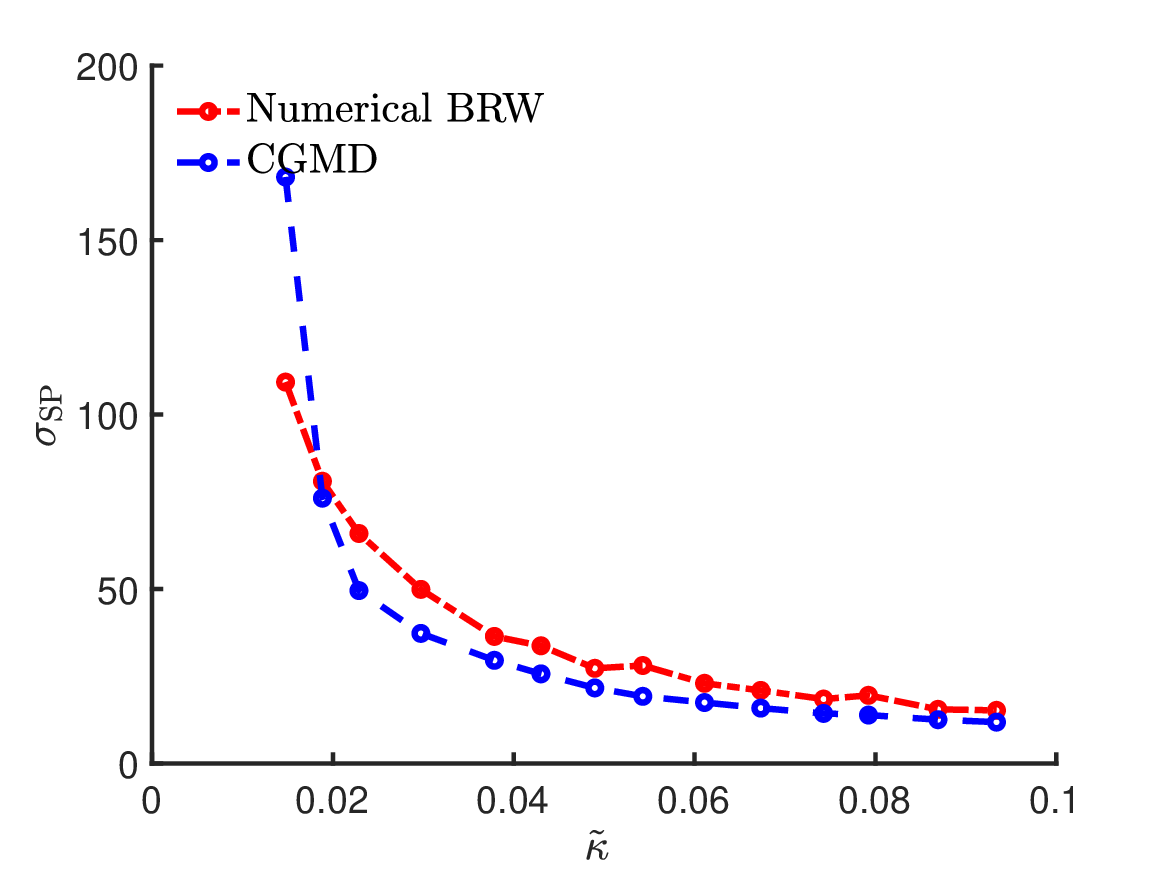}\label{fig:scaled_c1_rho_cgmd_c}}
    \caption{
    (a) $\overline{c}_1$ averaged over 10 independent CGMD configurations compared against the scaled BRW at different branching rates $\tl$. (b) Comparison of the standard deviations of the SP distribution from the CGMD simulation and of the FPT distribution from the numerical BRW.
    }
    \label{fig:scaled_c1_rho_cgmd}
\end{figure}
From the linear dependence of mean SP on $q_x$, 
we compute the $\overline{c}_1$ at different $\tl$ from the scaled BRW model, as shown in Figure~\ref{fig:scaled_c1_rho_cgmd_b}. Figure~\ref{fig:scaled_c1_rho_cgmd_c} shows the $\sigma_{\rm SP}$ at $q_x=L_x$ as a function of different $\tl$. These results are in good agreement with the CGMD results and demonstrate that the independence of the jumps in the BRW model does not affect the statistics as long as the jumps are scaled to reproduce the MSID at $n=1/\tl$.

\subsection{SP statistics predicted by GBRW}
\label{sec:cgmd as gbrw}

In this section, we present the numerical results of the GBRW model, which makes the further approximation that the random walk steps satisfy Gaussian distribution, in addition to being independent.
In order to match the MSID of the CGMD model at $n = 1/\tl$, the incremental distribution of the GBRW model is given by the 3-dimensional centered Gaussian distribution with covariance matrix $\sqrt{\mathrm{MSID}(1/\tl)/3}\, I_3$, where $I_3$ is the $3\times 3$ identity matrix.

\paragraph{Numerical results}
\begin{figure}[ht!]
    \centering
        \subfigure[]{\includegraphics[width=0.41\textwidth]{\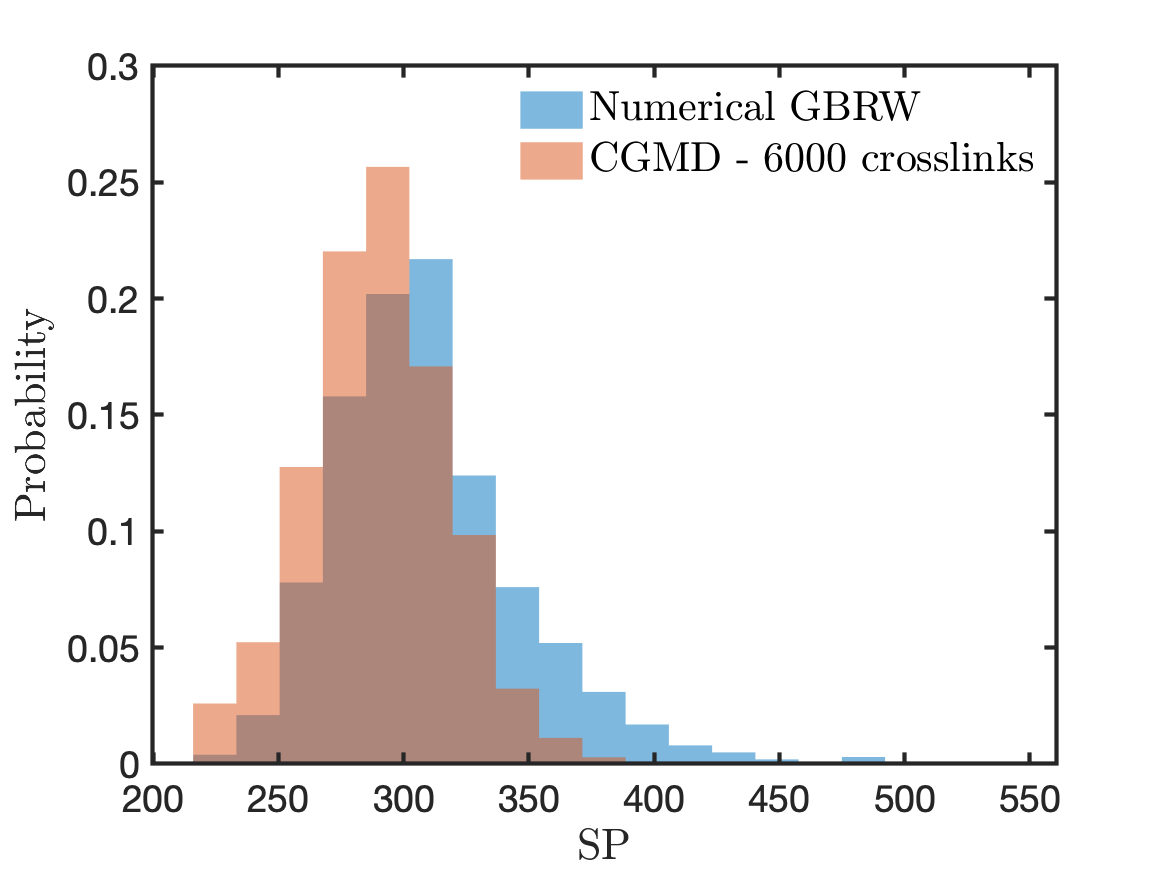}\label{fig:bbm_cgmd_dist_lower_c}}
        \subfigure[]{\includegraphics[width=0.41\textwidth]{\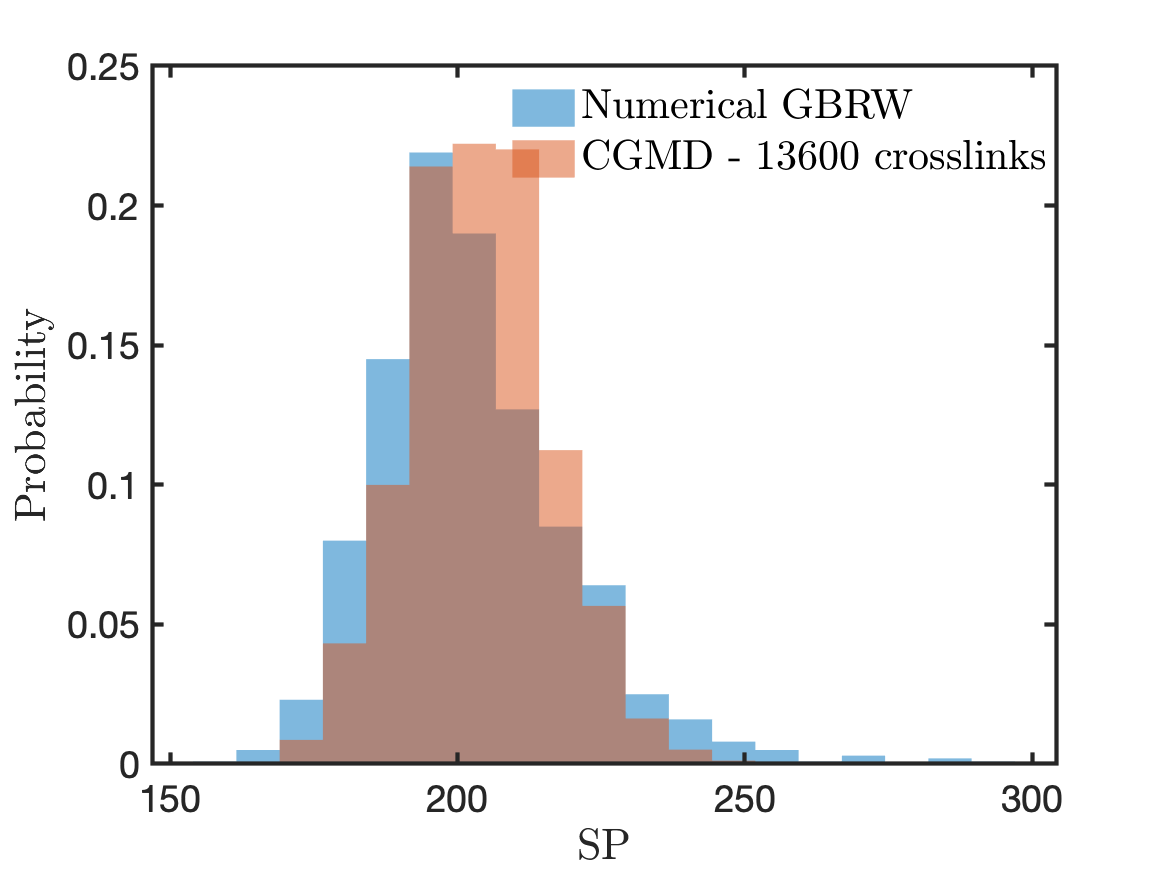}\label{fig:bbm_cgmd_dist_ref_c}}
    \caption{The SP distributions compared against a single CGMD configuration from the numerical GBRW at different cross-link densities: (a) 6000 cross-links ($\tl \approx 0.0398$), and (b) 13600 cross-links  ($\tl \approx 0.0856$) for $q_x = 65.5\,\sigma$ ($\approx 98.2$ nm). 
    }
    \label{fig:bbm_cgmd_dist_lower}
\end{figure}
Figure~\ref{fig:bbm_cgmd_dist_lower} shows the
FPT distribution predicted by the scaled $(\tl,\tnu)$-GBRW model, which is in good agreement with the SP distribution from the CGMD configuration with 6000 cross-links ($\tl=0.0398$) and 13600 cross-links ($\tl=0.0856$). 
\begin{figure}[ht!]
    \centering
       \hspace{-1.0cm}
        \subfigure[]{\includegraphics[width=0.35\textwidth]{\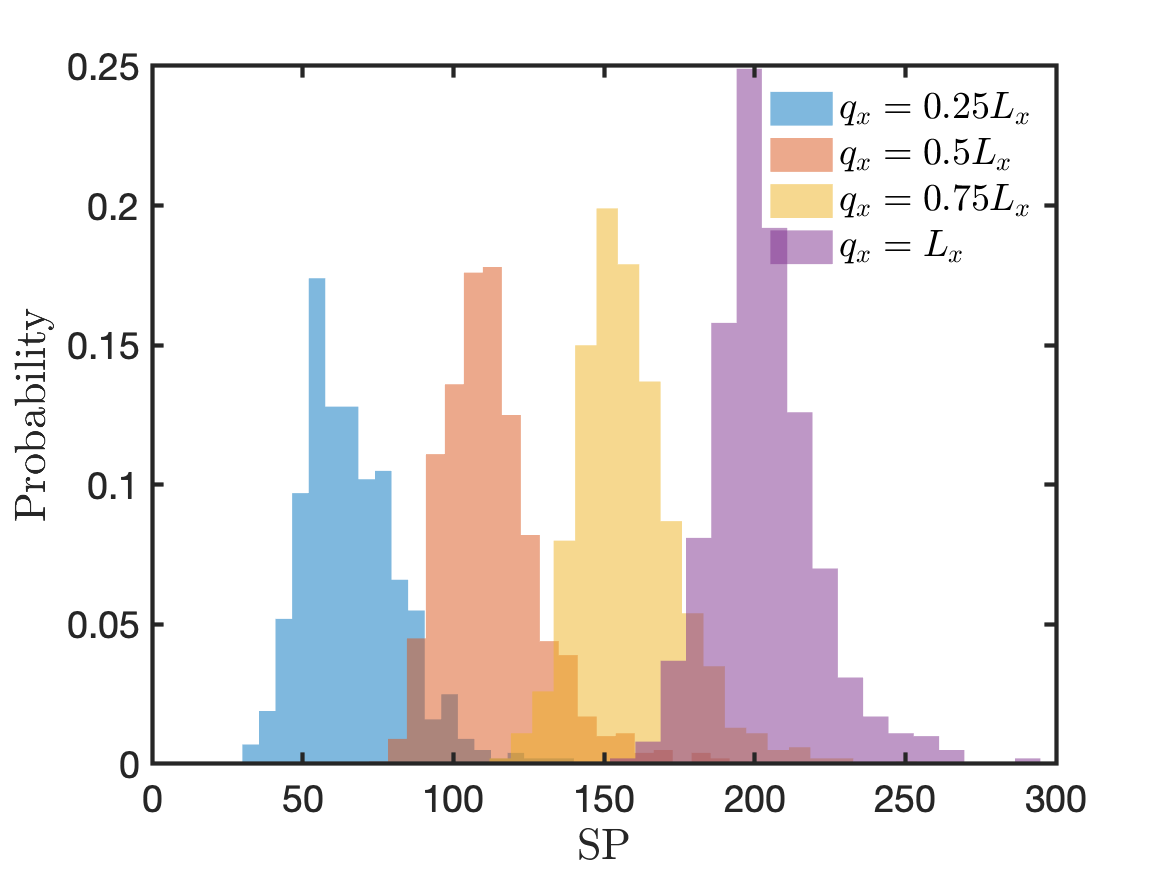}\label{fig:bbm_sig_valida}}
        \hspace{-0.6cm}
        \subfigure[]{\includegraphics[width=0.35\textwidth]{\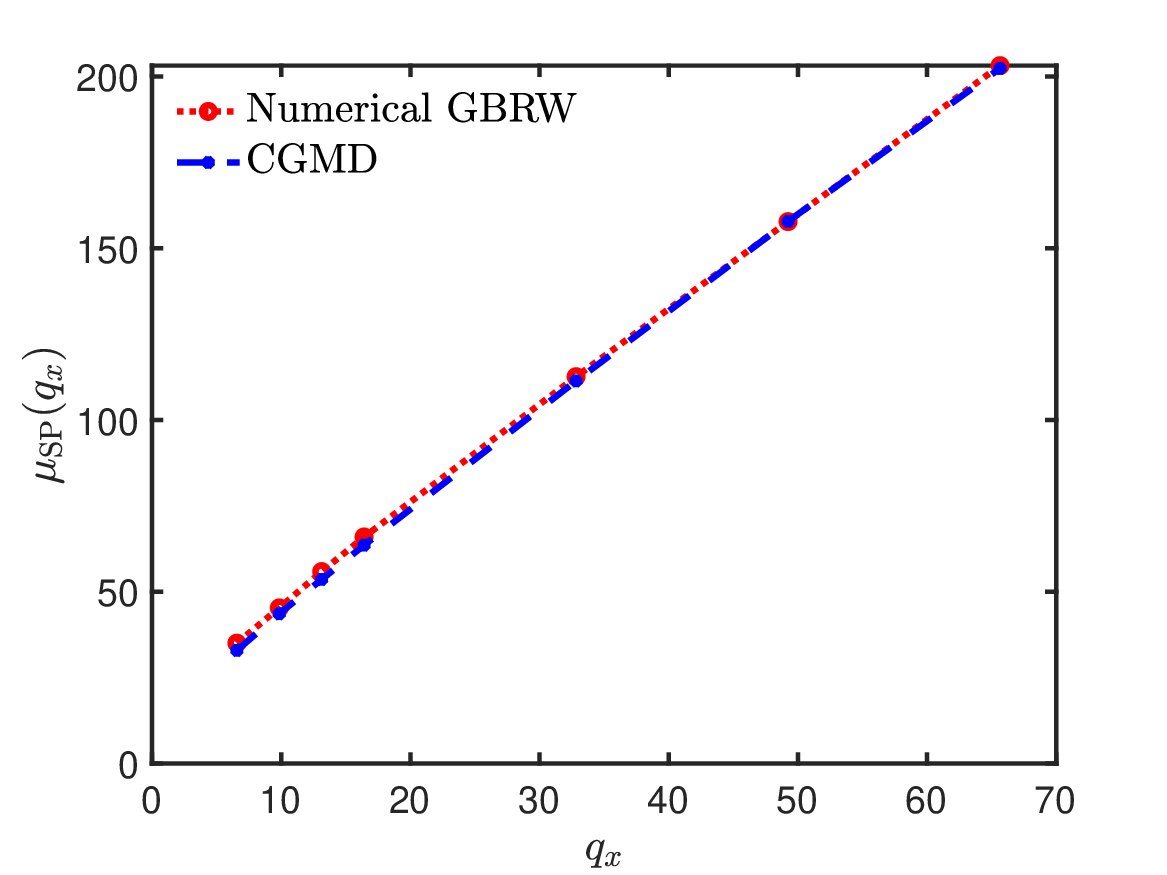}\label{fig:bbm_sig_validb}}
        \hspace{-0.6cm}
        \subfigure[]{\includegraphics[width=0.35\textwidth]{\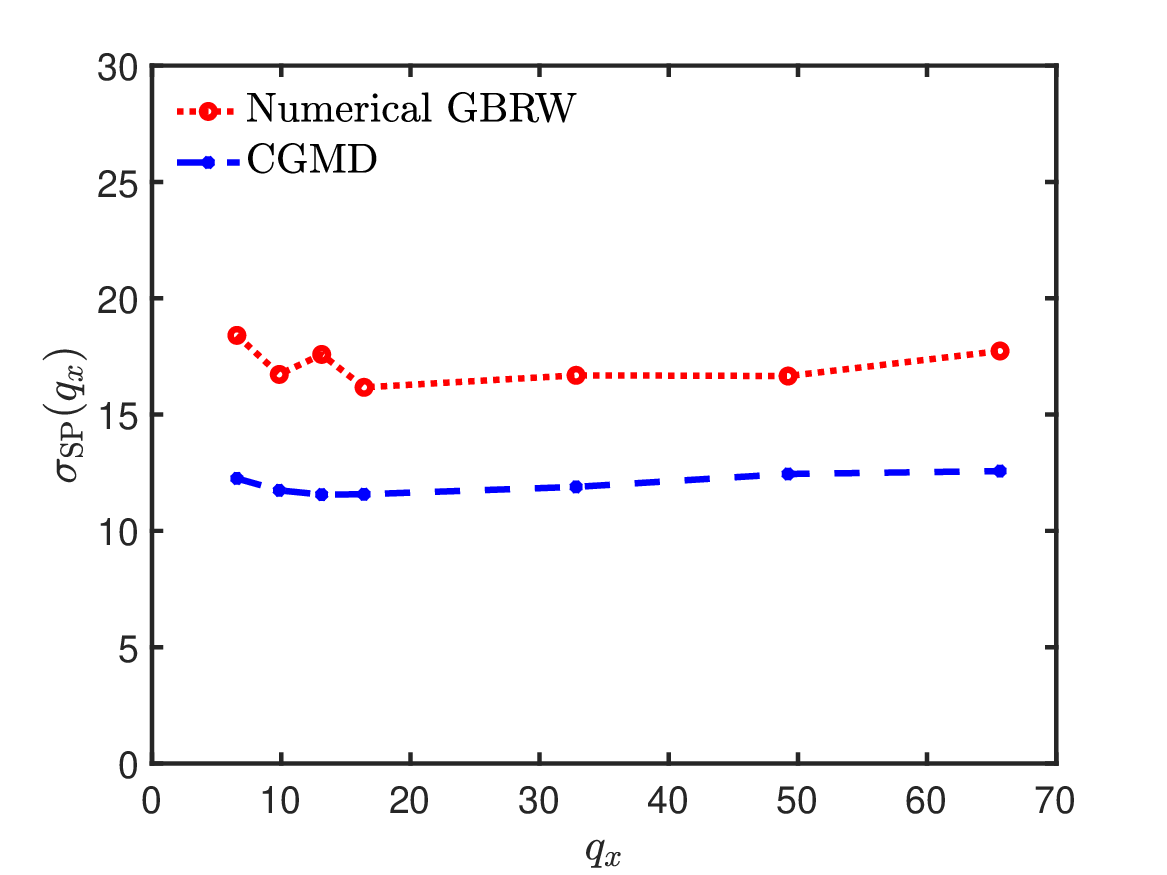}\label{fig:bbm_sig_validc}}
        \hspace{-1.0cm}       
    \caption{ (a) The SP distributions at different values of $q_x$ from the numerical  GBRW. The (b) mean and (c) standard deviation of the SP as a function of $q_x$. The data points correspond to $q_x = 0.1L_x,\,0.15L_x,\,0.2L_x,\,0.25L_x,\,0.5L_x,\,0.75L_x,\,L_x$. All network analysis corresponds to a single CGMD configuration with a cross-link density corresponding to $\tl=0.0856$ in the GBRW model.
    }
    \label{fig:bbm_sig_valid}
\end{figure}

Figure~\ref{fig:bbm_sig_valida} shows that the scaled GBRW successfully captures the change in the SP distribution for different offset distance $q_x$ (at branching rate $\tl=0.0856$).
Figure~\ref{fig:bbm_sig_validb} shows that its prediction of mean SP as a function of $q_x$ agrees very well with the CGMD result.
Figure~\ref{fig:bbm_sig_validc} shows that the predicted width of the SP distribution stays nearly constant in agreement with CGMD.
\begin{figure}[ht!]
    \centering
        \subfigure[]{\includegraphics[width=0.41\textwidth]{\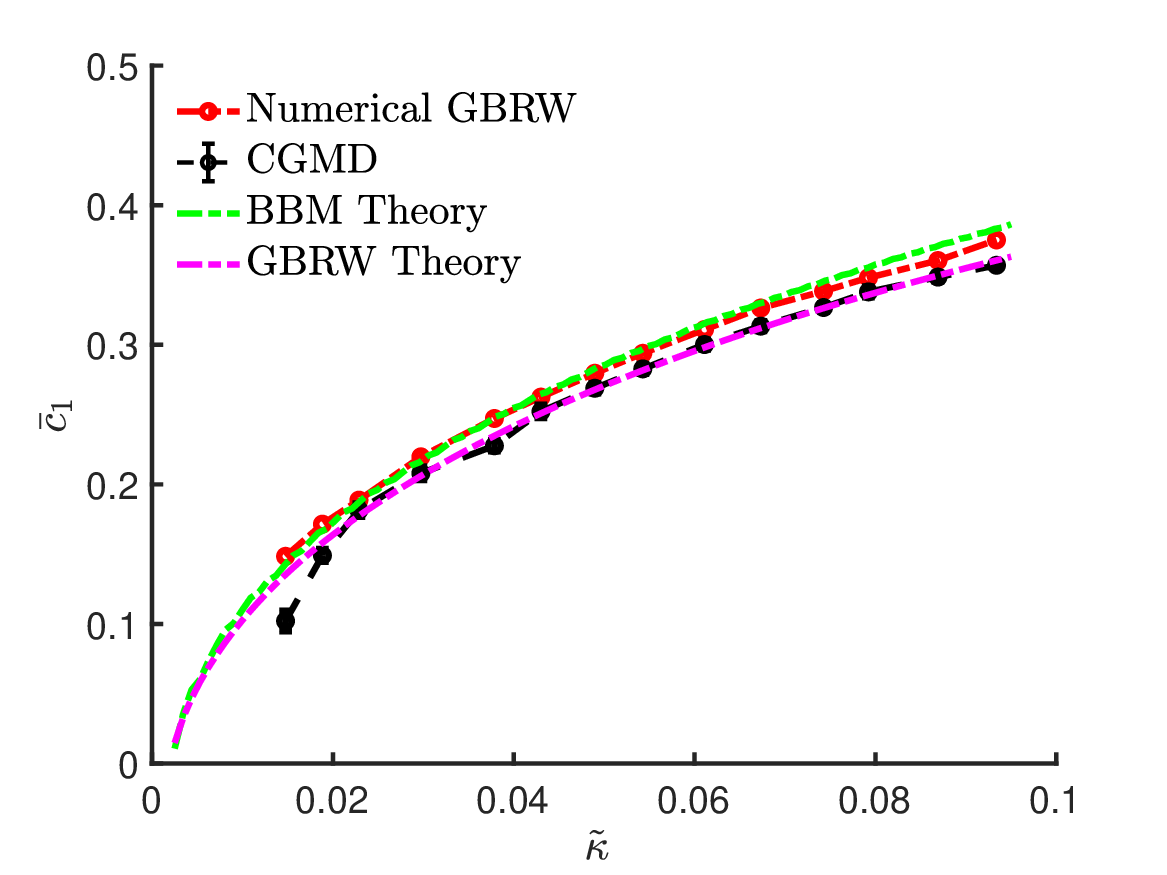}\label{fig:bbm_c1_rho_cgmd_b}}
        \subfigure[]{\includegraphics[width=0.41\textwidth]{\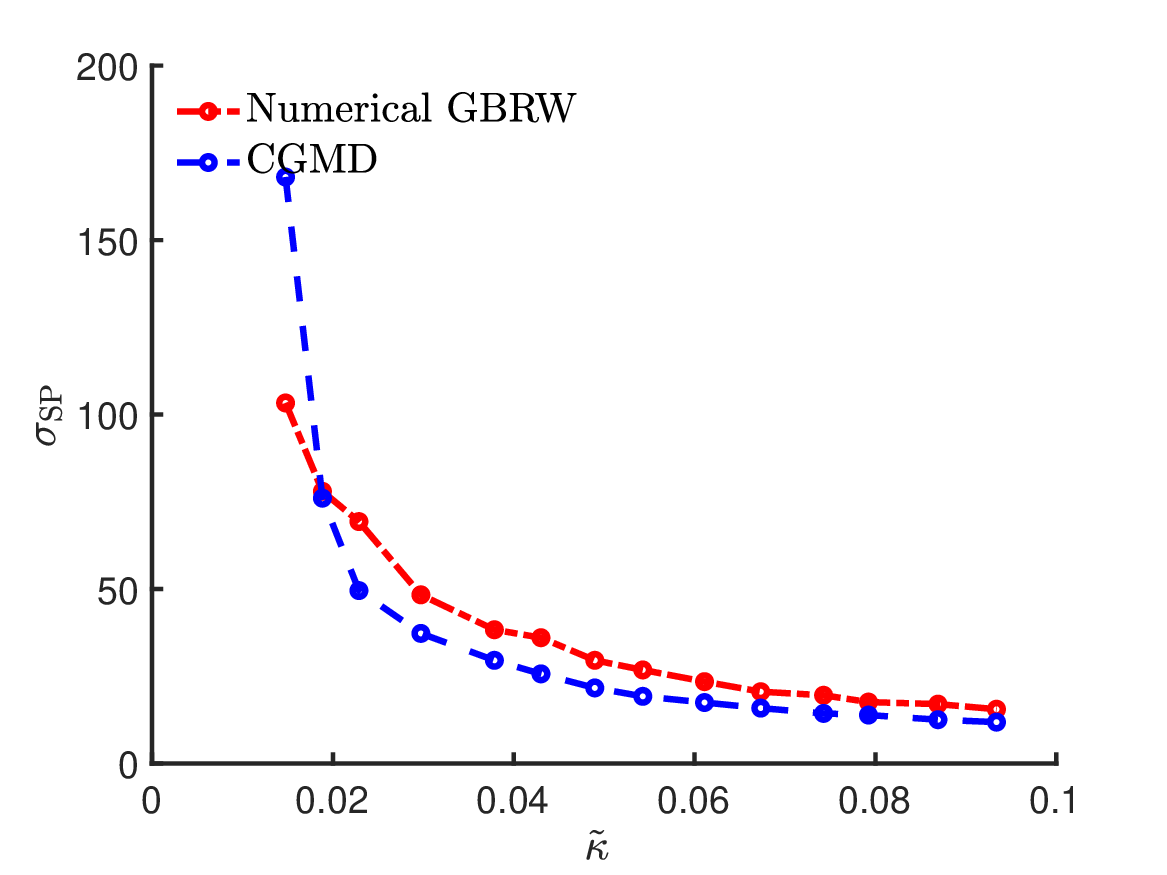}\label{fig:bbm_c1_rho_cgmd_c}}
    \caption{
    (a) $\overline{c}_1$ averaged over 10 independent CGMD configurations compared against the numerical GBRW at different branching rates $\tl$. The curves labeled \emph{Theory} refer to the analytic predictions of the FPT established in Section \ref{sec: analytic}.  Here, the \emph{BBM Theory} curve represents a proxy for the GBRW using BBM with a modified branching rate based on the termination and delayed branching regimes. \ref{sec:GBRW as BBM}. (b) Comparison of the standard deviations of the SP distribution from the CGMD simulation and of the FPT distribution from the numerical GBRW.   }
    \label{fig:bbm_c1_rho_cgmd}
\end{figure}
From the linear dependence of mean SP on $q_x$, 
we compute the $\overline{c}_1$ at different $\tl$ from the scaled GBRW model, as shown in Figure~\ref{fig:bbm_c1_rho_cgmd_b}. Figure~\ref{fig:bbm_c1_rho_cgmd_c} shows the $\sigma_{\rm SP}$ at $q_x=L_x$ as a function of different $\tl$. These results are in good agreement with the CGMD results and demonstrate that the independence and Gaussian distribution of the jumps in the GBRW model do not affect the statistics as long as the jumps are scaled to reproduce the value of MSID at $n=1/\tl$.

\section{Analytical predictions and conjectures on first passage times}\label{sec: analytic}

The numerical analysis of the BRW models in Section \ref{sec:results} demonstrated that they can capture the essential statistics of shortest paths in a polymer network.
Here we will show that some of these models (BRW and GBRW) are analytically treatable, i.e.~we can obtain analytic expressions on their first passage time (FPT) statistics.
These analytic results would provide not only a deeper understanding of the statistical behavior of shortest paths in polymer networks but also a convenient way to estimate the SP lengths given the cross-link density.
Our analysis establishes a connection between polymer physics and the extensive literature on the extremal behavior of spatial branching processes.
In particular, the inverse of parameter $\overline{c}_1$ determines the critical stretch that can be applied to the polymer before extensive bond-breaking events occur. \revision{We note that in linking $1/\overline{c}_1$ with the critical stretch, we are not implying that all chains in the polymer network deforms affinely when the material is subjected to a macroscopic stretch.  The non-affine deformation of individual subchains has been well recognized~\citep{panyukov1992microscopic,panyukov1994solid,smith1969effect,smith1974modulus} and is observed in our CGMD simulations.  However, for two nodes in the polymer network separated by a distance $q_x$, in the limit of $q_x \to \infty$ (i.e.~greatly exceeding the subchain length scale), the distance vector connecting these two nodes must deform affinely, which has also been well-recognized before~\citep{panyukov1994solid,svaneborg2005disorder}.}
%
Our analysis shows that classical polymer models that assume a periodic network topology (such as the 8-chain model) predict a much longer mean SP for a given offset distance $q_x$, hence a much larger critical stretch, compared to the more realistic models of polymer networks where cross-links are introduced randomly.

\subsection{First passage times of branching Brownian motion}\label{sec:analytic BBM}

Starting from the GBRW model considered in Section \ref{sec:cgmd as gbrw}, if we introduce a further modification where the time becomes a continuous variable (instead of being integers), then we arrive at the branching Brownian motion (BBM) model, which is even more convenient for theoretical analysis.
In the BBM model, the trajectory of each particle follows the Wiener process in $\R^d$ and they can produce new particles at any time with a fixed branching rate.
Here we define the standard BBM where the branching is binary (each branching event turns one particle into two) and the branching rate is $\tl = 1$ (and zero termination rate, $\tnu = 0$).
Denote by $M_t$ the maximal displacement of the one-dimensional standard BBM at time $t$. A classical result of \citep{bramson1978maximal} shows that
\begin{align}
    M_t=\sqrt{2}\,t-\frac{3}{2\sqrt{2}}\log t+O_\p(1).\label{eq:BBM maximum formula}
\end{align}
In the following, we present our theoretical results on FPT asymptotics for BBM and compare them against  \eqref{eq:BBM maximum formula}. We denote by $B_x$  the $d$-dimensional ball of radius one centered at $(x,0,\dots,0)\in\R^d$, which is our termination criterion for the BBM.

\begin{theorem}[FPT for standard BBM]\label{thm:BBM}
The first passage time $\tau_x$ for the standard BBM in dimension $d\geq 1$ to $B_x$ is given by 
\begin{align}
    \tau_x=\frac{x}{\sqrt{2}}+\frac{d+2}{4}\log x+O_\bP(1),\label{eq:BBM FPT formula}
\end{align}where by definition, the $O_\bP(1)$ term is tight in $x$.\footnote{A family of random variables $\{Z_x\}_{x\geq 0}$ is \emph{tight} if for every $\epsilon>0$ there are constants $b, x_0>0$ such that $\sup_{x>x_0}\bP(|Z_x|>b)<\epsilon$.}  
\end{theorem}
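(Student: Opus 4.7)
The plan is to follow the now-standard approach for extremal statistics of branching Brownian motion, adapting Bramson's argument from dimension one to the $d$-dimensional ball-hitting problem. As a sanity check, note that for $d=1$ the ball $B_x = [x-1,x+1]$ is reached by continuity at the first-passage time of the one-dimensional maximum to the level $x-1$, so the theorem reduces to inverting \eqref{eq:BBM maximum formula}: solving $x-1 = \sqrt{2}\tau_x - (3/(2\sqrt{2}))\log \tau_x + O_\bP(1)$ gives $\tau_x = x/\sqrt{2} + (3/4)\log x + O_\bP(1)$, which matches $(d+2)/4 = 3/4$. For general $d$ the cornerstone is the many-to-one lemma: for any measurable $A\subset\R^d$, the expected number of particles in $A$ at time $t$ equals $e^t\cdot\bP(W_t\in A)$, where $W_t$ is a standard $d$-dimensional Brownian motion. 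Applied to $A=B_x$ with $t = x/\sqrt{2} + a\log x$, this yields an asymptotic of order $x^{2a - d/2}$; demanding the first moment to be $\Theta(1)$ picks out the naive exponent $a = d/4$. The true coefficient $(d+2)/4$ exceeds this by $1/2$, corresponding to the celebrated Bramson correction coming from tree correlations, and producing this extra $\tfrac{1}{2}\log x$ is the technical core.

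For the \emph{upper bound} $\tau_x \le x/\sqrt{2} + (d+2)/4\,\log x + O_\bP(1)$, I would run a truncated second moment / Paley--Zygmund argument on the number of particles in $B_x$ at time $t^+ = x/\sqrt{2} + (d+2)/4\,\log x + C$, restricting to trajectories that stay below a barrier of the form ``straight-line interpolation from the origin to $(x,0,\ldots,0)$, plus a $\tfrac{1}{2}\log$-cushion'' --- the multidimensional analogue of Bramson's barrier. The restricted first moment is $\Theta(x)$; the matching second moment, obtained via the two-spine decomposition (splitting by the depth of the most recent common ancestor of two marked particles), is $O(x^2)$. Paley--Zygmund then delivers a uniform-in-$x$ lower bound on the probability that at least one particle lies in $B_x$ by time $t^+$, which upgrades to $1-o(1)$ by iterating over disjoint time blocks or by standard $0$--$1$ arguments.

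For the \emph{lower bound} $\tau_x \ge x/\sqrt{2} + (d+2)/4\,\log x - O_\bP(1)$, I would bound the expected number of particles ever to have entered $B_x$ by time $t^- = x/\sqrt{2} + (d+2)/4\,\log x - C$ and apply Markov's inequality. Via the stopped many-to-one relation, this reduces to controlling $\E[e^{\tau_1}\,\one\{\tau_1 \le t^-\}]$, where $\tau_1$ is the first-passage time of a single $d$-dimensional Brownian motion to $B_x$; the corresponding density carries an additional $t^{-(d-1)/2}$ factor compared with the one-dimensional inverse Gaussian, reflecting the cost of localising the $d-1$ transverse coordinates. Combining this with a Bramson-type barrier truncation on the excursion around the ballistic trajectory, the expectation is bounded by a quantity of order $e^{-cC}$, which vanishes as $C\to\infty$.

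The main obstacle is the second moment estimate in the upper bound. Unlike in dimension one, where the hitting particle is constrained only along the direction of motion, here it must simultaneously localise in the $d-1$ transverse coordinates, and the pair computation must correctly combine the Bramson barrier estimate in the first coordinate with the Bessel-type bridge concentration near the endpoint in the remaining coordinates. Pinning the barrier at exactly the level ``straight line plus $\tfrac{1}{2}\log x$'' so that the first moment is $\Theta(x)$ and the second moment remains $O(x^2)$ is precisely what produces the coefficient $(d+2)/4$ rather than any intermediate value; verifying that this tuning survives the additional transverse localisation cost is where the higher-dimensional geometry genuinely enters, and where I would expect the argument to require the most care.
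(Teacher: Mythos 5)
Your outline is mathematically sound---the $d=1$ inversion check, the first-moment exponent $a=d/4$, and the extra $\tfrac12\log x$ attributed to the Bramson correction are all correct---but you are proposing a self-contained, Bramson-style two-moment proof, which is a genuinely different (and much heavier) route than the paper's. The paper never performs a barrier or second-moment computation: for the upper bound it simply invokes the known location of the front of the multidimensional Fisher--KPP equation (Theorem \ref{thm:BBMball}, assembled from G\"artner, Ducrot, Roquejoffre--Rossi--Roussier-Michon and Mallein), which says that the probability of finding a particle within distance $1$ of $\bx$ at time $t$ passes from $<\ee$ to $>1-\ee$ as $t$ crosses $A(\n{\bx})=\n{\bx}/\sqrt2+((d+2)/4)\log \n{\bx}$, so the upper bound on $\tau_x$ is immediate. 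For the lower bound, instead of your stopped many-to-one estimate with a barrier truncation, the paper argues softly: if $\tau_x$ were much smaller than $A(x)$, the sub-BBM descending from the first particle to hit $B_x$ would have run for a long time $s$ without any particle of the whole process returning to a neighbourhood of $\bx$, and by \"{O}z's density theorem the probability that a BBM entirely vacates a unit ball around its starting point by time $s$ decays like $e^{-s/2}$; a union bound over time windows (with a modulus-of-continuity estimate to discretize) finishes the proof. Your approach buys independence from the PDE/front-propagation literature and a template that transfers to the discrete-time BRW setting of Section \ref{sec:analytic:BRW}; the paper's approach buys a two-page proof. One caution if you carry out your plan: at the critical time the \emph{unrestricted} first moment of particles in $B_x$ is of order $x$, and it is exactly the ballot-type barrier restriction that removes this factor, so you must check that your stated normalizations (restricted first moment $\Theta(x)$, second moment $O(x^2)$) are consistent with the cushion you fix---only the ratio enters Paley--Zygmund, but the bookkeeping of the $d-1$ transverse coordinates is where the tuning is most fragile.
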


The proof of this theorem is given in \ref{sec:proofBBM}.
To achieve greater mathematical generality, we may use a standard scaling argument to obtain the following asymptotes for BBM with a generic branching rate $\kappa$ (different from the branching rate $\tl$ for the BRW) and a generic diffusivity constant $s$ (different from the step size $\sigma$ for the BRW).\footnote{By definition, a Brownian motion of diffusivity $s$ is equivalent in distribution to a standard Brownian motion scaled by $s$. In other words, the variance at time $t$ is $s^2t$.}

\begin{corollary}[FPT for BBM]\label{coro:bbm general}
  Consider a BBM in $\R^d$ with branching rate $\kappa>0$ and diffusivity $s>0$, then its first passage time $\tau_{\kappa,s}(x)$ to $B_x$ is given by
\begin{align}
    \tau_{\kappa,s}(x)=\frac{x}{s\sqrt{2\,\kappa}}+\frac{d+2}{4\,\kappa}\log\left(\frac{x}{s}\right)+O_{\bP}(1),\label{eq:bbm asymp}
\end{align}
where by definition, the $O_{\bP}(1)$ is tight in $x$ for each fixed $\kappa$ and $s$.
\end{corollary}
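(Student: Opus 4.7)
The plan is to deduce Corollary \ref{coro:bbm general} from Theorem \ref{thm:BBM} via a standard time-and-space scaling argument. Let $(X_t)_{t \geq 0}$ denote a BBM in $\R^d$ with branching rate $\kappa$ and diffusivity $s$, and set
\[
  Y_u \;:=\; \frac{\sqrt{\kappa}}{s}\,X_{u/\kappa}, \qquad u \geq 0.
\]
I would first verify that $(Y_u)_{u \geq 0}$ is a standard BBM, i.e.~has branching rate $1$ and diffusivity $1$. The time change $t = u/\kappa$ turns a Poisson branching clock of rate $\kappa$ (in $t$) into one of rate $1$ (in $u$); meanwhile, along each particle's trajectory $X_t$ has variance $s^2 t = (s^2/\kappa)\,u$, so multiplying by $\sqrt{\kappa}/s$ yields unit-variance Brownian motion in $u$-time. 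Since the offspring distribution is unchanged (binary in both cases), $Y$ is indeed a standard BBM.

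Next I would translate the first passage event. The process $X$ hits $B_x$ (the unit ball centered at $(x,0,\dots,0)$) at time $\tau_{\kappa,s}(x)$ if and only if $Y$ hits the ball $\tilde B$ of radius $r := \sqrt{\kappa}/s$ centered at $(x\sqrt{\kappa}/s,\,0,\dots,0)$ at time $u = \kappa\,\tau_{\kappa,s}(x)$. Denoting by $\tilde\tau^{(r)}_y$ the FPT of a standard BBM to the radius-$r$ ball centered at $(y,0,\dots,0)$, we obtain the identity
\[
  \kappa\,\tau_{\kappa,s}(x) \;\laweq\; \tilde\tau^{(r)}_{x\sqrt{\kappa}/s}.
\]

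The one non-immediate point is that Theorem \ref{thm:BBM} is stated for the unit ball ($r=1$), whereas here $r = \sqrt{\kappa}/s$ is a fixed positive constant that generally differs from $1$. I would handle this by a sandwich argument: since the radius $r$ is independent of $x$, the FPT to $\tilde B$ differs from the FPT to the unit ball centered at the same point by a quantity that is tight in $x$. Concretely, bracketing $\tilde B$ between two unit balls whose centers lie within distance $|r-1|+1$ of $(x\sqrt{\kappa}/s,0,\dots,0)$, and using Theorem \ref{thm:BBM} for each of these (together with the rotational and translational symmetry of standard BBM and the fact that shifting the target center by $O(1)$ changes the leading $\frac{y}{\sqrt 2}$ and $\frac{d+2}{4}\log y$ terms only by $O(1)$ as $y \to \infty$), I conclude
\[
  \tilde\tau^{(r)}_{x\sqrt{\kappa}/s}
  \;=\; \frac{x\sqrt{\kappa}/s}{\sqrt{2}} + \frac{d+2}{4}\log\!\left(\tfrac{x\sqrt{\kappa}}{s}\right) + O_{\bP}(1).
\]

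Finally, dividing by $\kappa$ and absorbing the constant $\tfrac{d+2}{8\kappa}\log\kappa$ into the $O_{\bP}(1)$ term (legitimate because $\kappa$ is fixed) gives
\[
  \tau_{\kappa,s}(x) \;=\; \frac{x}{s\sqrt{2\kappa}} + \frac{d+2}{4\kappa}\log\!\left(\tfrac{x}{s}\right) + O_{\bP}(1),
\]
which is \eqref{eq:bbm asymp}. The genuinely substantive input is Theorem \ref{thm:BBM}; the main thing to be careful about in this corollary is the radius-change step, which I expect to be the only place where one might worry about uniformity, and it is resolved by noting that $r=\sqrt{\kappa}/s$ is bounded and fixed, so the comparison cost is $O_{\bP}(1)$.
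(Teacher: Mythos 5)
Your scaling step is correct and is in substance identical to the paper's: the paper composes the two self-similarity relations $\tau_{\kappa,s}^1(x)\dd \tau_{\kappa,1}^{s^{-1}}(x/s)$ and $\tau_{\kappa,s}^1(x)\dd \kappa^{-1}\tau_{1,s}^{\sqrt{\kappa}}(\sqrt{\kappa}x)$, which together give exactly your identity $\kappa\,\tau_{\kappa,s}(x)\laweq\tilde\tau^{(r)}_{x\sqrt{\kappa}/s}$ with $r=\sqrt{\kappa}/s$, and your final bookkeeping (absorbing $\tfrac{d+2}{8\kappa}\log\kappa$ into the $O_{\bP}(1)$) is fine. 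The one place where your argument as written has a gap is the radius-change step: a ball of radius $r\neq 1$ cannot be bracketed between two unit balls in the set-inclusion sense, so the "sandwich" delivers only one of the two required bounds. Concretely, if $r>1$ you can place a unit ball inside $\tilde B$, which upper-bounds the FPT, but the lower bound requires covering $\tilde B$ by finitely many unit balls and applying a union bound (the same device the paper uses for the radius-$3$ ball in the proof of Theorem \ref{thm:BBM}); if $r<1$ you can enclose $\tilde B$ in a unit ball, which lower-bounds the FPT, but the upper bound then needs a separate argument that once some particle is within distance $1$ of the target center, a descendant enters the radius-$r$ sub-ball within an additional $O_{\bP}(1)$ time. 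Both repairs are routine, and the paper sidesteps them by simply observing that the proof of Theorem \ref{thm:BBM} (via Theorem \ref{thm:BBMball} and the density estimate of \citep{oz2023}, both valid for any fixed radius) goes through verbatim with $B_x$ replaced by a ball of fixed radius $r>0$. So your route is the same as the paper's up to this one step, which as stated would fail in one direction depending on the sign of $r-1$; either invoke the fixed-radius version of Theorem \ref{thm:BBM} directly, or supply the cover/local-hitting arguments above.
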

Corollary \ref{coro:bbm general} has a number of consequences that lead to a better understanding of the BRW models. As a simple example, we showcase how the delayed branching BRW with uniform jumps on $\bS^2$ can be approximated using a BBM. The quantity $x$ in \eqref{eq:bbm asymp} corresponds to the $q_x$ in the SP and FPT analysis in the previous sections, and the first two terms on the right-hand side of  \eqref{eq:bbm asymp} correspond to $\mu_{\rm SP}(q_x)$. Since BBM is isotropic in 3 dimensions, the central limit theorem yields that each step of the BRW can be approximated by a Gaussian vector $\mathrm{N}(\z,\sigma^2 I_3/3)$. In other words, we apply Corollary \ref{coro:bbm general} with diffusivity $s=\sigma/\sqrt{3}$. Next, we will derive in \ref{sec:GBRW as BBM} that the BBM branching rate $\kappa$ well approximates a BRW branching rate $\tl$ if $\kappa$ solves $e^\kappa(\kappa+\tnu)=2\,\tl$. Note that in the limit of $\tnu,\tl\to 0$, we have the approximation $\kappa\approx 2\,\tl$ (this is because, in our BRW models, every branching event produces two new particles instead of one new particle in the standard BBM model).

The asymptotic \eqref{eq:bbm asymp} will be numerically validated in \ref{sec:bbm validation} for diffusivity $s=1$, which corresponds to a random walk with jump length of $\sqrt{3}$. We leave it as a mathematical challenge for future studies to analyze the structure of the $O_\bP(1)$ term: whether it converges in law, and even whether it is of the form $c+o_\p(1)$ for some constant $c$ (as is the case for the one-dimensional maximum $M_t$).
We note that  \eqref{eq:bbm asymp} provides a theoretical justification for our numerical finding that the standard deviation of SP in the polymer network is much less than its mean at large $q_x$.

The right-hand side of   \eqref{eq:bbm asymp} for the FPT $\tau_{\kappa,s}(x)$ does not appear to be linear in $x$, in contrast to \eqref{eq:SP linearity}. In the limit as $x\to\infty$,  the quantity $\overline{c}_1$ approximately equals $s\sqrt{2\,\kappa}$ where $s=\sqrt{\mathrm{MSID}(1/\tl)/3}$, whereas in the linear fitting executed in our numerical analysis, the effect from the logarithm term is not negligible at $x=q_x$.
Nonetheless, we can still use \eqref{eq:bbm asymp} to provide an estimate of the $\overline{c}_1$ parameter obtained from the linear fit,
\begin{equation}
 {\overline{c}_1} \approx \left( \frac{1}{s\,\sqrt{2\,\kappa}} 
    +\frac{d+2}{4\kappa\, \hat{q}_x} \right)^{-1}. \label{eq:bbm_lc}
\end{equation}
where $\hat{q}_x$ lies somewhere inside the range of $q_x$ where the linear fit is performed.

\subsection{Conjectural asymptotics for branching random walk models}
\label{sec:analytic:BRW}

In Section~\ref{sec:analytic BBM}, we established the asymptotes for the maximum \eqref{eq:BBM maximum formula} and the FPT \eqref{eq:BBM FPT formula} for the BBM. In particular, the asymptote \eqref{eq:BBM FPT formula} with $d=1$ is precisely the inversion of \eqref{eq:BBM maximum formula}, in the sense that $M_{\tau_x}\approx x$ and $\tau_{M_t}\approx t$. This crucial \emph{inversion relation} motivates several conjectures for the FPT of the BRW models of interest. Before stating these conjectures, we need to understand the asymptotics of the maximum of the BRW models we introduced. 

The key difference between BRW and BBM models is that in BRW the time is discrete, which makes the analysis more difficult.
We will use integer $n$ to represent the time in BRW models.
In the following, we will state a theorem on the maximal displacement for the BRW models, which should be compared against  \eqref{eq:BBM maximum formula}.
We will work in a general dimension $d\geq 1$ and only impose mild assumptions on the jump distribution, while the termination and delayed branching schemes remain.
First, we introduce a large deviation rate function to characterize the jump vector $\bxi$ at each step of the BRW models.
We assume that the distribution of $\bxi$ is rotationally invariant.
Denote by $\xi$ the first coordinate of $\bxi$, which is a real-valued random variable. The large deviation rate function is defined as
\begin{align}
    I(x):=\sup_{\lambda>0}\Big(\lambda x-\log\phi_\xi(\lambda)\Big),\label{eq:ratef}
\end{align} 
where $\phi_\xi(\lambda):=\E[e^{\lambda\xi}]$ is the moment generating function for $\xi$ (we assume implicitly that this is well-defined for $\lambda\in\R$).
For example, for the (unscaled) GBRW model we have $I(x)=x^2/2$.

\begin{theorem}[maxima for one-dimensional delayed branching BRW]\label{thm:brw maximum}In the above setting, suppose that $(\tl,\tnu)$ satisfies $\tl+\tnu\leq 1$ and $2\tl(1-\tnu)>\tnu$. Let $M_n$ denote the maximum of the first coordinate of the $(\tl,\tnu)$-BRW. Conditioned upon survival,
    \begin{align}
        M_n=c_1n-\frac{3}{2\,c_2}\log n+O_\p(1) \, , 
        \label{eq:BRW maximum formula}
    \end{align}
where
\begin{align}
   \rho:= {\rho}(\tl,\tnu)=\frac{1-\tnu }{2}+\sqrt{\frac{(1-\tnu)^2}{4}+2\tl(1-\tnu )}\label{eq:newrho}
\end{align}
and $c_1$ and $c_2$ are constants satisfying the following equations:
\begin{align}
    I(c_1) &= \log  \rho \, , \label{eq:c1}\\
    c_2 &= I'(c_1) \, .\nonumber
\end{align}
\end{theorem}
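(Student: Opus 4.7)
The plan is to recast the delayed-branching, killed BRW as an irreducible two-type branching random walk, reduce the spatial analysis to a tilted single-spine random walk, and then invoke the classical Addario-Berry--Reed/A\"\i d\'ekon barrier estimate to extract the logarithmic correction. First I would introduce type $A$ for regular walkers and type $B$ for the intermediate descendant born at a branching event (which, if not killed, splits into two type-$A$ walkers one step later). Enumerating the one-step offspring counts with independent termination probability $\tnu$ yields the mean matrix
\[
M=\begin{pmatrix} 1-\tnu & 2(1-\tnu) \\ \tl & 0 \end{pmatrix},
\]
whose Perron--Frobenius eigenvalue is exactly the $\rho$ of \eqref{eq:newrho}; the hypothesis $2\tl(1-\tnu)>\tnu$ is equivalent to $\rho>1$, i.e.\ supercriticality, which justifies the conditioning on survival in the statement.

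Next I would derive the many-to-one formula for this multi-type BRW. Because the spatial increments are type-independent and rotationally invariant, the spine's spatial trajectory is a sum of i.i.d.\ copies of $\xi$, while the type dynamics only contribute a bounded prefactor through the left/right Perron eigenvectors of $M$. Consequently, for each slope $x$,
\[
\E\,\#\bigl\{v\in\T_n:S_v(1)\approx xn\bigr\}\asymp\rho^n e^{-nI(x)},
\]
with $I$ the Cram\'er transform in \eqref{eq:ratef}. Requiring the exponent to be of order $O(1)$ pins down the speed via $I(c_1)=\log\rho$, and by Legendre--Fenchel duality the optimal exponential tilt parameter is $c_2=I'(c_1)$.

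The upper bound $M_n\le c_1 n-\tfrac{3}{2c_2}\log n+O_\p(1)$ then follows from a first-moment Markov bound after an exponential change of measure by $e^{c_2 S_v(1)}$. The crucial additional input is the ballot-type estimate for the tilted (now centered) random walk: the probability that the lineage stays below an asymptotic ceiling up to time $n$ decays like $n^{-3/2}$. Combined with the normalization $\rho^n e^{-nI(c_1)}=1$, this generates the $-\tfrac{3}{2c_2}\log n$ correction and an $e^{-c_2 K}$ tail uniform in $n$.

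For the matching lower bound conditional on survival, I would employ a truncated second-moment argument in the spirit of A\"\i d\'ekon--Shi: count particles whose tilted trajectories remain in a wedge under the asymptotic line, and compare first and second moments using the multi-type many-to-one and two-spine decompositions. The main technical obstacle I anticipate is the second-moment estimate, which requires controlling the joint type process on two lineages after their most recent common ancestor; however, primitivity of $M$ and type-independence of the spatial jumps permit summing out the types and reduce the calculation to the classical single-type pair-of-spines estimate with effective mean $\rho$, yielding the classical $n^{-3/2}$ matching. The lower bound finally transfers from a single-subtree estimate to the full conditioning on survival via the branching property together with Kesten--Stigum--type convergence of the type-weighted martingale.
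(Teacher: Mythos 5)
Your proposal is correct and takes essentially the same route as the paper: the paper reduces the theorem to the classical single-type result of Bramson--Ding--Zeitouni by noting that the delayed branching and termination enter the argument only through the first- and second-moment estimates on the genealogy, and it supplies the required growth rate via a $2\times 2$ recursion for the expected particle counts whose leading eigenvalue is $\rho$ --- your two-type mean matrix is precisely the transpose-equivalent of that recursion matrix, with the same characteristic polynomial $\lambda^2-(1-\tnu)\lambda-2\tl(1-\tnu)$, and your observation that $2\tl(1-\tnu)>\tnu$ is equivalent to $\rho>1$ matches the paper's supercriticality condition. The remaining machinery you sketch (many-to-one with a type-independent spine, ballot estimate for the $n^{-3/2}$ correction, truncated second moment, and the Kesten--Stigum transfer to the survival event) is exactly the content of the cited classical proof that the paper invokes rather than reproduces.
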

\noindent For example, for the (unscaled) GBRW model we have $c_1 = c_2 = \sqrt{2 \log  \rho}$. The form of \eqref{eq:newrho} is slightly more involved due to the delayed branching property. Indeed, for BRW without delayed branching, \eqref{eq:newrho} is replaced by $\rho=1+2\tl-\tnu$, a well-known result in the literature \citep{addario2009minima}. A derivation of the formula \eqref{eq:newrho} can be found in the proof of Lemma \ref{lemma:number1} in \ref{sec:proofother}.

Intuitively, $\rho$ is the parameter that indicates the rate of growth of the number of particles: at time $n$, we expect that the number of particles grows like $\rho^n$ conditioned upon survival. 
The existence and uniqueness of $c_1$ in \eqref{eq:c1} is a consequence of the assumption  $2\tl(1-\tnu)>\tnu$. Indeed, this implies $\rho>1$, and we recall that $I$ is strictly increasing, concave, and continuous on $[0,\infty)$, and $I(0)=0$.

For a large $n$, the linear coefficient $c_1$ in \eqref{eq:BRW maximum formula} describes the effective velocity of the maximum of the BRW. Let us briefly explain why intuitively we expect that the effective velocity $c_1$ satisfies 
$I(c_1)=\log\rho$. Suppose that the locations of the particles at time $n$ are independent.\footnote{Of course, this is a wrong hypothesis, since two paths have the same displacements before they branch. This partly explains the logarithm correction term.} By Cram\'{e}r's theorem (see \citep{dembo2009large}), the probability of finding a certain particle located around $c_1n$ at time $n$ is roughly $e^{-(I(c_1)+o(1))n}$. Since we expect around $\rho^n$ particles at time $n$, the total number of particles near $c_1n$ at time $n$ can be estimated by $\approx \rho^ne^{-I(c_1)n}=1$, meaning that the maximum reach of the particles is close to $c_1n$ at time $n$.

It is instructive to compare \eqref{eq:BRW maximum formula} with \eqref{eq:BBM maximum formula}.
For example, \eqref{eq:BRW maximum formula} reduces to 
\eqref{eq:BBM maximum formula}
if $c_1=c_2=\sqrt{2}$.
%
Intuitively, we may consider BBM as a generalization of the GBRW model (i.e.~with Gaussian increments) with $I(x)=x^2/2$ to continuous time.
For a standard BBM model, we expect the number of particles to grow as $e^t$, and hence $\rho=e$. This amounts to $c_1=c_2=\sqrt{2\log\rho}=\sqrt{2}$.

In view of the inversion relation in the BBM model between the FPT in \eqref{eq:BBM FPT formula} and the maximal displacement in \eqref{eq:BBM maximum formula}, we pose the following conjecture for the BRW model.

\begin{conjecture}[FPT for delayed branching BRW]\label{thm:main}
In the above setting, suppose that $(\tl,\tnu)$ satisfies $\tl+\tnu\leq 1$ and $\rho(\tl,\tnu)>1$. Conditioned upon survival, the first passage time $\tau_x$ to $B_x$ satisfies the asymptotic
\begin{align}
    \tau_x=\frac{x}{c_1}+\frac{d+2}{2\bl c_1}\log x+O_\bP(1).\label{eq:tauxasymp}
\end{align}

\end{conjecture}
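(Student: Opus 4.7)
The strategy is to combine the one-dimensional maximum asymptotic from Theorem~\ref{thm:brw maximum} with a transverse (Gaussian-window) correction accounting for the $d$-dimensional hitting criterion, and then verify both halves by a first- and second-moment argument on the number of particles inside $B_x$. Morally, $\tau_x$ is the smallest $n$ at which some particle lies in $B_x$; the longitudinal (first-coordinate) direction is controlled by inverting the Bramson-type asymptotic for $M_n$, while the $d-1$ transverse directions contribute an additional logarithmic correction coming from the price paid to land inside a unit ball.

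For $d=1$ the argument is pure inversion. From $M_n = c_1 n - \frac{3}{2c_2}\log n + O_{\bP}(1)$, solving $M_n \approx x$ for $n$ gives
\begin{align*}
\tau_x = \frac{x}{c_1} + \frac{3}{2c_2 c_1}\log x + O_{\bP}(1),
\end{align*}
which matches \eqref{eq:tauxasymp} since $(d+2)/2 = 3/2$ at $d=1$. The rigorous implementation requires matching upper and lower bounds on $\tau_x$ obtained from tail estimates on $M_n$ and (for the lower bound) a standard barrier argument.

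For general $d \geq 2$ I would use a moment method. Set $n_{\pm}(x) := x/c_1 + \frac{d+2}{2c_2 c_1}\log x \pm K$ for a large constant $K$. To prove $\bP(\tau_x > n_+(x)) \to 0$ as $K\to\infty$, apply the many-to-one formula to estimate the expected number of particles inside $B_x$ at time $n_+(x)$. A typical particle at position $(y_1,\ldots,y_d)$ with $y_1 \approx c_1 n_+$ incurs a Cram\'er penalty $e^{-I(c_1)n_+} = \rho^{-n_+}$ for its longitudinal deviation, compensated by the mean population size $\rho^{n_+}$ (with $\rho$ as in \eqref{eq:newrho}). In addition, the transverse coordinates behave essentially as an independent random walk in $\R^{d-1}$ with variance $\Theta(n_+)$ at the terminal time, so the probability of landing in the unit ball centered at the origin contributes a Gaussian-window factor of order $n_+^{-(d-1)/2}$. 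Balancing this factor yields the extra $\frac{d-1}{2c_2 c_1}\log x$ term on top of the $\frac{3}{2c_2 c_1}\log x$ from the pure inversion, summing to the claimed $\frac{d+2}{2c_2 c_1}\log x$. The matching lower bound $\bP(\tau_x < n_-(x)) \to 0$ would be obtained by a second-moment computation restricted to particles that stay under a Bramson-type barrier, in the spirit of Addario-Berry--Reed and A\"{\i}d\'{e}kon.

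The main obstacle is proving the $O_{\bP}(1)$ tightness rigorously. It requires joint control of particles that are simultaneously near-extremal in the first coordinate \emph{and} bounded in the transverse directions, together with the correlation structure induced by common ancestry. The second-moment bound for the lower tail is the delicate ingredient: one must show that pairs of particles sharing a late common ancestor do not dominate, which is handled through a many-to-two formula and decorrelation estimates along near-extremal trajectories (the transverse contribution is easier because, conditionally on the genealogy, it evolves as a sum of independent increments). Extending these standard arguments to the $(\tl,\tnu)$-BRW with delayed branching is technical but essentially routine, as the tree structure only perturbs $\rho$ through the identity \eqref{eq:newrho}. In contrast, upgrading tightness to convergence in law of the $O_{\bP}(1)$ term, analogous to the known convergence for $M_n$ in one dimension, seems substantially harder and is the reason the statement is posed here as a conjecture rather than a theorem.
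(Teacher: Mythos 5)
This statement is a \emph{conjecture}: the paper offers no proof of it. Its only support in the paper is the heuristic inversion of the maximum asymptotic \eqref{eq:BRW maximum formula} from Theorem~\ref{thm:brw maximum}, the analogy with the BBM result (Theorem~\ref{thm:BBM} and Corollary~\ref{coro:bbm general}, where the same $\frac{d+2}{2}$ exponent appears), and a pointer to a companion paper that proves only the weaker version \eqref{eq:realtauxasymp} with the $O_{\bP}(1)$ error relaxed to $O_{\bP}(\log\log x)$. So there is no proof in the paper to compare yours against, and your proposal does not close the gap either --- nor does it claim to.

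That said, your outline is the natural one and its bookkeeping is consistent with what the paper does prove. The $d=1$ inversion reproduces $\frac{3}{2c_2c_1}\log x$, and your transverse Gaussian-window factor $n^{-(d-1)/2}$, combined with the Bramson-type $n^{-3/2}$ longitudinal correction, gives $e^{c_2(c_1 n - x)} \asymp n^{(d+2)/2}$ and hence the claimed coefficient $\frac{d+2}{2c_2c_1}$; this matches the exponent in the BBM formula \eqref{eq:BBM FPT formula}, which the paper derives from Fisher--KPP level-set results rather than a moment method. You also correctly locate the genuinely open content: the first-moment upper bound is routine, but tightness of the $O_{\bP}(1)$ term requires a second-moment (many-to-two) estimate over pairs of particles that are simultaneously near-extremal longitudinally and $O(1)$ transversally, with decorrelation along near-extremal trajectories under the delayed-branching genealogy --- exactly the step that is not carried out anywhere, and the reason the companion paper only reaches $O_{\bP}(\log\log x)$. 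Two small cautions if you pursue this: (i) the footnoted claim that the FPT is ``precisely the inversion of the maximum'' relies on trajectory continuity and needs care for discrete jumps that could overshoot $B_x$ (harmless for unit jumps onto a radius-one ball, but not automatic for Gaussian increments); and (ii) you should verify that the Cram\'er tilting in the $e_1$ direction leaves the transverse variance of order $n$ under the tilted law, which holds by rotational invariance of $\bxi$ but should be stated. As a proof, your submission is a correct strategy sketch with the decisive lemma left unproved, which is the appropriate status for a statement the authors themselves label a conjecture.
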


In a companion paper \citep{futurepaper}, we prove a slightly weaker version of \eqref{eq:tauxasymp}, with the $O_\p(1)$ term replaced by $O_\p(\log\log x)$ under certain mild assumptions on $\bxi$ (that applies for uniform distribution on $\S^2$ and Gaussian distribution on $\R^3$). That is, we prove that conditioned upon survival, 
 \begin{align}
    \tau_x=\frac{x}{c_1}+\frac{d+2}{2\bl c_1}\log x+O_\bP(\log\log x).\label{eq:realtauxasymp}
 \end{align}
The asymptotic relation \eqref{eq:realtauxasymp} is confirmed numerically in \ref{sec:brw validation}.

The formula \eqref{eq:realtauxasymp} provides the analytic prediction of the quantity $\overline{c}_1$ that can be compared against the CGMD results, as shown in Figure \ref{fig:bbm_c1_rho_cgmd_b}.
As remarked above, $c_1$ becomes $\overline{c}_1$ only in the limit of $x = q_x\to\infty$.
At a finite offset distance $q_x$, the logarithm term in \eqref{eq:realtauxasymp} is not negligible,
\begin{equation}
 {\overline{c}_1} \approx \left( \frac{1}{c_1} 
    +\frac{d+2}{2 c_1 c_2\, \hat{q}_x} \right)^{-1}, \label{eq:gbrw_lc}
\end{equation}
where $\hat{q}_x$ lies somewhere inside the range of $q_x$ where the linear fit is performed.
Figure~\ref{fig:bbm_c1_rho_cgmd_b} shows the analytic prediction of $\overline{c}_1$ (for $\hat{q}_x = L_x/2$) as a function of $\tl$ for the BBM model (see \eqref{eq:bbm_lc}) and the GBRW model (see \eqref{eq:gbrw_lc}), which is in good agreement with both the numerical implementation of GBRW model and the CGMD results.

\subsection{Critical stretch and comparison with classical periodic network model for polymers}\label{sec:8chain}
The theoretical analysis for the BBM, GBRW, and BRW models above provides an accurate estimate of the FPT distribution that explains the SP distribution in a polymer network given the cross-link density from the CGMD simulations. 
The mean SP, in particular the parameter $\overline{c}_1$, determines the stretchability of the elastomer before the onset of significant bond-breaking events.
To see why this is the case, consider a shortest path of contour length $L_{\rm SP}$ connecting two nodes at distance $q_x$ apart.
If a stretch $\lambda$ is applied to the network but no bonds break, then the two nodes are separated by $\lambda \, q_x$ and the shortest path contour length stays at $L_{\rm SP}$.
Because the end-to-end distance of a shortest path can \revision{not} be stretched \revision{much} longer than its contour length~\footnote{\revision{Due to enthalpic stretching, we found that in our CGMD simulations, the SP paths can be stretched upto 10\% longer than its original contour length before breaking.  But this is a small effect that will be ignored in this discussion.}}, we have $\lambda \, q_x \le L_{\rm SP}$, i.e.~$\lambda \le L_{\rm SP} / q_x \approx 1/c_1$.
Therefore, we can define $\lambda_c = 1/c_1$ as the critical stretch that can be applied to the elastomer before extensive bond-breaking occurs.

We now discuss how the critical stretch $\lambda_c$ depends on the cross-link density, characterized by parameter $\tl$.
For simplicity, here we shall ignore the effect of correlation between consecutive jumps in the random walk, i.e., modeling the polymer chain as an uncorrelated random walk ($\sigma = 1$).
We pick the BBM model (at $d=3$) with unit jump length (corresponding to $s=\sqrt{{1}/{3}}$), where the analytic expression is the simplest.
In the limit of $\tnu,\tl\to 0$, $\kappa \approx 2\tl$.
In the limit of large $q_x$, 
we have $c_1 \to s \sqrt{2 \kappa} \approx \sqrt{{4\,\tl}/{3}}$ and hence
 \begin{equation}
     \lambda_c^{\rm BBM}\approx\sqrt{\frac{3}{4\,\tl}} \, . 
 \end{equation}
On the other hand, if one assumes the periodic network structure of the 8-chain (a.k.a.~Arruda-Boyce) model~\citep{arruda1993three}, it can be easily shown that $c_1 = \sqrt{\tl/3}$ and hence\footnote{Strictly speaking, within the 8-chain model, the contour length of the shortest path can never be exactly straight because it must pass through the body-center of the unit cell.  Accounting for this constraint leads to an implicit equation for the critical stretch, ${\lambda_c^2+{2}/{\lambda_c}}= {{3}/{\tl}}$, the solution of which is close to (\ref{eq:lc_bbm}) in the limit of large $\lambda_c$.}
 \begin{equation}
     \lambda_c^{8\text{-chain}}\approx\sqrt{\frac{3}{\tl}}. \label{eq:lc_bbm}
 \end{equation}

Note that both models predict the same $\tl^{-1/2}$ scaling with the cross-link density.  However, $\lambda_c^{8\text{-chain}} / \lambda_c^{\rm BBM} \approx 2$.  This means that relative to the more realistic model that accounts for the random distribution of cross-links, the 8-chain model overestimates the critical stretch for extensive bond-breaking by about a factor of 2 (in the limit of $q_x \to \infty$, $\tnu, \tl \to 0$, see \ref{app:abm_comp} for comparison against other spatial branching models).

\begin{figure}[ht!]
    \centering
        \subfigure[]{\includegraphics[trim={1.5cm 0 1cm 0.4cm},clip,width=\textwidth]{\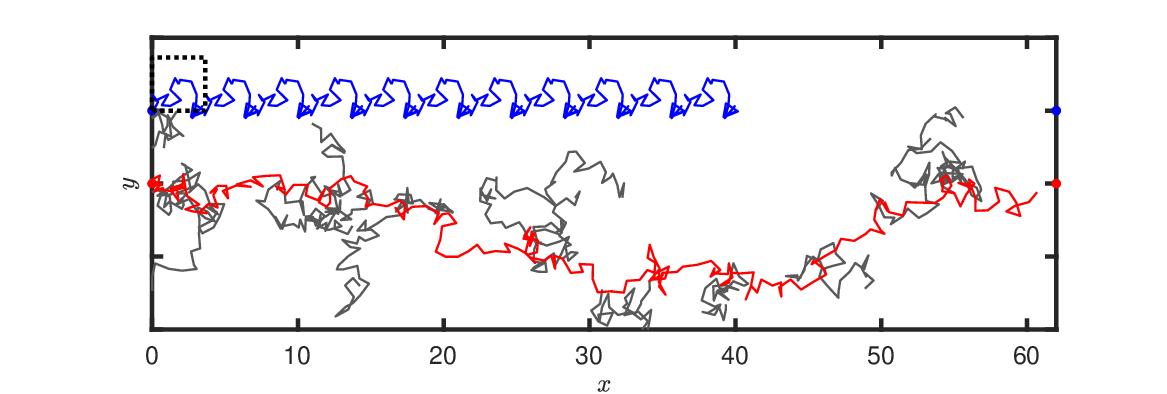}\label{fig:small_sp_traj}}
        \vspace{-1em}
        
        \subfigure[]{\includegraphics[trim={1.5cm 0 1cm 0.4cm},clip,width=\textwidth]{\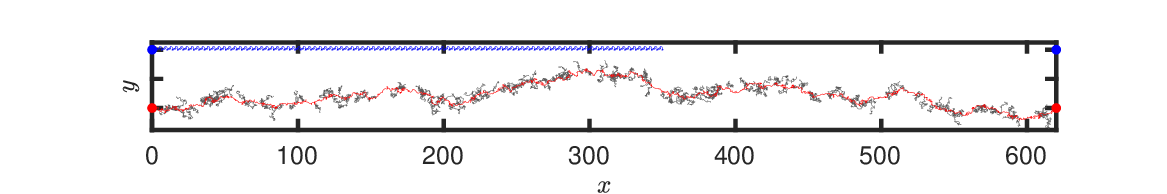}\label{fig:large_sp_traj}}
    \caption{
    Shortest path trajectory for the BRW model (red path with selected branches shown in black) and 8-chain model (blue path) at $\tl=0.1$ for the same amount of time it takes for the BCRW path to reach its destination for (a) $q_x = 62$ and (b) $q_x = 620$.  }
    \label{fig:sp_traj}
\end{figure}
To provide an intuitive understanding of the difference between BRW (approximated by the BBM estimates in \eqref{eq:bbm asymp}) and the periodic model of the polymer network, we look at the trajectory of the shortest path from the BRW model and the 8-chain model for $\tl=0.1$ and $\tnu=2/500$ for which $\kappa\approx1.65\,\tl$. Here we expect from \eqref{eq:bbm asymp} that $\lambda_c^{8\text{-chain}} / \lambda_c^{\rm BBM}\approx 1.82$ (in the limit of $q_x \to \infty$). Figure~\ref{fig:small_sp_traj} shows the trajectory of the shortest path generated by the BRW model (shown in red). A few side branches connected to the shortest path are shown in black.
In this case, the mean shortest path is able to reach a distance of $q_x = 62$ 
in $220$ steps.
The blue path in Figure~\ref{fig:small_sp_traj} shows the shortest path in a network consisting of periodic repeating cells (dashed line) each containing eight chains (only two chains are shown for clarity).
In this case, the shortest path only reaches a distance of $q_x^{8\text{-chain}} \approx 40$ in $220$ steps. 
In other words, the shortest path in the 8-chain model of the polymer network is much more tortuous than that predicted by the more realistic BRW models.
As a result, the 8-chain model would overestimate the critical stretch by which extensive bond-breaking events occur. Here, the ratio of the distance covered in the same duration as the FPT of the BRW ($q_x/q_x^{8\text{-chain}}$) is $\approx 62/40 \approx 1.54$. On increasing the $q_x$ to 620, this ratio increases to $\approx 1.77$, as shown by the paths in Figure~\ref{fig:large_sp_traj}.
This is smaller than the theoretical estimate of $\lambda_c^{8\text{-chain}} / \lambda_c^{\rm BBM}\approx 1.82$  because in this illustration $q_x$ is still relatively small.
We have numerically verified that for a very large offset distance $q_x$ the limit $\lambda_c^{8\text{-chain}} / \lambda_c^{\rm BBM}\approx 1.82$ is indeed recovered.

The 8-chain model has been very successful in predicting the elastic (i.e.~reversible) responses of elastomers, which is the result of the aggregated response of all polymer chains in the network. However, at the critical stretch, $\lambda_c$, we are entering the regime of irreversible strain-induced damage.
In this regime, the polymer chains that lie on the shortest paths play a decisive role because they constrain the maximum stretch that can be applied unless bond breaking occurs.
The spatially branching processes provide a more realistic description of the length distribution of shortest paths and hence a more physical description of the polymer as it enters the irreversible strain-induced damage regime.

\section{Conclusions and outlook} 
\label{sec:conclusion}
In this paper, we present a class of branching random walk (BRW) models whose first passage time (FPT) statistics are used to generate shortest path length (SP) statistics to understand the structure of polymeric networks as modeled by CGMD simulations. The effective branching rate (for a given cross-link density) in our BRW models is obtained from the inter-cross-link chain length distribution from the CGMD simulation cell. 
We analyze the FPT of the BRW and relevant models from both numerical and theoretical aspects. 
The numerical simulations show that multiple BRW models, with various levels of idealizations, are able to reproduce the SP distribution (in terms of both mean and standard deviation) of the polymer model as modeled by CGMD.
Our theory yields an explicit relation between the mean SP (FPT) and the offset distance $q_x$ as a function of the cross-link density. The theoretical estimate validates the results from our numerical approach and is in good agreement with the CGMD calculations.
The FPT or SP dependency on the offset distance at equilibrium serves as an indicator of the stretch limit or stretchability of the polymer. The theoretical estimate of the FPT from the spatial branching processes shows a much lower stretch limit as compared to idealized representations of polymer networks using periodic repeating structures. This shows that treating the polymer network as a BRW captures a more realistic response of the material response in the regime of extensive bond-breaking.
\revision{We expect our work to complement existing approaches~\citep{higgs1988polydisperse} to connect the microscopic statistical properties of the polymer network with the macroscopic stress-strain response of the elastomeric material.}

In this paper, we have analyzed polymer networks prepared at equilibrium, before any deformation and strain-induced damage, and the SP distribution is expected to be isotropic.
As a result, the BRW has been modeled with isotropic jumps in space. 
The next step is to see whether the BRW models can be used to understand the evolution of SP statistics as bonds break by deformation.
In elastomers with irreversible cross-links, we hypothesize that the reduction in the number of bonds (cross-link density) with loading may be modeled by a modification in the branching rate of the BRW or by introducing anisotropy in the jumps as governed by the directional dependence of the SP. Reversibly cross-linked systems present a challenge since there is no reduction in the cross-link density (broken cross-links readily reform at ultrafast timescales). It appears that the BRW would have to model anisotropy in each jump to account for the evolution in the SP distribution. 
Modeling the SP evolution with strain in polymers with different types of cross-links would be the next challenge in representing polymer network evolution using spatial branching processes.

\section*{Conflict of Interest}
The authors declare no competing interests.

\section*{Acknowledgements}
We thank Haotian Gu and Lenya Ryzhik for their helpful discussions.  The material in this paper is based upon work supported by the Air Force Office of Scientific Research under award number FA9550-20-1-0397. Additional support is gratefully acknowledged from NSF 1915967, 2118199, 2229012, 2312204.

\section*{Data Availability}
All details are available in the main text and the Supplementary appendices. The open source code for the branching random walk calculations along with the reference CGMD and theoretical predictions can be accessed from \href{https://gitlab.com/micronano_public/PolyBranchX}{PolyBranchX}.

\appendix

\theoremstyle{plain}

\newpage

\section{Proofs of results in Section \texorpdfstring{\ref{sec: analytic}}{}}\label{sec:proofs}

\subsection{Proof of Theorem \texorpdfstring{\ref{thm:BBM}}{} and related discussions}\label{sec:proofBBM}

 In this appendix, we prove Theorem \ref{thm:BBM}. Let us first summarize a few preliminary results from the literature. In the following, we work in a general dimension $d\geq 1$ and assume the BBM is standard. 
Denote by $M_t$ the maximal displacement of the BBM at time $t$. The first precise asymptotic in dimension $d=1$ for $M_t$ was given by \citep{bramson1978maximal}, whose proof was later considerably simplified by \citep{roberts2013simple}. More precisely, we have
\begin{align}
    M_t=\sqrt{2}\,t-\frac{3}{2\sqrt{2}}\log t+O_\p(1).
\end{align}Finer behavior near the frontier was analyzed by \citep{aidekon2013branching} and \citep{arguin2013extremal}. In dimension $d>1$, we mention the very recent works of \citep{kim2023maximum} and \citep{berestycki2021extremal} on characterizing the behavior of the maximal norm of the BBM in $\R^d$, where the precise asymptotic was first given by \citep{mallein2015maximal}.

The BBM is intimately connected to the \emph{Fisher-KPP equation} introduced by  \citep{fisher1937wave,kolmogorov1937etude}---it is shown by \citep{mckean1975application} that $v(t,x)=\p(M_t>x)$ solves the Fisher-KPP initial value problem $v_t=v_{xx}/2+v-v^2$, with initial condition $v(0,x)=\bone_{\{x\leq 0\}}$.\footnote{Here and later, we follow the probabilists' notation with the factor of $1/2$ in front of the Laplacian.} Analogously, as pointed out by  \citep{mallein2015maximal}, in dimension $d\geq 1$, $v(t,x)=\p(\exists u\in \mathcal N_t:\n{X_t(u)-\bx}\leq 1)$ solves the multi-dimensional Fisher-KPP equation 
\begin{align}
    \begin{cases}
        v_t=\frac{1}{2}\Delta v+v-v^2,~t>0,~\bx\in \R^d,\\
        v(0,\bx)=\bone_{\{\bx\in B_{\z}(1)\}}.
    \end{cases}\label{eq:fkpp2}
\end{align} 
Here and later, we let $\mathcal N_t$ be the collection of particles at time $t$ and $X_t(u)$ the location of a particle $u\in\mathcal N_t$ at time $t$. The multi-dimensional   Fisher-KPP equation has been studied by \citep{gartner1982location} in a probabilistic framework and later by \citep{ducrot2015large} and \citep{roquejoffre2019sharp} using a PDE approach, where it is shown that the level set $v=1/2$ appears at $x=\sqrt{2}t-((d+2)\log t)/(2\sqrt{2})+O_\p(1)$.\footnote{Inverting this relation gives $t=x/\sqrt{2}+((d+2)\log x)/4+O_\p(1)$, giving precisely \eqref{eq:BBM FPT formula}.}  Consequently, \citep{mallein2015maximal} stated that ``the probability to find an individual  within distance 1 of a given point $\bx$ is small if $\n{\bx}\gg \sqrt{2}t-((d+2)\log t)/(2\sqrt{2})$ and large
if  $\n{\bx}\ll \sqrt{2}t-((d+2)\log t)/(2\sqrt{2})$." Formally, with $A(x)=x/\sqrt{2}+((d+2)\log x)/4$, the following result holds.  
\begin{theorem}\label{thm:BBMball}Fix $\ee>0$. There exists $C=C(\ee)>0$ such that for every $\bx\in\R^d$ with $\n{\bx}$ large enough and every $t<A(\n{\bx})-C$,
    $$\p(\exists u\in \mathcal N_{t}:\n{X_t(u)-\bx}\leq 1)<\ee,$$and for every $t>A(\n{\bx})+C$,
    $$\p(\exists u\in \mathcal N_{t}:\n{X_t(u)-\bx}\leq 1)>1-\ee.$$
\end{theorem}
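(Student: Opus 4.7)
The plan is to leverage the Fisher--KPP connection already recorded in the excerpt, together with the sharp front-propagation results from the PDE literature. Since the standard BBM is isotropic and the initial datum $v(0,\bx)=\bone_{\{\bx\in B_{\z}(1)\}}$ in \eqref{eq:fkpp2} is radially symmetric, so is the solution: $v(t,\bx)=V(t,\|\bx\|)$ for some function $V(t,r)$. The task therefore reduces to tracking how $V(t,r)$ transitions from $1$ (well inside the front) to $0$ (well ahead of it), first in $r$ for fixed $t$, and then upon inversion, in $t$ for fixed $r$.

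The key input, due to \citep{gartner1982location,ducrot2015large,roquejoffre2019sharp}, is not merely the location of the level set $\{v=1/2\}$ but the sharpness of the transition around it: for each $\epsilon\in(0,1/2)$ there exists $C_\epsilon>0$ such that, uniformly in all sufficiently large $t$ and $r$,
\begin{equation*}
V(t,r)>1-\epsilon\;\text{ for }\;r<m(t)-C_\epsilon,\qquad V(t,r)<\epsilon\;\text{ for }\;r>m(t)+C_\epsilon,
\end{equation*}
with $m(t)=\sqrt{2}\,t-\frac{d+2}{2\sqrt{2}}\log t$. Granting this, the theorem follows by inverting the relation $r=m(t)$. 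Since $m$ is strictly increasing for $t$ large, it admits an inverse $m^{-1}$; setting $s=m(t)$ gives $t=s/\sqrt{2}+\frac{d+2}{4}\log t+O(1)=s/\sqrt{2}+\frac{d+2}{4}\log s+O(1)=A(s)+O(1)$, using $\log t=\log s+O(1)$, and hence $m^{-1}(r)=A(r)+O(1)$. Since $A'(r)=1/\sqrt{2}+O(1/r)$, a shift of $\pm C_\epsilon$ in the radial variable translates to a shift of $\pm C_\epsilon/\sqrt{2}+O(1)$ in the time variable. Choosing $C=C_\epsilon/\sqrt{2}+C'$ to absorb this $O(1)$ error converts the display above into the claimed bounds: $v(t,\bx)<\epsilon$ when $t<A(\|\bx\|)-C$, and $v(t,\bx)>1-\epsilon$ when $t>A(\|\bx\|)+C$, provided $\|\bx\|$ is sufficiently large.

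The substantive ingredient is the sharp front asymptotic for the multi-dimensional Fisher--KPP equation with compactly supported initial data, which is nontrivial: the one-dimensional Bramson correction $-\frac{3}{2\sqrt{2}}\log t$ is replaced here by $-\frac{d+2}{2\sqrt{2}}\log t$, with the extra $-\frac{d-1}{2\sqrt{2}}\log t$ originating from the volume factor $r^{d-1}$ that arises on passing to polar coordinates. Once that PDE estimate is invoked, the radial reduction and the inversion from $r$--slices to $t$--slices are routine. The point requiring care is that the Fisher--KPP estimates must hold uniformly in the radial variable (and not only along the level curve $r=m(t)$), which is exactly what the cited references guarantee.
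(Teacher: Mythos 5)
Your proposal is correct and follows essentially the same route as the paper, which does not give an independent proof of this theorem but presents it as a formalization of the front-propagation results of the cited Fisher--KPP literature (G\"artner, Ducrot, Roquejoffre et al.), with the inversion $t = A(r) + O(1)$ of the front location $m(t)=\sqrt{2}\,t-\frac{d+2}{2\sqrt{2}}\log t$ carried out exactly as you do (the paper relegates this to a footnote). Your additional remarks on radial symmetry, the uniformity of the transition width, and the origin of the dimensional correction are consistent with what the cited references provide, so nothing is missing.
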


\begin{remark}
In a similar manner,  finding the FPT for a domain in $\R^d$ is equivalent to solving the multi-dimensional Fisher-KPP equation with a  Dirichlet boundary condition; see e.g., the derivation in \citep{mckean1975application} and the Appendix of \citep{bramson1978maximal}. More precisely, consider the boundary value problem
\begin{align}
    \begin{cases}v_t=\frac{1}{2}\Delta v+v-v^2,~t>0,~\bx\in \R^d,\\
v(0,\bx)=\bone_{\{\bx\in B_\z(1)\}},\\
v(t,\bx)=1,~t>0,~\bx\in\bS^{d-1}.
\end{cases}\label{eq:multi fkpp}
\end{align}
It holds that $v(t,\bx)=\p(\tau_{\bx}<t)$, the probability that  the BBM starting from $\z\in\R^d$ has already reached $B_{\bx}(1)$ by time $t$. Nevertheless, we are not aware of studies of \eqref{eq:multi fkpp}, given that it is a boundary value problem instead of an initial value problem. Our Theorem \ref{thm:BBM} shows that asymptotically, the solutions to the two problems \eqref{eq:fkpp2} and \eqref{eq:multi fkpp} are reasonably close.

\end{remark}

\begin{proof}[Proof of Theorem \ref{thm:BBM}]
The upper bound of $\tau_x$ follows directly from Theorem \ref{thm:BBMball}. For the lower bound, we fix $\ee>0$ and an increasing sequence $a(x)\to\infty$, and it suffices to show for $x$ large enough,
\begin{align}
    \p\left(\tau_x\leq \frac{x}{\sqrt{2}}+\frac{d+2}{4}\log x-a(x)\right)<\ee.\label{eq:lbtoprove}
\end{align}
Observe that with $\bx=(x,0\dots,0)\in\R^d$ and $t_0=x/\sqrt{2}+((d+2)\log x)/4-a(x)/2$,
\begin{align*}
    &\hspace{0.5cm}\p\left(\tau_x\leq \frac{x}{\sqrt{2}}+\frac{d+2}{4}\log x-a(x)\right)\\
    &\hspace{0cm}\leq \p(\exists u\in \mathcal N_{t_0}:\n{X_{t_0}(u)-\bx}\leq 3)+\p(\forall u\in \mathcal N_{t_0-\tau_x}:\n{X_{t_0-\tau_x}(u)}\geq 2;\tau_x\leq t_0-\frac{a(x)}{2}).
\end{align*}
   Since a ball of radius $3$ can be covered by finitely many balls of radius $1$ in $\R^d$, the first probability is bounded by $\ee/2$ for $x$ large. Splitting the second probability on the events $\tau_x\in(j-1,j]$ we obtain
   \begin{align*}
       &\p\left(\forall u\in \mathcal N_{t_0-\tau_x}:\n{X_{t_0-\tau_x}(u)}\geq 2;\tau_x\leq t_0-\frac{a(x)}{2}\right)\\
       &\hspace{3cm}\leq \sum_{j=1}^\infty \p\left(\exists t\in \big(\frac{a(x)}{2}+j-1,\frac{a(x)}{2}+j\big),~\forall u\in \mathcal N_t,~\n{X_t(u)}\geq 2\right).
   \end{align*}
   Divide equally the interval $(a(x)/2+j-1,a(x)/2+j)$ into $q_x(j):=\exp((a(x)/2+j)/3)$ many intervals $\{I_\ell\}$ with endpoints $a(x)/2+j-1=t_0<\dots<t_{q_x(j)}=a(x)/2+j$. Denote by $\bf Z$ a  $d$-dimensional standard Gaussian random variable and write $u\mapsto v$ if $v$ is a descendant of $u$.  We have 
   \begin{align*}
       &\hspace{0.5cm}\p\left(\exists t\in \big(\frac{a(x)}{2}+j-1,\frac{a(x)}{2}+j\big),~\forall u\in \mathcal N_t,~\n{X_t(u)}\geq 2\right)\\
       & \leq \sum_{\ell=1}^{q_x(j)}\p(\forall u\in \mathcal N_{t_\ell},~\n{X_{t_\ell}(u)}\geq 1)+\p\left(\sup_\ell \sup_{\substack{s,t\in I_\ell\\ s>t}} \sup_{\substack{u\in \mathcal N_{t},v\in \mathcal N_s\\ u\mapsto v}} \n{X_t(u)-X_s(v)}\geq 1\right)\\
       &\leq \sum_{\ell=1}^{q_x(j)}\p(\forall u\in \mathcal N_{t_\ell},~\n{X_{t_\ell}(u)}\geq 1)+{q_x(j)}e^{2(a(x)/2+j)}\p\left(\n{\bf Z}\geq \sqrt{q_x(j)}\right)+o(1)\\
       &=\sum_{\ell=1}^{q_x(j)}\p(\forall u\in \mathcal N_{t_\ell},~\n{X_{t_\ell}(u)}\geq 1)+o(1),
   \end{align*}where in the second inequality we used Markov's inequality on the number of particles present at time $a(x)/2+j$. 
   By Theorem 2.1 of \citep{oz2023} applied with $a=k=\theta=0$ and $r_0=1$, for $x$ large,
   $$\p(\forall u\in \mathcal N_{t_\ell},~\n{X_{t_\ell}(u)}\geq 1)\leq e^{-(a(x)/2+j-1)/2}.$$
   Altogether, we obtain the bound \eqref{eq:lbtoprove}.
\end{proof}

\subsection{Proof of the remaining results from Section \texorpdfstring{\ref{sec: analytic}}{}}\label{sec:proofother}

\begin{proof}[Proof of Corollary \ref{coro:bbm general}]
  Fix a dimension $d\geq 1$, and let us denote by $\tau_{\kappa,s}^r(x)$ the FPT of a BBM with branching rate $\kappa$ and diffusivity $s$ to a ball of radius $r$ centered at $(x,0,\dots,0)\in\R^d$. In particular, $\tau_{\kappa,s}^1(x)=\tau_{\kappa,s}(x)$.  By self-similarity of the Brownian motion, we have the relations
    \begin{align*}
        \tau_{\kappa,s}^1(x)&\dd \tau_{\kappa,1}^{s^{-1}}\left(\frac{x}{s}\right)\qquad \text{ and }\qquad 
        \tau_{\kappa,s}
    ^1(x)\dd \frac{1}{\kappa}\tau_{1,s}^{\sqrt{\kappa}}\left(\sqrt{\kappa}x\right)    
    \end{align*}for every $\kappa,s>0$. 
    The proof then follows immediately from Theorem \ref{thm:BBM}, where we note that the same proof works if we replace the target $B_x$ by a ball centered at $(x,0,\dots,0)\in\R^d$ of a fixed radius $r>0$.
\end{proof}

\begin{proof}[Proof of Theorem \ref{thm:brw maximum}]
Note that the only difference between the classical BRW and our delayed branching BRW models is the branching structure, i.e., the underlying tree that describes the genealogy of the particles. The proof of the version of Theorem \ref{thm:brw maximum} for classical BRW depends on the branching structure only through the first moment and second moment estimates; see (28) and (29) of \citep{bramson2016convergence}. In our case, this can be adapted using Lemma \ref{lemma:number1} below.
\end{proof}

\begin{lemma}\label{lemma:number1}
   Conditioned upon survival, the expected number of particles $N_n$ at time $n$ for the delayed branching BRW model satisfies  $N_n\asymp{\rho}^n$,\footnote{For two sequences $\{A_n\}$ and $\{B_n\}$, we write $A_n\asymp B_n$ if there is a constant $C>0$ independent of $n$ such that $A_n/C\leq B_n\leq CA_n$ for all $n$.} where $\rho$ is given by \eqref{eq:newrho}.
\end{lemma}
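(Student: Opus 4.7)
The plan is to model the delayed-branching $(\tl,\tnu)$-BRW as a two-type Galton-Watson process, identify $\rho$ as the Perron-Frobenius eigenvalue of its mean-offspring matrix, and then invoke the Kesten-Stigum theorem to transfer the resulting $\rho^n$ growth from unconditional expectations to the survival event.

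First I would introduce two particle types, following Section \ref{sec:newbrw}. Type $A$ denotes ``ordinary'' walkers (the root, backbone continuations, and the two ends of a cross-linked chain after the delayed split is complete); type $B$ denotes the intermediate walker created by a cross-linking event, which is one step away from itself splitting into two type $A$ walkers. Following the model, each existent path---of either type---terminates independently with probability $\tnu$ at every step. A non-terminated type $A$ branches with probability $\tl$ into one type $A$ (backbone) and one type $B$ (the first step along the cross-link), and otherwise continues as one type $A$; a non-terminated type $B$ deterministically produces two type $A$'s. The mean-offspring matrix is therefore
\begin{equation*}
M=\begin{pmatrix} 1-\tnu & 2(1-\tnu) \\ \tl & 0 \end{pmatrix},
\end{equation*}
whose characteristic polynomial $\lambda^2-(1-\tnu)\lambda-2\tl(1-\tnu)=0$ has Perron-Frobenius eigenvalue
\begin{equation*}
\lambda_1=\frac{1-\tnu}{2}+\sqrt{\frac{(1-\tnu)^2}{4}+2\tl(1-\tnu)}=\rho,
\end{equation*}
in agreement with \eqref{eq:newrho}. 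The hypothesis $2\tl(1-\tnu)>\tnu$ is precisely what ensures $\rho>1$, i.e.~supercriticality.

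Let $a_n$ and $b_n$ denote the numbers of type $A$ and type $B$ particles at time $n$, and set $N_n=a_n+b_n$. Standard Perron-Frobenius theory applied to $M^n$ immediately gives $\E[a_n]+\E[b_n]\asymp \rho^n$ unconditionally. To upgrade this to the conditional-on-survival statement, I would invoke the Kesten-Stigum theorem for multi-type branching processes: since each particle produces at most two offspring, the $L\log L$ moment condition is trivially satisfied, so $N_n/\rho^n$ converges almost surely and in $L^1$ to a nonnegative limit $W$ that is strictly positive almost surely on the survival event. Combined with the unconditional Perron-Frobenius bound, this yields the two-sided estimate $N_n\asymp \rho^n$ conditional on survival (equivalently, $\E[N_n\mid \mathrm{survival}]\asymp \rho^n$ by uniform integrability).

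The only step that really requires care is the modeling step: checking that the two-type tree faithfully reproduces the delayed branching mechanism with termination, and in particular that the intermediate type $B$ cross-link particle is itself subject to the termination rate $\tnu$. It is precisely this feature that inserts the factor $(1-\tnu)$ into the upper-right entry of $M$, and therefore produces the factor $2\tl(1-\tnu)$ (rather than $2\tl$) inside the square root in the formula for $\rho$. Once the modeling is pinned down, the characteristic-polynomial computation and the appeal to Kesten-Stigum are entirely standard.
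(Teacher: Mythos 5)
Your proposal is correct, and it splits naturally into two halves that compare differently with the paper's proof. The first half --- identifying $\rho$ as the Perron--Frobenius eigenvalue of a $2\times 2$ mean matrix --- is essentially the paper's argument in different clothing: the paper tracks the pair $(\widetilde N_n,\alpha_n)$, where $\alpha_n$ counts pending type-II branchings, and your pair $(a_n,b_n)$ of type-$A$/type-$B$ counts is just a change of basis of the same linear system; both matrices have characteristic polynomial $\lambda^2-(1-\tnu)\lambda-2\tl(1-\tnu)$, so the eigenvalue computation is identical. (One small caution on the modeling step you flagged yourself: your prose says a \emph{non-terminated} type $A$ branches with probability $\tl$, which would put $(1-\tnu)\tl$ rather than $\tl$ in the lower-left entry and would change $\rho$; the paper treats termination and branching as mutually exclusive events with probabilities $\tnu$ and $\tl$, which is what your matrix as written actually encodes, so the matrix --- not the prose --- is the correct reading.) The second half is genuinely different. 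The paper handles the conditioning on survival by elementary means: it shows $q=\p(S^{\complement})<1$ via a fixed-point analysis of the recursion for $q_n=\p(Z_n=0)$, then uses the tail bound $\p(Z_n>y\mid S^{\complement})\le q^{y-1}$ to get $\E[Z_n\mid S^{\complement}]=O(1)$ and concludes from $\widetilde N_n=pN_n+q\,\E[Z_n\mid S^{\complement}]$. You instead invoke the multi-type Kesten--Stigum theorem, whose hypotheses (positive regularity of $M$, bounded offspring hence the $X\log X$ condition, non-singularity) are indeed all satisfied here. Your route leans on a heavier off-the-shelf result but buys a strictly stronger conclusion --- $Z_n/\rho^n$ converges in $L^1$ to a limit that is positive on survival, hence $\E[Z_n\mid S]\sim c\,\rho^n$ rather than merely $\asymp\rho^n$ --- while the paper's route is self-contained and uses nothing beyond first moments and a generating-function fixed point. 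Both are valid proofs of the lemma as stated.
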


\begin{proof}Let us first prove that $\widetilde{N}_n\asymp\rho^n$ where $\widetilde{N}_n$ is the expected value of $Z_n$, the number of particles at time $n$ (without conditioning upon survival). 
Recall from Section \ref{sec:newbrw} that in the delayed branching regime, each branching event consists of two branching sub-events at consecutive times, both into two branches. We call the first of the two sub-events the \emph{branching of type I}, and the second \emph{branching of type II}. 
     For $n\in\N$, let $\alpha_n$ be the expected number of branching sub-events of type I at time $n-1$. By construction, this is the same expected number of branching sub-events of type II at time $n$.  
Our definition of the delayed branching regime then leads to the following recursive equations of $(\widetilde{N}_n,\alpha_n)$:
\begin{itemize}
\item $\widetilde{N}_1=1,~\alpha_1=0$;
\item each particle at time $n$ that does not initiate a branching of type II independently has probability $\tl$ to create a   branching of type I, thus $\alpha_{n+1}=\tl(\widetilde{N}_n-\alpha_n)$;
    \item the increment of particles at time $n+1$ comes from contributions from branchings of types both I and II, meaning that $\widetilde{N}_{n+1}=2(1-\tnu)\alpha_n+(1+\tl-\tnu)(\widetilde{N}_n-\alpha_n)$.
\end{itemize}
In matrix form, we write
\begin{align*}
    \begin{bmatrix}
\widetilde{N}_{n+1} \\
\alpha_{n+1} 
\end{bmatrix}&=\begin{bmatrix}
1+\tl-\widetilde{\nu}  & 1-\tl-\widetilde{\nu} \\
\tl & -\tl 
\end{bmatrix}\begin{bmatrix}
\widetilde{N}_{n} \\
\alpha_{n} 
\end{bmatrix}\\
&=\begin{bmatrix}
1+\tl-\widetilde{\nu}  & 1-\tl-\widetilde{\nu} \\
\tl & -\tl 
\end{bmatrix}^n\begin{bmatrix}
1 \\
0 
\end{bmatrix}=:M(\tl,\widetilde{\nu} )^n\begin{bmatrix}
1 \\
0 
\end{bmatrix}.
\end{align*}
The largest eigenvalue of the matrix $M(\tl,\widetilde{\nu} )$ is precisely ${\rho}(\tl,\widetilde{\nu})$ defined in \eqref{eq:newrho}, and we conclude that $\widetilde{N}_n\asymp\rho^n$.

Denote by $p=1-q=\p(S)$. We first prove that $q<1$ if $\rho>1$. Let $q_n=\p(Z_n=0)$, so that $q_n\uparrow q$. By construction of the delayed branching regime, we have  $q_1=\widetilde{\nu}$, $q_2=\widetilde{\nu}+(1-\tl-\widetilde{\nu})q_1+\tl\widetilde{\nu}^2q_1$, and for $n\geq 1$,
\begin{align}
    q_{n+2}=\widetilde{\nu}+(1-\tl-\widetilde{\nu})q_{n+1}+\tl(1-\widetilde{\nu})^2q_{n+1}q_n^2+\tl\widetilde{\nu}^2q_{n+1}+2\tl\widetilde{\nu}(1-\widetilde{\nu})q_{n+1}q_n.\label{eq:recursionqn}
\end{align}
Consider the equation
$$q=h(q):=\widetilde{\nu}+(1-\tl-\widetilde{\nu})q+\tl(1-\widetilde{\nu})^2q^3+\tl\widetilde{\nu}^2q+2\tl\widetilde{\nu}(1-\widetilde{\nu})q^2.$$
It is elementary to check that $h(0)=\widetilde{\nu}\geq 0$, $h(1)=1$, and that if $\rho>1$, then $h'(1)<1$. In particular, there exists a solution  $\hat{q}\in(\widetilde{\nu},1)$ to the equation $q=h(q)$. It follows from \eqref{eq:recursionqn} that if $q_n\leq q_{n+1}\leq \hat{q}$, then $q_{n+2}\leq\hat{q}$. By induction, we know that $q_n\leq\hat{q}$ for each $n$. Therefore, $q=\lim q_n\leq \hat{q}<1$.

Observe that $\widetilde{N}_n=pN_n+q\E[Z_n\mid S^c]$. Since we have the bound 
$$\p(Z_n>y\mid S^c)=\frac{\p(Z_n>y,\,S^c)}{q}\leq \frac{\p(S^c\mid Z_n>y)}{q}\leq q^{y-1},$$
it holds $\E[Z_n\mid S^c]\leq C(q)$. Hence, we conclude that $N_n\asymp \rho^n$.
\end{proof}

\section{Numerical validation of the  first passage time  asymptotics}\label{sec:validation}

In this appendix, we provide numerical verification of the FPT asymptotics for BRW \eqref{eq:realtauxasymp} and BBM \eqref{eq:bbm asymp}. In particular, we show that the path purging in our numerical algorithm affects very little the FPT.

\subsection{Validation of the  asymptotic \texorpdfstring{\eqref{eq:realtauxasymp}}{}}\label{sec:brw validation}
Recall that the coefficient $1/c_1$ of the linear term for the estimation \eqref{eq:realtauxasymp} of $\tau_x$ can be computed through the relation $I(c_1)=\log\rho$. Recall also that  $\widetilde{\nu}=2/l_c$ is the termination rate and $${\rho}={\rho}(\tl,\widetilde{\nu})=\frac{1-\widetilde{\nu} }{2}+\sqrt{\frac{(1-\widetilde{\nu} )^2}{4}+2\tl(1-\widetilde{\nu} )}.$$
The implicit relation between $c_1$ and $\tl$ can thus be computed for the BRW and compared against the numerical BRW implementation. 

Next, we compute numerically the relation between $c_1$ and $\tl$. For any given $\tl$,   we calculate the SP distribution for 1000 paths at different offset distances for $q_x \in [20, 60]$. The mean SP (same as the FPT for a unit length jump per unit time), $\tau(q_x)$, is then fit to a function of the form,
\begin{equation}
    \tau(q_x) =  \frac{q_x}{c_1} + B \log(q_x) + C. \label{eq:fit_tau_approx}
\end{equation}
The parameter $\tl$ is obtained by uniformly sampling 19 points from  $ [0.05,0.95]$.

\begin{figure}[ht!]
    \centering
    \subfigure[]{
      \includegraphics[width=0.41\textwidth]{\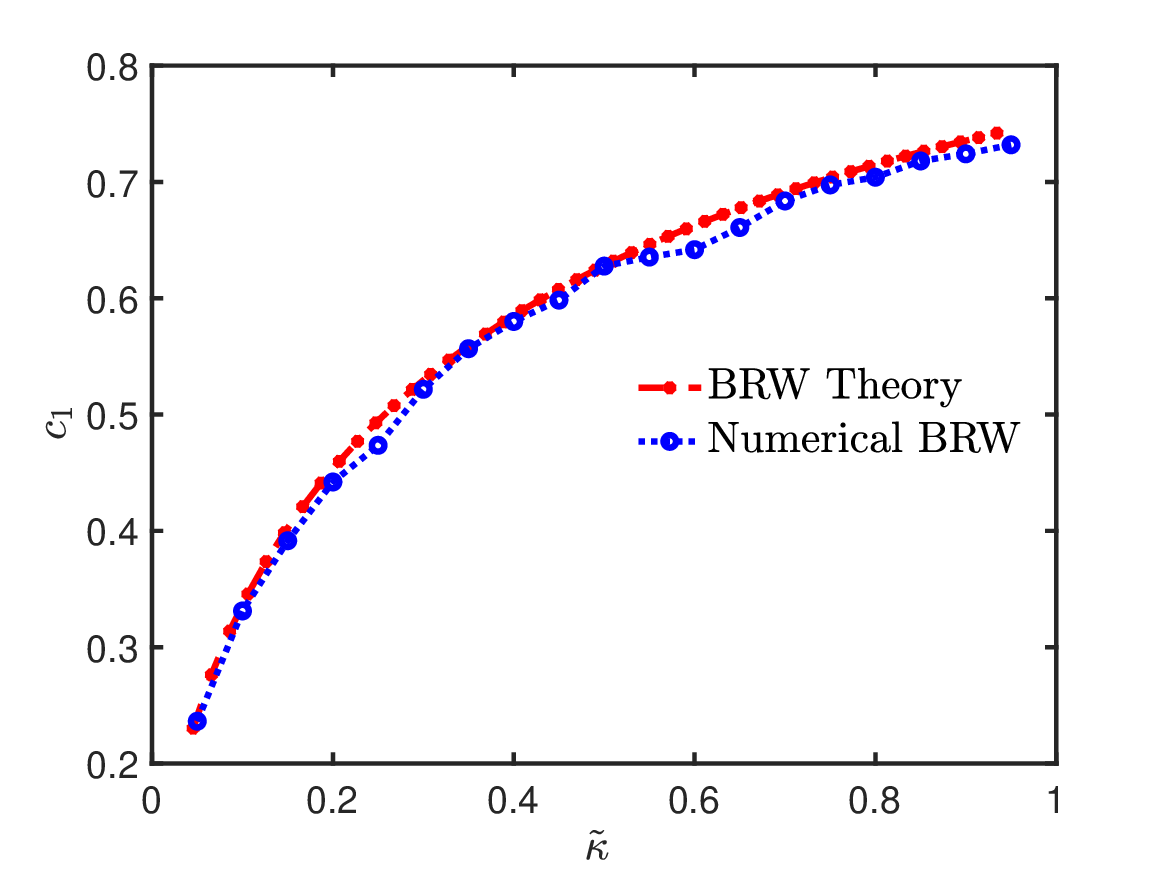}}
      \subfigure[]{\includegraphics[width=0.41\textwidth]{\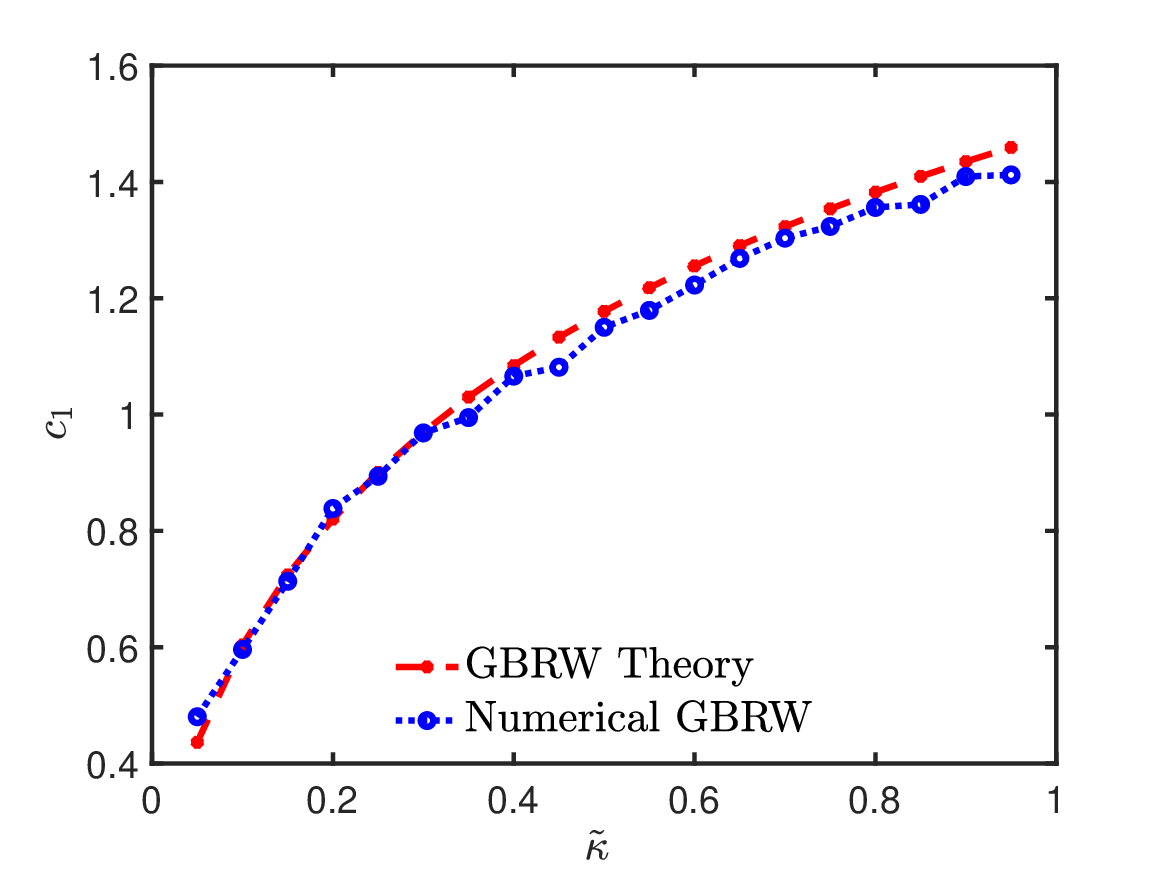}}
    \caption{The numerical BRW obtained $c_1$ compared against the reference calculation \eqref{eq:realtauxasymp} at different branching rate $\tl $ for the  BRW model with increments (a) uniformly distributed on $\S^2$ (BRW); (b) Gaussian distributed on $\R^3$ (GBRW). 
    }
    \label{fig:c1_rho_val}
\end{figure}
We see that the numerically obtained coefficient $c_1$ of the linear term is in decent agreement with the reference theory estimates for the BRW, as shown in Figure~\ref{fig:c1_rho_val}. This indicates that the numerical approximation of {path purging} does not affect the linear coefficient of the scaling behavior of the SP or FPT distribution of the numerical BRW, and can be used as a suitable approximation for carrying out the numerical BRW to represent a CGMD network.

\subsection{Validation of the  asymptotic \texorpdfstring{\eqref{eq:bbm asymp}}{}}
\label{sec:bbm validation}

We use a similar approach to verify the theoretical prediction \eqref{eq:bbm asymp} for the FPT of the standard BBM. Note that here we do not include the termination or delayed branching regimes. With the choice of $\sigma=1$, the coefficient $1/c_1$ for \eqref{eq:bbm asymp} is equal to $1/\sqrt{2\tl}$ for $\tl>0$.

For $\tl$ sampled uniformly in $[0.05,0.95]$, we compute numerically the mean first passage times $\tau(q_x)$ at different offset distances $q_x\in[20,60]$ and fit them to a function of the form \eqref{eq:fit_tau_approx}. The linear coefficient demonstrates a decent agreement with the theoretical prediction, as shown in Figure~\ref{fig:bbm_c1_rho_val}. 
\begin{figure}[ht!]
    \centering
     {\includegraphics[width=0.41\textwidth]{\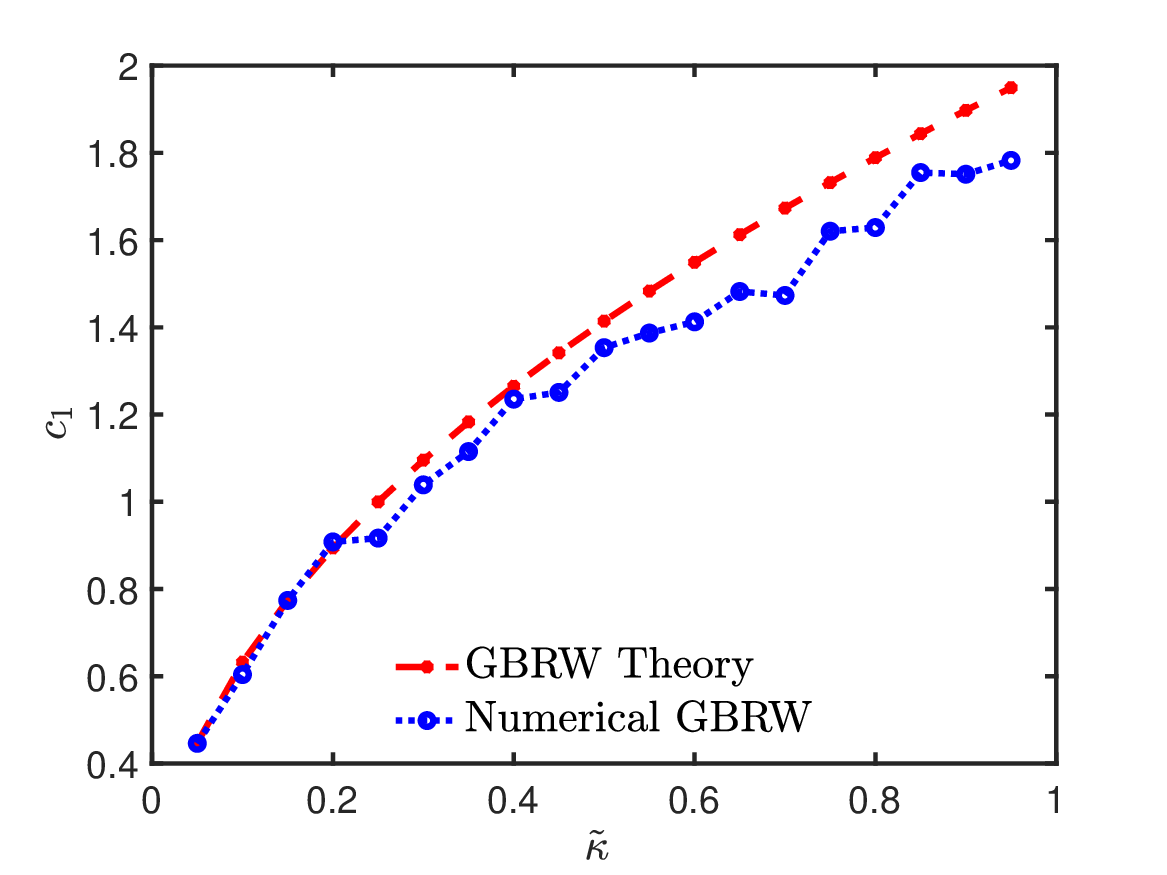}}
       
    \caption{The $c_1$ obtained numerically from the   BBM  compared against the theoretical estimate at different branching rates $\tl $ for the  BRW model. 
    }
    \label{fig:bbm_c1_rho_val}
\end{figure}
Again, this supports that path purging has a negligible effect on the distribution of the FPT. The divergence of the behavior of the numerical implementation from the theoretical estimate occurs due to the pseudo-continuous implementation of time (discrete time steps of 0.1). The effect of discretization is negligible at lower branching rates (even with a time step of 0.25 for $\tl\leq0.1$), but becomes more apparent for $\tl>0.3$ which is well above the cross-link density in realistic polymeric systems that are simulated using the CGMD method.

\section{Intercept of the linear dependency}\label{sec:intercept}
We have used \textit{linear dependency}, $\overline{c}_1$, to classify the quality of agreement of the numerical model to the CGMD calculations, as shown in Figures~\ref{fig:corr_c1_rho_cgmd_b},~\ref{fig:scaled_c1_rho_cgmd_b} and~\ref{fig:bbm_c1_rho_cgmd_b}. However, the same linear dependency $\overline{c}_1$ may represent a family of straight lines with a slope of $1/\overline{c}_1$ but distinct intercepts. As a result, to fully characterize the SP mean we look at the \emph{intercept}, SP$(q_x=0)$. We see that the three presented numerical models are in agreement with the CGMD results, barring at lower cross-link densities as shown in Figure~\ref{fig:intercepts}.
\begin{figure}[ht!]
    \centering
    \hspace{-1cm}
       \subfigure[]{\includegraphics[width=0.35\textwidth]{\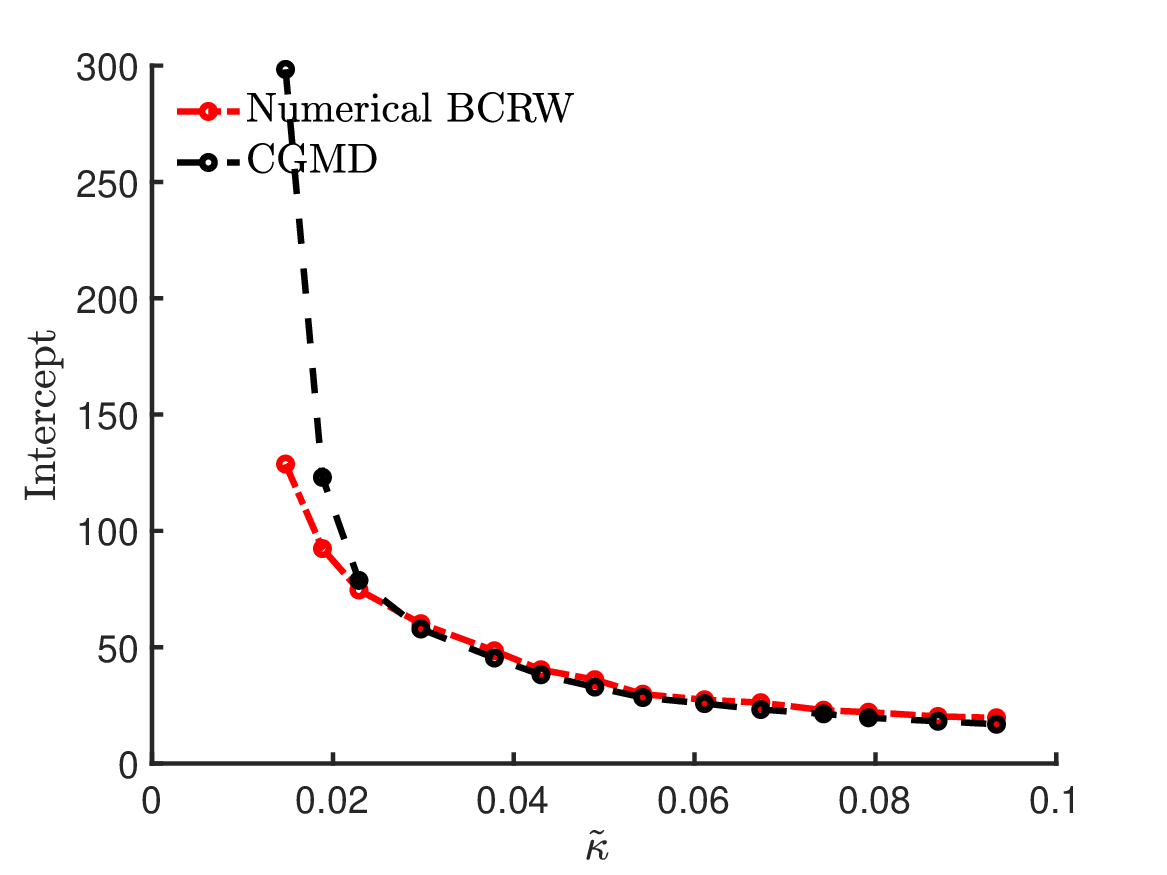}}
       \hspace{-0.6cm}
        \subfigure[]{\includegraphics[width=0.35\textwidth]{\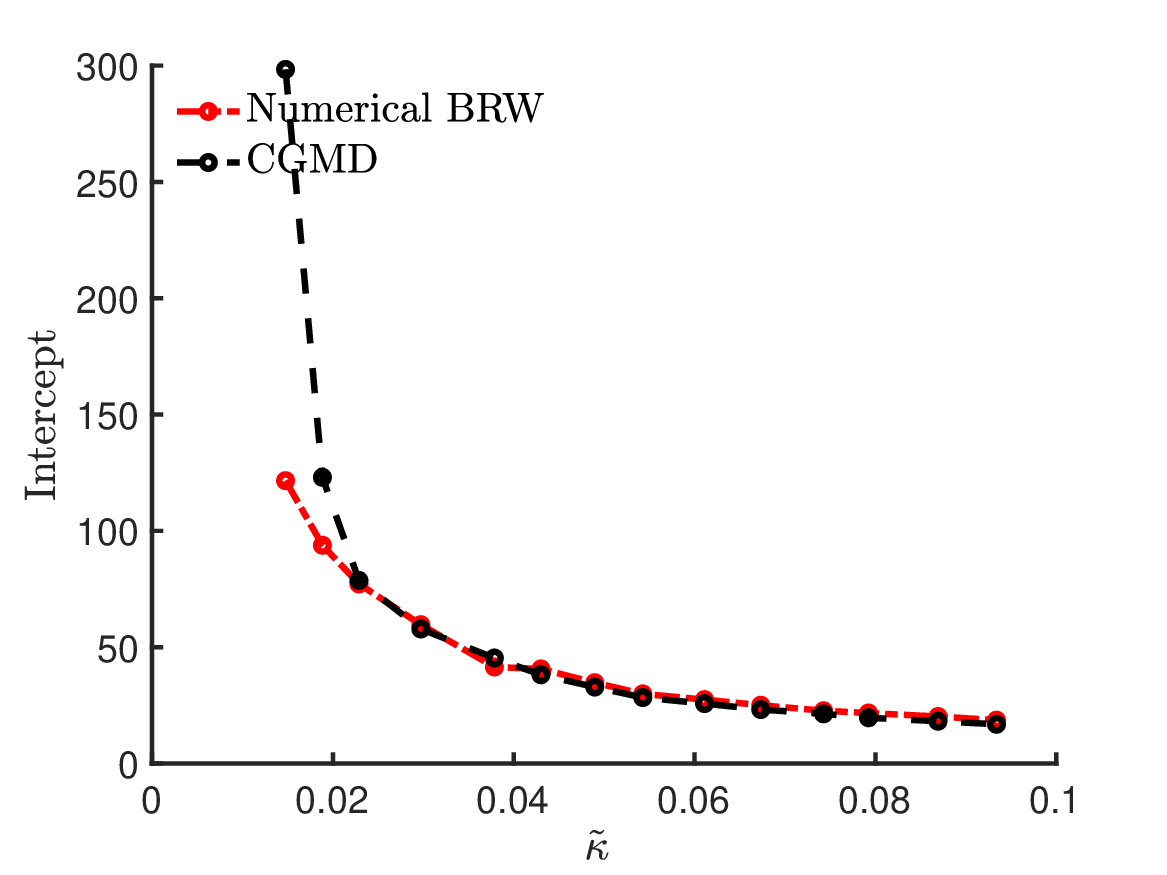}}
         \hspace{-0.6cm}
        \subfigure[]{\includegraphics[width=0.35\textwidth]{\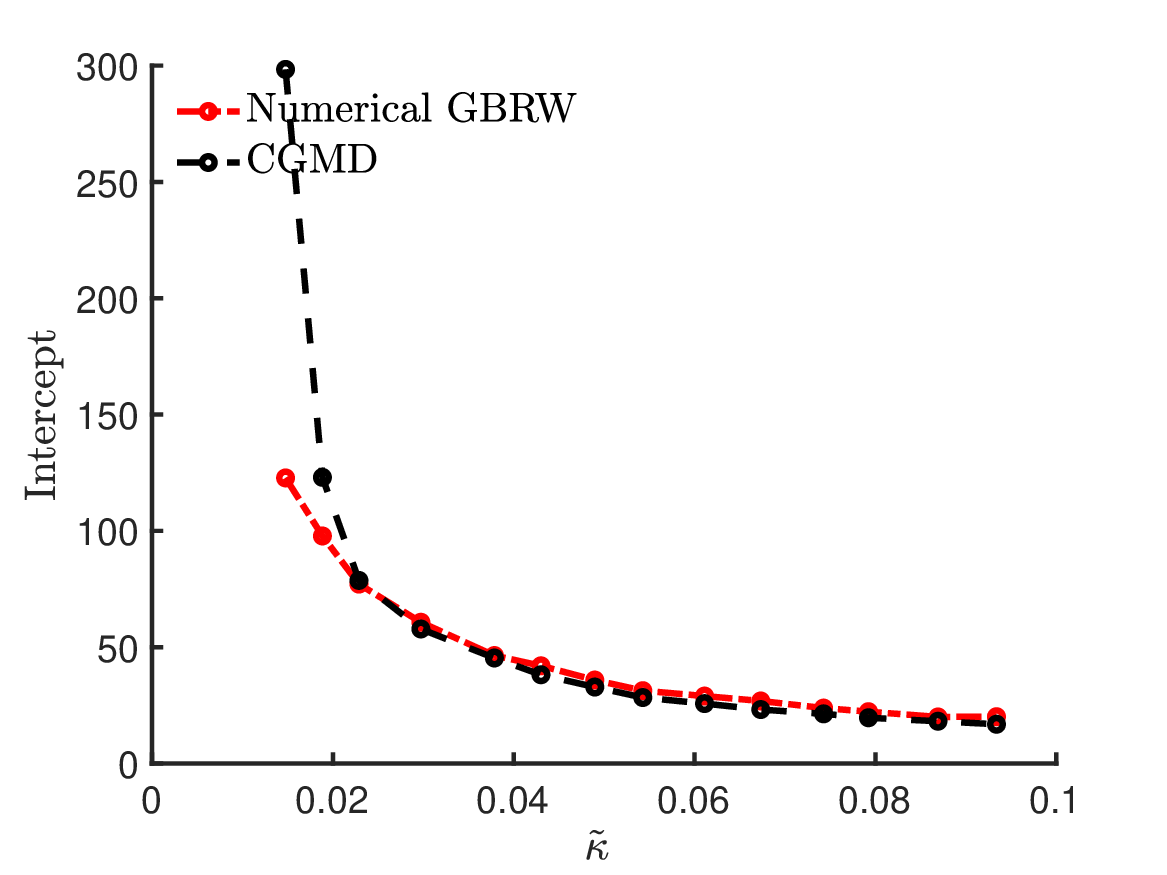}}
         \hspace{-1cm}
    \caption{ The intercepts of the SP$(q_x)$ for the (a) BCRW, (b) scaled  BRW, and (c) scaled GBRW methods. 
    }
    \label{fig:intercepts}
\end{figure}

\section{Shortest path in the 8-chain model} \label{app:abm_comp}
In this appendix, we present the comparison of the theoretical estimates of $\overline{c}_1$ obtained from the spatial branching processes (scaled BRW, GBRW, and BBM). The intuition is that the idealized and simplified 8-chain (a.k.a.~Arruda-Boyce) model places all chains along the shortest path. This makes the shortest path between distance nodes much longer than the spatial branching models, for the realistic range of branching rates encountered in the CGMD simulations ($\tl < 0.1$). 
\begin{figure}[ht!]
    \centering
     {\includegraphics[width=0.41\textwidth]{\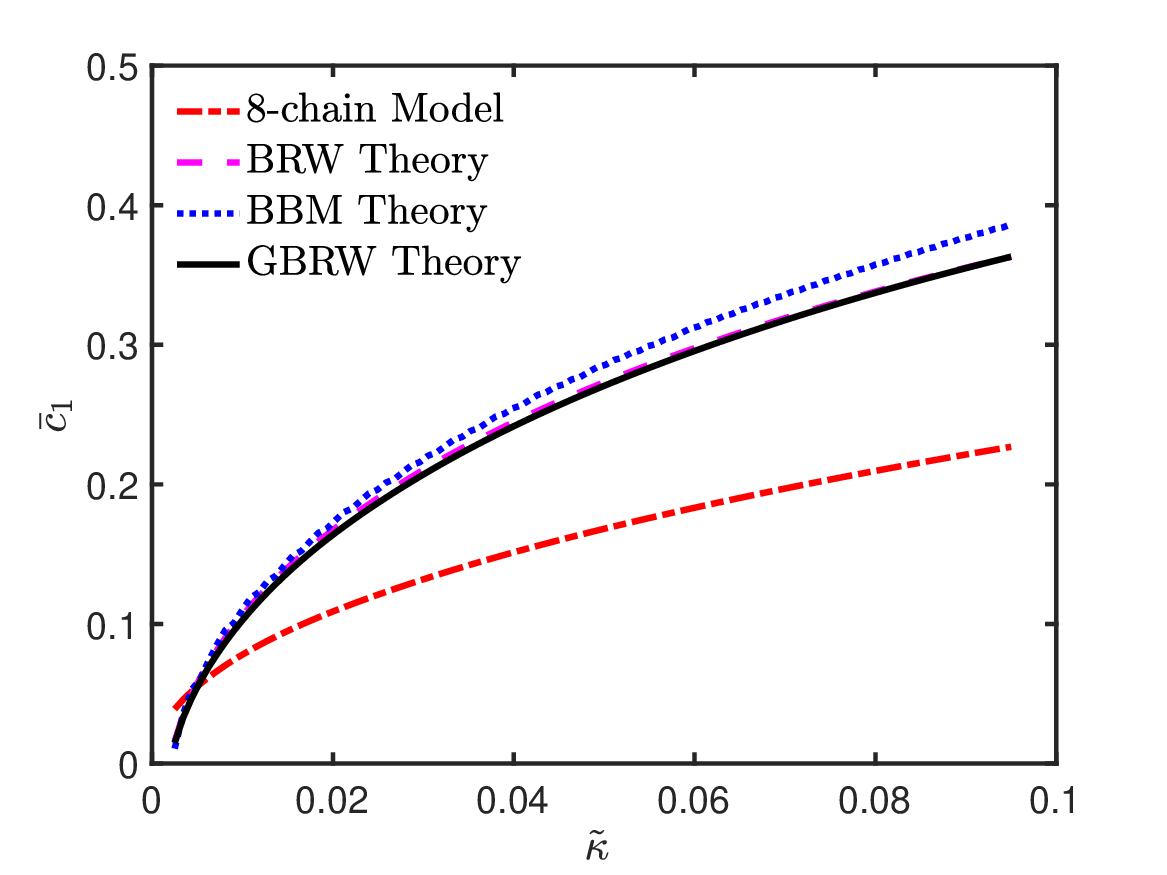}}
       
    \caption{The $\overline{c}_1$ computed analytically from the theoretical estimate at different branching rates $\tl $ for the 8-chain, scaled BRW, GBRW, and BBM models. The scaling of the jumps as a function of the MSID has been accounted for resulting in an additional ${1}/{\sigma}$ factor multiplying the estimates in \eqref{eq:bbm_lc} and~\eqref{eq:gbrw_lc}.
    }
    \label{fig:comp_arruda}
\end{figure}
The resultant $\overline{c}_1$ is expected to be much lower in the 8-chain model as a consequence and this is confirmed by the theoretical estimate of $\overline{c}_1$ from the 8-chain model and the other spatial branching processes, as shown in Figure~\ref{fig:comp_arruda}.

\section{Approximating GBRW with BBM}\label{sec:GBRW as BBM}

In this appendix, we discuss how to approximate our GBRW model (see Section \ref{sec:cgmd as gbrw} for details) by a standard BBM with an implied branching rate. 
 Recall from Section \ref{sec:newbrw} that we introduced the extra features of termination and delayed branching property. The analogies for the BBM can be summarized as follows:
\begin{itemize}
    \item termination: each existent particle carries an independent exponential clock with rate $\tnu=1/250$, representing the termination of the particle;
    \item delayed branching property: the branching events occur according to exponential clocks with parameter $\tl$. Each branching event consists of two sub-events: the particle first branches into two and then one of the two descendants branches into two after a unit of time (within this unit of time, termination could happen but no extra branching event will occur).
\end{itemize}

The resulting model will be called the $(\tl,\tnu)$-BBM. 
Unfortunately, both the termination and the delayed branching properties are not handy to deal with when analyzing the FPT, due to the reminiscence of the connection to Fisher-KPP equations. Nevertheless, we still expect that a correspondent asymptotic result of the form \eqref{eq:realtauxasymp}  with an $O_\bP(\log\log x)$ error term should at least hold true.

For clarity of our discussions, we take the following detour. It is not unreasonable to approximate the $(\tl,\tnu)$-BBM model (conditioned upon non-extinction) with the classical BBM model that carries the same asymptote for the expected number of particles. Consider the $(\tl,\tnu)$-BBM where $2\tl>\tnu$,  and denote by $n(t):=\E[\#\mathcal N_t]$, the expected number of particles at time $t$.

\begin{proposition}\label{prop:bbmnumberof particles}
    Denote by $\lambda_0>0$ the unique real root to $\lambda_0=-\tnu+2\tl e^{-\lambda_0}$.  It holds that $n(t)\asymp e^{\lambda_0t}$. 
\end{proposition}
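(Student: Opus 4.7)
The overall plan is to derive an integro-delay renewal equation for $n(t)$ by a first-step analysis on the ancestor particle, identify $\lambda_0$ as the Malthusian parameter of this equation, and then invoke renewal theory to extract the asymptotic $n(t)\asymp e^{\lambda_0 t}$.

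First I would condition on the first event experienced by the root particle. Since the branching and termination clocks are independent exponentials with rates $\tl$ and $\tnu$, this event occurs at time $\tau\sim\mathrm{Exp}(\tl+\tnu)$; with probability $\tnu/(\tl+\tnu)$ it is a termination (after which the root leaves no descendants), and with probability $\tl/(\tl+\tnu)$ it initiates a delayed branching, producing one immediate free descendant and one marked descendant scheduled to split into two free particles a deterministic one time unit later, provided it survives the intervening termination clock. Letting $m(t)$ denote the analogous expected-population function starting from a single marked particle, the delayed-branching rule together with the branching property produces a linear coupled system for $(n,m)$ which, after eliminating $m$ using its explicit expression in terms of $n$, reduces to a single scalar renewal equation for $n$.

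Next, substituting the ansatz $n(t)=Ce^{\lambda t}$ into this equation and extracting dominant terms as $t\to\infty$ yields the characteristic equation $\lambda=-\tnu+2\tl e^{-\lambda}$. Existence, uniqueness and positivity of $\lambda_0$ follow from the strict monotonicity of $F(\lambda):=\lambda+\tnu-2\tl e^{-\lambda}$ (since $F'(\lambda)=1+2\tl e^{-\lambda}>0$), together with $F(-\infty)=-\infty$, $F(+\infty)=+\infty$, and $F(0)=\tnu-2\tl<0$ under the standing hypothesis $2\tl>\tnu$. To upgrade the existence of $\lambda_0$ into the asymptotic $n(t)\asymp e^{\lambda_0 t}$, I would rescale $\widetilde n(t):=e^{-\lambda_0 t}n(t)$; by construction, $\lambda_0$ solves the characteristic equation precisely because the rescaled kernel has total mass one, so the resulting equation for $\widetilde n$ is a \emph{proper} renewal equation, and the key renewal theorem then delivers $\widetilde n(t)\to c\in(0,\infty)$, which is exactly the two-sided bound $n(t)\asymp e^{\lambda_0 t}$.

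The main technical obstacle I anticipate is that the rescaled renewal kernel is a mixture of an absolutely continuous piece (from the exponential clocks) and an atom at the deterministic unit delay; this mixed structure falls outside the cleanest non-lattice form of the key renewal theorem and requires either a mixed lattice/non-lattice variant or a direct verification of direct Riemann integrability. A cleaner workaround that avoids this issue is a sandwich argument: the $(\tl,\tnu)$-BBM can be stochastically compared, from above and below, with standard BBMs of auxiliary branching and killing rates for which the expected population grows exactly exponentially, and optimizing the parameters of the dominating processes recovers both directions of $n(t)\asymp e^{\lambda_0 t}$ without invoking the full renewal machinery.
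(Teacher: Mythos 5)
Your first step -- conditioning on the first termination/branching event of the root to obtain a linear integral (renewal-type) equation for $n(t)$, and reading off the characteristic equation $\lambda=-\tnu+2\tl e^{-\lambda}$ -- is exactly how the paper begins. Where you diverge is in how the asymptotic is extracted: the paper applies Duhamel's principle to convert the integral equation into the delay differential equation $n'(t)=-\tnu\,n(t)+2\tl\,n(t-1)$ for $t\ge1$ and then cites a theorem on linear impulsive delay differential equations to get $n(t)e^{-\lambda_0 t}\to C(\phi)$, whereas you tilt the renewal equation by $e^{-\lambda_0 t}$ and invoke the key renewal theorem. Both are legitimate; your route is more self-contained probabilistically and avoids the DDE literature, while the paper's route gets the convergence statement in one line from the cited result. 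Your monotonicity argument for existence, uniqueness and positivity of $\lambda_0$ (via $F'(\lambda)=1+2\tl e^{-\lambda}>0$ and $F(0)=\tnu-2\tl<0$) is correct and is more explicit than what the paper records. Two remarks on your anticipated difficulties. First, the technical obstacle you flag is not actually present: the deterministic unit delay only \emph{shifts} the exponential density of the first branching time, so the tilted kernel $e^{-\lambda_0 s}\bigl(\tl e^{-(\tl+\tnu)s}+2\tl e^{-(\tl+\tnu)(s-1)}\bone_{\{s\ge1\}}\bigr)$ is absolutely continuous (it has a jump discontinuity at $s=1$ but no atom), is non-lattice, and is directly Riemann integrable by exponential decay, so the standard key renewal theorem applies without modification. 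Second, I would not rely on your proposed ``sandwich'' workaround: a comparison with a standard BBM having instantaneous ternary branching at rate $\tl$ gives growth $e^{(2\tl-\tnu)t}$, and since $\lambda_0=-\tnu+2\tl e^{-\lambda_0}<2\tl-\tnu$ strictly, no comparison process that ignores the unit delay can produce the exact exponent $\lambda_0$; recovering it forces you back to an equation of the same delayed form, so the renewal (or DDE) analysis cannot really be bypassed.
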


\begin{proof}
    By our construction, and conditioning on the first branching/termination event, there is
    \begin{align*}
        n(t)=e^{-t(\tl+\tnu)}&+\int_0^t\tl \,e^{-(\tl+\tnu)s}n(t-s)\,\d s\\
        &+2\int_0^t\tl \,e^{-(\tl+\tnu)s}\left(n(t-s-1)\bone_{\{s\leq t-1\}}+\bone_{\{t-1<s\leq t\}}\right)\d s.
    \end{align*}
Duhamel's principle then yields the delay differential equation
\begin{align}
    n'(t)=-\tnu \, n(t)+2\tl  \left(n(t-1)\bone_{\{t\geq 1\}}+\bone_{\{0\leq t<1\}}\right).\label{eq:dde}
\end{align}
Let us introduce the solution $\phi(t)=\E[\#\mathcal N_t]$ for $0\leq t\leq 1$, where $1\leq\phi(t)\leq C(\tl)$ for some $C(\tl)>0$. The solution to \eqref{eq:dde} then satisfies the linear autonomous impulsive delay differential equation
\begin{align}
    n'(t)=-\tnu \, n(t)+2\tl \, n(t-1),\ n|_{[0,1]}=\phi|_{[0,1]}.\label{eq:dde2}
\end{align}

Recall our definition of $\lambda_0$, which is the root of the characteristic equation associated with \eqref{eq:dde2}. That $\lambda_0>0$ follows from $2\tl>\tnu$. 
 Theorem 1 of \citep{yenicceriouglu2020asymptotic} then leads to $\lim_{t\to\infty}n(t)e^{-\lambda_0t}=C(\phi)$ for some constant $C(\phi)$ depending on $\phi$ that is uniformly bounded when $\tl\leq 1$.      In other words, $n(t)\asymp e^{\lambda_0t}$.
\end{proof}

Upon solving for $\lambda_0$, the $(\tl,\tnu)$-BBM can be approximated by a scaled BBM with binary branching and implied branching rate $\lambda_0$. The diffusivity of the BBM is now set to $s=\sqrt{\mathrm{MSID}(1/\tl)/3}$ in view of the central limit theorem. This explains the regime behind the \emph{BBM Theory} curve in Figure \ref{fig:bbm_c1_rho_cgmd_b}.



\end{document}